\newtheorem{theorem}{Theorem}
\newtheorem{assumption}{Assumption}
\newtheorem{corollary}{Corollary}
\newtheorem{definition}{Definition}
\newtheorem{lemma}{Lemma}
\newtheorem{remark}{Remark}
\newcommand{\cM}{\mathcal{M}} 
\newcommand{\xs}{x^*} 
\newcommand{\vx}{{\bf x}}
\newcommand{\vxb}{\bar{\bf x}}
\newcommand{\E}{\mathbb{E}}
\newcommand{\vxs}{{\bf x}^*}
\newcommand{\vps}{{\bf p}^*}
\newcommand{\cB}{{\mathcal{B}}}
\newcommand{\cP}{{\mathcal{P}}}
\newcommand{\cJ}{{\mathcal{J}}}
\newcommand{\cN}{{\mathcal{N}}}
\newcommand{\cF}{\mathcal{F}}
\newcommand{\vv}{{\bf v}}
\newcommand{\vr}{{\bf r}}
\newcommand{\vnm}{v_n(m)}
\newcommand{\rnm}{r_n(m)}
\newcommand{\pnt}{P_{n,t}}
\newcommand{\tpnt}{\tilde{P}_{n,t}}
\newcommand{\pntb}{\bar{P}_{n,t}}
\newcommand{\qnt}{\tilde{q}_{n,t}}
\newcommand{\vpnt}{{\bf p}_{n,t}}
\newcommand{\vtpnt}{\tilde{\bf p}_{n,t}}
\newcommand{\vp}{{\bf p}}
\newcommand{\vtp}{\tilde{\bf p}}
\newcommand{\vpb}{\bar{\bf p}}
\newcommand{\Smin}{\check{S}}
\newcommand{\Smax}{\hat{S}}
\newcommand{\Int}{\mathbb{I}_{n,t}}
\newcommand{\Intb}{\bar{\mathbb{I}}_{n,t}}
\newcommand{\I}{\mathbb{I}}
\newcommand{\tpint}{\tilde{\pi}_{n,t}}
\newcommand{\vtpint}{\tilde{\boldsymbol{\pi}}_{n,t}}
\newcommand{\an}{\alpha_n}
\newcommand{\vhp}{\hat{\bf p}}
\newcommand{\vhx}{\hat{\bf x}}
\newcommand{\vtx}{\tilde{\bf x}}
\newcommand{\fh}{\hat{f}}
\newcommand{\vhpnt}{\hat{\bf p}_{n,t}}
\newcommand{\hpnt}{\hat{P}_{n,t}}
\newcommand{\vtpntz}{\tilde{\bf p}_{n_0,t_0}}
\newcommand{\vbpntz}{\bar{\bf p}_{n_0,t_0}}
\newcommand{\vbpnt}{\bar{\bf p}_{n,t}}
\newcommand{\txnt}{\tilde{x}_{n,t}}
\newcommand{\hxnt}{\hat{x}_{n,t}}
\newcommand{\pp}{p_1}
\newcommand{\po}{p_0}
\newcommand{\nz}{n_0}
\newcommand{\tz}{t_0}
\newcommand{\gmax}{\hat{\gamma}}
\newcommand{\gmin}{\check{\gamma}}
\newcommand{\cBmax}{\hat{\cB}}
\newcommand{\cBmin}{\check{\cB}}
\newcommand{\vnt}{v_{n,t}}
\newcommand{\Cb}{\bar{c}}
\newcommand{\Co}{\overset{\circ}{c}}
\providecommand{\norm}[1]{\lVert#1\rVert}
\begin{document}
\title{Proactive Content Download and User Demand Shaping for Data Networks}

\author{
\IEEEauthorblockN{John Tadrous, Atilla Eryilmaz, and Hesham El Gamal}
\thanks{Authors are with the Department of Electrical and Computer Engineering at the Ohio State University, Columbus, USA. E-mail: \{tadrousj,eryilmaz,helgamal\}@ece.osu.edu. \newline This work is primarily supported by the QNRF grants NPRP 7-923-2-344 and NPRP 09-1168-2-455. Also, the work of A. Eryilmaz was in part supported by the NSF grant CAREER-CNS-0953515. }}
\maketitle

\begin{abstract}

In this work, we propose and study optimal proactive resource allocation and demand shaping for data networks. 
 Motivated by the recent findings on the predictability of human behavior patterns in data networks, and the emergence of highly capable handheld devices, our design aims  to smooth out the network traffic over time and minimize the data delivery costs.

Our framework utilizes proactive data services as well as smart content recommendation schemes for shaping the demand. Proactive data services take place during the off-peak hours based on a statistical prediction of a \emph{demand profile} for each user, whereas smart content recommendation assigns modified valuations to data items so as to render the users' demand less uncertain. Hence, our recommendation scheme aims to boost the performance of proactive services within the allowed flexibility of user requirements. We conduct theoretical performance analysis that quantifies the leveraged cost reduction through the proposed framework. We show that the \emph{cost reduction} scales at the same rate as the cost function scales with the number of users. Further, we prove that \emph{demand shaping} through smart recommendation strictly reduces the incurred cost even below that of proactive downloads without recommendation.
\end{abstract}

\begin{IEEEkeywords}
Resource allocation, wireless networks, convex optimization, predictable demand.
\end{IEEEkeywords}

\section{Introduction}
\label{sec:intro}
\IEEEPARstart{T} he vast expansion of highly capable smart wireless devices has powered a substantial transformation of the global data network into a more mobile, increasingly demanding, and socially more interconnected form. Such a wireless revolution has raised major concerns about an inevitable surge of wireless traffic load by throughput-hungry applications that existing resource allocation schemes may not stand. These applications include multimedia services which constitute more than $50\%$ of the total Internet load \cite{cisco1}. On the other hand, there is a growing body of evidence that the available spectrum, which defines an ultimate resource for wireless communications, is suffering an inherent underutilization problem as has been reported in the recent studies by FCC.

 Consequently, there is an urgent call for the development of more advanced and sophisticated techniques improving the wireless resource management and allocation. The notion of dynamic spectrum access (DSA) has been introduced as a remedy to the spectrum underutilization problem, and has been enabled through the cognitive radio technology \cite{Mitola}. The cognitive radio approach, however, is still facing significant technological hurdles \cite{Gridlock}, and will offer only a partial solution to the problem. This limitation is tied to the main reason behind the underutilization of the spectrum; namely \emph{the large disparity between the average and peak traffic demand in the network}.

The traffic demand at the peak hour is much higher than that at night. The cognitive radio approach assumes that the users will be able to utilize the spectrum at the off-peak times, but at those times one may
expect the cognitive radio traffic characteristics to be similar to that of the primary users.

 Recently, we have proposed the notion of \emph{proactive resource allocation} \cite{TEG12} as a remedy to the spectrum crisis. The technique aims at exploiting the \emph{predictable} human demand as well as the powerful processing capabilities and the large memory storage offered by the smart wireless devices in smoothing out the wireless network traffic over time by \emph{proactively serving  predictable peak-hour requests during the off-peak time}. Hence, the peak-to-average demand ratio is minimized and significant utilization for the available resources is provided.

There exists a substantial evidence, both for general human behavior \cite{EP06}-\cite{SQBB10} and specifically for wireless data users \cite{RRDTool}-\cite{CF6}, that supports the underlying premise of large-timescale (in the order of minutes to hours) user predictability which, in turn, motivates our proactive design framework. {In \cite{TEG12}, the potential gains of proactive resource allocation, under large-timescale optimization, have been characterized for unicast and multicast networks as well as networks with heterogeneous QoS requirements. 
In this work, however, user predictability has been assumed to be perfect, and network schedulers capable of serving the demand proactively at zero cost as long as the load falls within a predefined capacity. In this paper, we shed light on the cost incurred due to data delivery, including proactively served data. Further, we consider the uncertainty about the future demand and capture it through probabilistic demand profiles.}  

{There exist attempts to leverage the popularity and/or predictability of some media content in pre-caching purposes, and hence enhance the reliability of communication. We can see in \cite{BI05} that the authors apply a proactive scheme for emptying network queues that store MPEG videos in order to accommodate for the incoming traffic with the objective of minimizing the frame error rate. Yet, the developed approach lacks solid performance analysis, and its impact is limited to the small-timescale of milliseconds. The content pre-caching idea has also been proposed in \cite{GH04} with the intention of offering high QoS communication in some geographical areas. While the authors present their practical views of the design, there are no results provided to quantify the potential of the idea. More recently, in \cite{BWZZ12}, the popularity of video content has been harnessed to provide proactive caching solutions that essentially minimize data transmission power. However, the proposed model assumes multicast transmission, hence rules out unicast sources and the associated concerns about resource losses due to uncertainty. Further, the assumed model treats all demand as identically distributed over time, which therefore does not capture the peak-to-average-ratio problem addressed in this work.}

 
 In this paper, we develop a framework for optimal proactive resource allocation for wireless networks comprising a service provider and associated subscribers. The subscribers' demand assumes cyclostationary statistics as it repeats itself over finite time durations, whereby the service provider can track, learn and construct a {\bf demand profile} to each subscriber. These profiles are then used to determine proactive data downloads to the users in a way that minimizes the time average expected cost incurred by the service provider while providing reliable data delivery. Moreover, we investigate a further improvement to the cost performance under the potential of slightly modified demand profiles. We refer to this operation as {\bf demand shaping}. {While such an operation is primarily targeted to further reduce the data delivery costs, it also has a substantial promise for high QoS data delivery. In particular, proactive data download facilitates an enhanced consumption process without undesirable delays or buffering problems. Thus, it implicitly provides an incentive for users cooperation with demand shaping.} We develop a smart recommendation scheme  to perform demand shaping while maintaining the user satisfaction about the quality of offered data items. 
  The main contributions of this work are listed as follows.

$\bullet$ In Section \ref{sec:sys_mod}, we provide a description of the time-slotted system model, 
and layout the time average expected cost expression for a traditional \emph{non-proactive} network.

$\bullet$ In Section \ref{sec:prob_form}, we formulate the problem, and prove the existence of a steady state solution.

$\bullet$ In Section \ref{sec:pro_down}, we consider the proactive data download side of the problem. We adopt the cost reduction leveraged through proactive downloads as our metric of interest. An upper and lower bounds on the optimal cost reduction are established, and its asymptotic scaling laws with the number of users are characterized.

$\bullet$ In Section \ref{sec:Demand_Shaping}, we study the joint allocation of proactive data downloads and user demand profiles.  In Section \ref{sec:w_constraint},  we characterize the optimal solution to the problem under weak user satisfaction constraints.

$\bullet$ In Section \ref{sec:s_constraint}, we take the user satisfaction into account. We propose an iterative scheme where an almost-surely \emph{strictly} reduced cost below that of proactive downloads alone can be obtained.

$\bullet$ In Section \ref{sec:Recom}, we formulate and study a data-item recommendation problem that assigns new ratings to the recommended data items that 1) achieve the modified profiles, and 2) remain as close as possible, in the Euclidean- distance sense, to the actual ratings made by the corresponding users.


\section{System Model}
\label{sec:sys_mod}
We consider a network comprising $N$ users with variable demand and a service provider that responds to user's requests in a timely basis. The service provider has a total of $M$ different data items that each user can request in a random fashion.
Each data item $m$ is assumed to have a size of $S(m)>0$, and 
\begin{eqnarray}
\label{eq:Smin}
\Smin:=&\displaystyle{\min_{m\in\{1,\cdots,M\}}}\{S(m)\}>0,\\
\label{eq:Smax}
\Smax:=&\displaystyle{\max_{m\in\{1,\cdots,M\}}}\{S(m)\}<\infty.
\end{eqnarray}

The assumption on the fixed number of data items is justified by the observation that new content updates replace obsolete data. For example, several YouTube channels, CNN, Fox news, and social network updates gain the highest attention and popularity, while previous updates are substantially less interesting. 

In a time-slotted system, the content of each data item is consistently updated every time slot, where such a content could be a movie (as in YouTube and Netflix), a soundtrack (as in Pandora), a social network update (as in Facebook and Twitter), a news update (as in CNN and Fox News), etc. {Our assumption of fast content update rate is intended to yield the worst-case scenario results and to reveal the potential gains to be leveraged.} We consider the application-layer timescale in which the duration of a time slot is the time taken by a user to completely run the requested data item, which can be of the order of minutes or possibly hours\footnote{{Please note that a \emph{data item} is not restricted to contain only one video, as for example one can consider CNN as a data item with multiple new videos update it every time slot. A user therefore can consume all these videos in one slot, while another user consumes only one video with roughly the same duration from a YouTube channel which can be treated as another data item. }}. At each time slot, the service provider supplies the requested data items, which have been updated, {and partially served} over the previous time slot to the respective users.

\textbf{Data item valuations:} Define $\vnt(m)\in[0,1]$ as the valuation (rating) of data item $m$ as offered by the service provider to user $n$ at time $t$. The offered valuations depend on the user preferences and interests, and can be estimated through different techniques, such as collaborative filtering.

\textbf{Users' demand profiles: } We assume that the demand of each user can be tracked, learned, and predicted by the service provider over time.  The service provider constructs a \emph{demand profile} for every user $n$ and time slot $t$, denoted $\vpnt=(\pnt(m))_{m=1}^{M}$, where $\pnt(m)$ is the probability that user $n$ requests item $m$ in slot $t.$ { The probability that user $n$ requests data item $m$ at time slot $t$ is modeled as 
$\pnt(m):=\phi_{m,t}(\vv_{n,t})$, where $\phi_{m,t}:[0,1]^M\to[0,1]$ is a non-negative function that maps the ratings of user $n$ into a corresponding probability of requesting item $m$ specifically at time slot $t$ for any $m,n,t$. The function $\phi_{m,t}$ captures the relative differences between the ratings of the $M$ data items as seen by user $n$.}
 Then, we model the statistics of the predictable user demands as follows: \\
\begin{itemize}
\item The demand of user $n$ at slot $t$ is captured by a random variable $\Int(m)$ where
	\begin{equation*}
	\label{eq:I}
	\Int(m)=\begin{cases}1, & \text{with probability } \pnt(m),\\
	                     0, & \text{with probability } 1-\pnt(m).
	\end{cases}
	\end{equation*}
	\item For any two users $n,k$ such that $n\neq k$, $\Int(m)$ is independent of $\I_{k,t}(j)$ for all $m,j$.
	\item At slot $t\geq 0$, user $n$ requests at most one data item. Hence $\sum_{m=1}^{M}\Int(m)\leq 1$.
	\item The probability that user $n$ does not request any data item at slot $t$ is $\qnt:=1-\sum_{m=1}^{M}\pnt(m)$.
\end{itemize}

\begin{figure}
	\centering
		\includegraphics[width=0.4\textwidth]{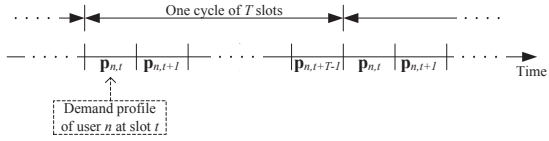}
	\caption{Cyclostationary demand profiles over time. Every $T$ time slots, the demand profile of user $n$ is repeated.}
	\label{fig:Cyclo}
\end{figure}

Further, the demand profile of each user follows a \emph{cyclostationary} pattern that repeats itself consistently in a period of $T$ time slots as shown in Fig. \ref{fig:Cyclo}. The $T$-slot period can be interpreted as a single day through which the activity of each user varies each hour, but occurs with the same statistics consistently each day. Thus, we can write $\vp_{n,t+l}=\vp_{n,t+kT+l}$ for any non-negative integer $k$, and $l=0,\cdots,T-1$. {The cyclostationarity assumption is motivated by the recent findings in human behavioral patterns \cite{EP06,FG08}, and the wireless user activity measurements provided in \cite{RRDTool}. Yet, the results and insights drawn under such an assumption can directly apply to the case when this assumption is relaxed.}

\textbf{Incurred cost: } To supply requested data items, the service provider incurs a certain cost due to the resources consumed at each time slot, which is supposed to depend on the total load created by the users' demand. Suppose that $L$ is the total network load at a given slot, then the cost incurred by the service provider is $C(L)$, where $C:\mathbb{R}_+\to\mathbb{R}_+$ is a smooth, strictly convex, and monotonically increasing cost function. We assume also that {full demand realizations of any slot are available to service provider at the end of such slot, thus proactive control decisions are made based on the statistical characteristics of demand\footnote{{While static allocation of proactive downloads may  be out performed by a dynamic policy, it will facilitate the development of simple policies that achieve optimal-order scaling, as shown in Section \ref{sec:pro_down}.}}.} 


We consider the time average expected cost incurred by a \emph{non-proactive} network, a network whose service provider does not exploit the predictability of the user demand, as
\begin{equation}
\label{eq:N_inf_horizon}
C^{\cN}(N):=\limsup_{\tau\to\infty}\frac{1}{\tau}\sum_{t=0}^{\tau-1}\E\left[C\left(L_{t}\right)\right],
\end{equation}
where 
$L_t:=\sum_{m=1}^{M}\sum_{n=1}^{N}S(m)\Int(m), \quad t\geq 0$
is the total load at time slot $t$ encountered by the non-proactive network, { and the expectation is over the demand profiles $\vpnt$, $\forall n,t$.} The average cost $C^{\cN}(N)$ represents a baseline to which we compare the relative cost reduction leveraged through efficient utilization of the available users' profiles.

\section{Problem Formulation}
\label{sec:prob_form}
\subsection{Proactive Data Download and Demand Shaping}
The proposed \emph{proactive} network framework utilizes the predictable and cyclostationary nature of the users' demand in balancing the total load over time and minimizing the time average expected cost.

To this end, our approach aims to produce proactive data downloads at each time slot depending on the demand statistics. It harnesses the prior knowledge about the user profile in the upcoming time slot, combines it with the statistics about the demand during the current slot, and proactively sends a portion of each potential data item to the respective users. We denote by $x_{n,t+1}(m)$ the portion of data item $m$ sent ahead to user $n$ at time slot $t$, $m=1,\cdots,M$, $n=1,\cdots,N$, $t=0,1,\cdots$. 
{We note that, the limitation of proactive download decisions to one-slot ahead is aligned with our assumption that content updates occur every time slot, hence small proactive download window guarantees freshness of served content.}

We also investigate the potential for a service provider to \emph{slightly} modify the demand profile of each user {so as to} enhance the \emph{certainty} about the future requests under user \emph{satisfaction} constraints. {Note that user satisfaction is a crucial parameter that service providers need to improve, and modified demand profiles have to obey.}

 The modification of demand profiles offers a further cost reduction gain since  more deterministic users render the proactive download process more efficient\footnote{Addressed in more detail in Section \ref{sec:Demand_Shaping}.}. 

The service provider may offer slightly different valuations from those recognized by users as follows. Suppose that $\vtpnt=(\tpnt(m))_{m=1}^{M}$ is the profile of user $n$ at time slot $t$ with $\qnt$ being the probability that user $n$ remains silent in slot $t$. Then, we model user $n$'s flexibility to change his own profile of slot $t$ from $\vtpnt$ to $\vpnt$ by a satisfaction region $\cF_{\vtpnt}$ which is a collection of demand profiles that user $n$ is satisfied to adopt at slot $t$. Further, each profile $\vpnt\in\cF_{\vtpnt}$ has to always satisfy $\qnt=1-\sum_{m=1}^{M}\pnt(m)$, since the modification of the user preferences should not affect his activity at that slot. {The dependence of the satisfaction region on $\vtpnt$ demonstrates the importance of that profile in the new allocation so as not to deviate much from it.}

 For simplicity of notation, let $\vv=\{\vnt\}_{n,t}$, $\vx=\{\vx_{n,t}\}_{n,t}$, $\vp=\{\vpnt\}_{n,t}$, and $\Cb(.)=\limsup_{\tau\to\infty}\frac{1}{\tau}\sum_{t=0}^{\tau-1}\E[C(.)]$.

Now, the proactive download and demand shaping problem for time average expected cost minimization is formulated as $C^{\cP}(N):=$
\begin{equation}
\label{eq:P_inf_horizon}
\begin{split}
 & \underset{\vv,\vx,\vp}{\min} \quad\Cb\left(L_t+\sum_{m=1}^{M}\sum_{n=1}^{N}(x_{n,t+1}(m)-x_{n,t}(m)\Int(m))\right)\\  
  & \text{subject to}\\
  & \quad 0\leq x_{n,t}(m) \leq S(m), \quad \forall m,n,t\geq 0,\\
  & \quad \vpnt\in\cF_{\vtpnt}, \quad \forall n,t\geq 0,\\
  & \quad \vpnt=\vp_{n,kT+t}, \quad \forall k,n,t\geq 0\\
  & \quad \pnt(m)=\phi_{m,t}(\vv_{n,t}),\quad \forall m,n,t\geq 0, \\
  & \quad \vv_{n,t}\in[0,1]^M,\quad \forall n=1,\cdots,N,
\end{split}
\end{equation} 
where the optimization is jointly done over the valuations, proactive downloads and users' profiles, {with respect to which the expectation is defined}. 
We now investigate the existence of a \emph{one-cycle} steady-state solution.

\subsection{One-cycle Steady-state Solution}
{Since the demand profiles of the users exhibit a cyclostationary nature, it is expected that the optimal proactive downloads will also be periodic over time as they only depend on the demand dynamics. On the other hand, the valuations required to attain modified profiles are essentially periodic based on the mapping $\phi_{m,t}$. This observation}
suggest the existence of a steady-state solution to \eqref{eq:P_inf_horizon} {which is proved in Theorem \ref{th:steady_state}}. 

\begin{lemma}
\label{lem:avg_conv}
Let $\{y_n\}_{n\geq 0}$ be a {periodic} sequence in $\mathbb{R}_+$, {with a finite period $T$}, then {$\lim_{n\to\infty}\frac{1}{n}\sum_{i=0}^{n-1}y_i$ $=$ $\frac{1}{T}\sum_{i=0}^{T-1}y_i$}.
\end{lemma}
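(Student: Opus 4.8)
The plan is to reduce the running average to a finite computation over one period by invoking the division algorithm. First I would fix $n$ and write $n = qT + r$ with $q := \lfloor n/T\rfloor$ and $0 \le r \le T-1$. Using the periodicity relation $y_{i+T} = y_i$, the first $n$ terms group into $q$ complete periods plus a partial one, so that
\begin{equation*}
\sum_{i=0}^{n-1} y_i \;=\; q\sum_{k=0}^{T-1} y_k \;+\; \sum_{k=0}^{r-1} y_k ,
\end{equation*}
and therefore $\frac{1}{n}\sum_{i=0}^{n-1} y_i = \frac{q}{n}\sum_{k=0}^{T-1} y_k + \frac{1}{n}\sum_{k=0}^{r-1} y_k$.

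The second step is to let $n\to\infty$ in this identity. From the bracketing $qT \le n < (q+1)T$ one obtains $\frac{1}{T} - \frac{1}{n} < \frac{q}{n} \le \frac{1}{T}$, hence $\frac{q}{n}\to\frac{1}{T}$. For the leftover term, finiteness of the period $T$ together with the fact that each $y_k\in\mathbb{R}_+$ is finite makes $B := \sum_{k=0}^{T-1} y_k$ a finite constant, and $0\le \sum_{k=0}^{r-1} y_k \le B$ gives $\frac{1}{n}\sum_{k=0}^{r-1} y_k \to 0$. Combining the two limits yields $\lim_{n\to\infty}\frac{1}{n}\sum_{i=0}^{n-1} y_i = \frac{1}{T}\sum_{k=0}^{T-1} y_k$, as claimed.

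I do not expect a genuine obstacle here; the only points needing a line of justification are the convergence $q/n\to 1/T$ from the division bracketing, and the uniform bound on the incomplete-period tail (which is exactly where finiteness of $T$ and of the $y_k$ enters). As a side remark, the same identity shows that the sequence of averages actually converges, so when this lemma is later applied to $\Cb(\cdot)$ in \eqref{eq:P_inf_horizon} the $\limsup$ there may be replaced by an honest limit.
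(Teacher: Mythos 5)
Your proof is correct and is exactly the standard argument one would expect here: the paper itself omits the proof of this lemma (deferring to its technical report), and your decomposition $n = qT + r$ with the bracketing $qT \le n < (q+1)T$ and the vanishing incomplete-period tail is the canonical way to establish it. Your closing remark that the averages genuinely converge, so the $\limsup$ in $\Cb(\cdot)$ can be replaced by a limit, is precisely the point the paper relies on when passing to the one-cycle formulation in Theorem \ref{th:steady_state}.
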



\begin{theorem}
\label{th:steady_state}
For any $N\in\mathbb{N}$, {optimization problem \eqref{eq:P_inf_horizon} is equivalent to} $C^{\cP}(N)=$
\begin{equation}
\label{eq:P_f_horizon}
\begin{aligned}
&  \underset{\vv,\vx,\vp}{\min}\quad\Co\left(L_t+\sum_{m=1}^{M}\sum_{n=1}^{N}x_{n,t+1}(m)-x_{n,t}(m)\Int(m)\right)\\  
\end{aligned}
\end{equation}
\begin{equation*}
\begin{aligned}
& \text{subject to}\\
& \quad 0\leq x_{n,t}(m) \leq S(m), \quad \forall m,n,t=0,\cdots,T-1,\\
& \quad \vpnt\in\cF_{\vtpnt}, \quad \forall n,t=0,\cdots,T-1,\\
& \quad \vx_{n,0}=\vx_{n,T},\\
& \quad \pnt(m)=\phi_{m,t}(\vv_{n,t}),\quad \forall m,n,t=1,\cdots,T-1, \\
& \quad \vv_{n,t}\in[0,1]^M,\quad \forall n=1,\cdots,N,
& \end{aligned}
\end{equation*}
where $\Co(.)=\frac{1}{T}\sum_{t=0}^{T-1}\E[C(.)]$.
\end{theorem}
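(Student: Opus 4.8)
The plan is to prove the asserted equivalence by showing that problems \eqref{eq:P_inf_horizon} and \eqref{eq:P_f_horizon} have the same optimal value, via two inequalities. Abbreviate $D_t:=\sum_{m=1}^{M}\sum_{n=1}^{N}\big(x_{n,t+1}(m)-x_{n,t}(m)\Int(m)\big)$ and record the identity $L_t+D_t=\sum_{m,n}\big[(S(m)-x_{n,t}(m))\Int(m)+x_{n,t+1}(m)\big]$, which shows $0\le L_t+D_t\le 2MN\Smax$ for every demand realization whenever $0\le x_{n,t}(m)\le S(m)$; since $C$ is increasing this gives the uniform bound $\E[C(L_t+D_t)]\le C(2MN\Smax)=:C_{\max}<\infty$. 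For the easy inequality, take any feasible point of \eqref{eq:P_f_horizon} and extend $\vx$, $\vp$, $\vv$ periodically with period $T$; the cyclic constraint $\vx_{n,0}=\vx_{n,T}$ makes the extension of $\vx$ well defined, and the periodicity built into the constraints of \eqref{eq:P_f_horizon} together with the cyclostationarity of the model data ($\vtpnt$, $\phi_{m,t}$) makes the extended point feasible for \eqref{eq:P_inf_horizon}. Since the law of the demand at slot $t$ depends only on $\vpnt$ and both $\vx$, $\vp$ are $T$-periodic, the sequence $t\mapsto\E[C(L_t+D_t)]$ is nonnegative and $T$-periodic, so Lemma \ref{lem:avg_conv} collapses the $\limsup$ in \eqref{eq:P_inf_horizon} to $\tfrac1T\sum_{t=0}^{T-1}\E[C(L_t+D_t)]$, i.e.\ the infinite-horizon objective of the extension equals the finite-horizon objective $\Co(\cdot)$ of the original point. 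Hence $C^{\cP}(N)$ is at most the objective of every feasible point of \eqref{eq:P_f_horizon}, and therefore at most its optimal value.

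For the converse, fix any feasible $(\vv,\vx,\vp)$ of \eqref{eq:P_inf_horizon}. The structural fact I will use is that, for each fixed demand realization, $L_t+D_t$ is an \emph{affine} function of the download variables, so $C(L_t+D_t)$ is convex in them; hence the map
\[
g(a,b;q):=\E\Big[C\Big(\sum_{m,n}\big[(S(m)-a_n(m))\,\xi_n(m)+b_n(m)\big]\Big)\Big],
\]
with $a=(a_n(m))_{n,m}$, $b=(b_n(m))_{n,m}$, and $(\xi_n(m))_m$ the demand indicator of user $n$ under profile $q=(q_n(m))_{n,m}$ (so $\xi_n(m)\in\{0,1\}$, $\sum_m\xi_n(m)\le1$, $\Pr(\xi_n(m)=1)=q_n(m)$, independent across $n$), is jointly convex in $(a,b)$ and, being a finite sum over the finitely many realizations of $C$ composed with affine maps, continuous; moreover $\E[C(L_t+D_t)]$ equals $g$ applied to $(x_{n,t}(m))_{n,m}$, $(x_{n,t+1}(m))_{n,m}$, $(p_{n,t}(m))_{n,m}$, with the \emph{same} $g$ for every $t$. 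Choose $\tau_j\to\infty$ along which the $\limsup$ defining the objective at $(\vv,\vx,\vp)$ is attained; since $\E[C(L_t+D_t)]\le C_{\max}$, replacing $\tau_j$ by $K_jT:=\lfloor\tau_j/T\rfloor\,T$ perturbs the running average by $O(T/\tau_j)\to0$, so the average over the first $K_jT$ slots still tends to that objective. The downloads lie in the compact box $\prod_m[0,S(m)]$, so along a further subsequence the running averages $\tfrac1{K_j}\sum_{k=0}^{K_j-1}x_{n,kT+s}(m)$ converge, for every $n,m$ and every residue $s\in\{0,\dots,T-1\}$, to some $\bar x_{n,s}(m)\in[0,S(m)]$; declaring $\bar x_{n,T}(m):=\bar x_{n,0}(m)$ is consistent because $\tfrac1{K_j}\sum_{k=0}^{K_j-1}x_{n,(k+1)T}(m)$ and $\tfrac1{K_j}\sum_{k=0}^{K_j-1}x_{n,kT}(m)$ differ by $\tfrac1{K_j}(x_{n,K_jT}(m)-x_{n,0}(m))\to0$, and this is precisely the cyclic boundary condition of \eqref{eq:P_f_horizon}.

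Now group the first $K_jT$ slots by residue modulo $T$. By $T$-periodicity of the profiles, all terms in residue class $s$ carry the common profile argument $(p_{n,s}(m))_{n,m}$ in $g$, so Jensen's inequality applied to the convex $g$ shows that the running average over $K_jT$ slots is at least $\tfrac1T\sum_{s=0}^{T-1}g$ evaluated at the per-residue-averaged downloads and the profile of slot $s$. Letting $j\to\infty$ and using continuity of $g$, the left-hand side tends to the objective of $(\vv,\vx,\vp)$ while the right-hand side tends to the average over $s$ of $g$ at $(\bar x_{n,s}(m))_{n,m}$, $(\bar x_{n,s+1}(m))_{n,m}$, $(p_{n,s}(m))_{n,m}$, which is exactly the finite-horizon objective $\Co(\cdot)$ of the point $(\vv,\bar\vx,\vp)$ (using $\bar x_{n,T}(m)=\bar x_{n,0}(m)$ in the $s=T-1$ term). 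That point is feasible for \eqref{eq:P_f_horizon}: the box and cyclic constraints on $\bar\vx$ hold by construction and $\vv$, $\vp$ are untouched. Hence the optimal value of \eqref{eq:P_f_horizon} is at most the objective of $(\vv,\vx,\vp)$; taking the infimum over all feasible $(\vv,\vx,\vp)$ of \eqref{eq:P_inf_horizon} yields that the optimal value of \eqref{eq:P_f_horizon} is $\le C^{\cP}(N)$, and with the easy inequality this proves the equivalence. (If, in addition, the satisfaction regions $\cF_{\vtpnt}$ are closed and each $\phi_{m,t}$ is continuous, \eqref{eq:P_f_horizon} attains its optimum and its periodic extension is an optimal steady-state solution of \eqref{eq:P_inf_horizon}.)

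The step I expect to be the main obstacle is the bookkeeping in the converse: arranging on a \emph{single} subsequence the attainment of the $\limsup$, the reduction to multiples of $T$, and the simultaneous convergence of all per-residue running averages of the $x_{n,t}(m)$, and checking that the wrap-around average produces exactly $\bar x_{n,0}(m)$ so that the cyclic constraint is met. Once $g$ and its joint convexity are identified (affine load, convex $C$, then expectation), the Jensen step and the passage to the limit are routine; the only other point worth stating explicitly is the uniform bound $\E[C(L_t+D_t)]\le C_{\max}$, which is what legitimizes discarding the $O(T)$ leftover slots and guarantees every average in sight is finite.
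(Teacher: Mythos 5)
Your proof is correct: the periodic-extension direction via Lemma \ref{lem:avg_conv} and the converse via joint convexity of the per-slot expected cost in the consecutive download vectors (Jensen over residue classes, with the wrap-around term supplying the cyclic constraint $\vx_{n,0}=\vx_{n,T}$) together establish the claimed equivalence, and the uniform bound $\E[C(L_t+D_t)]\le C(2MN\Smax)$ legitimately disposes of the leftover slots. The paper omits its proof (deferring to Appendix A of the technical report), but the placement of Lemma \ref{lem:avg_conv} immediately before the theorem indicates this is essentially the intended argument, so your proposal matches the paper's approach.
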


\begin{proof}
Proof is omitted for brevity. We refer interested readers to Appendix A in \cite{TR}.
\end{proof}

\begin{corollary}
For any $N\in\mathbb{N}$, the time average expected cost incurred by the non-proactive network is given by
\begin{equation}
\label{eq:N_f_horizon}
C^{\cN}(N)=\Co(L_t).
\end{equation}
\end{corollary}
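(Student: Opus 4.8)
The plan is to reduce the claim to Lemma~\ref{lem:avg_conv} by showing that the sequence $y_t := \E[C(L_t)]$ is a nonnegative periodic sequence of period $T$. First I would observe that, for each fixed $t$, the total load $L_t=\sum_{m=1}^{M}\sum_{n=1}^{N}S(m)\Int(m)$ is a deterministic function of the Bernoulli variables $\{\Int(m)\}_{m,n}$, whose joint law is completely determined by the demand profiles $\{\vpnt\}_{n=1}^{N}$: each user $n$ contributes an independent categorical draw with $\mathbb{P}(\Int(m)=1)=\pnt(m)$ and $\mathbb{P}(\text{silent})=\qnt$, and distinct users are independent by assumption. Hence the distribution of $L_t$, and therefore of $C(L_t)$ since $C$ is a fixed continuous function, is a measurable function of the profile vector $(\vpnt)_{n=1}^{N}$ alone.

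Next I would invoke the cyclostationarity assumption $\vp_{n,t}=\vp_{n,kT+t}$ for all $k,n$, which yields $(\vp_{n,t+T})_{n=1}^{N}=(\vpnt)_{n=1}^{N}$, so $L_{t+T}$ and $L_t$ are identically distributed and consequently $y_{t+T}=\E[C(L_{t+T})]=\E[C(L_t)]=y_t$. Moreover $y_t\geq 0$ since $C:\mathbb{R}_+\to\mathbb{R}_+$ and $L_t\geq 0$ almost surely, and $y_t<\infty$ since $L_t\leq N\Smax$ deterministically and $C$ is continuous; thus $\{y_t\}_{t\geq 0}$ is a well-defined periodic sequence in $\mathbb{R}_+$ with period $T$.

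Finally, applying Lemma~\ref{lem:avg_conv} to $\{y_t\}$ shows that $\lim_{\tau\to\infty}\frac{1}{\tau}\sum_{t=0}^{\tau-1}y_t$ exists and equals $\frac{1}{T}\sum_{t=0}^{T-1}y_t$; in particular the $\limsup$ in the definition of $C^{\cN}(N)$ coincides with this limit, so $C^{\cN}(N)=\frac{1}{T}\sum_{t=0}^{T-1}\E[C(L_t)]=\Co(L_t)$. This is exactly the specialization of the argument behind Theorem~\ref{th:steady_state} to the case of no proactive downloads and no profile modification.

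The only point requiring a little care — and it is not really an obstacle — is making precise that periodic demand profiles force $L_t$ to have a periodic law; this is immediate once the law of $L_t$ is written as the pushforward, under the load map, of the product over $n$ of the categorical distributions indexed by $\vpnt$. Given that, the corollary is a one-line consequence of the averaging lemma, so I expect the write-up to be very short.
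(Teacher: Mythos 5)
Your argument is correct and is essentially the paper's own route: the corollary is just the specialization of Theorem~\ref{th:steady_state} to the case of no proactive downloads, and both rest on exactly the reduction you give, namely that cyclostationarity of the profiles makes $t\mapsto\E[C(L_t)]$ a nonnegative $T$-periodic sequence to which Lemma~\ref{lem:avg_conv} applies, turning the $\limsup$ into the one-cycle average. Nothing is missing.
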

As such, only one cycle of the time average expected cost is tantamount to the infinite-horizon time average cost. {That is, over a period of $T$ slots, the service provider can assign optimal quantities (i.e., proactive downloads, valuations, and demand profiles) which can be applied to all time slots and optimize the infinite-horizon average cost.} 
It turns out, however, that the general formulation in \eqref{eq:P_f_horizon} is not convex in $(\vx,\vp)$ because the objective function involves a collection of products of the components of both $\vx$ and $\vp$, where $\vx=\{\vx_{n,t}\}_{n,t}$, $\vp=\{\vp_{n,t}\}_{n,t}$. In order to tackle the difficulty in obtaining a global optimal solution for the problem, we divide it into three main steps. We first address the performance of proactive downloads alone without demand shaping which characterize the the best proactive data allocation in response to a given demand profile. Then, we consider the determination of the best joint demand profile and proactive download that always improves beyond proactive downloads alone. Finally, we investigate data item valuation approach that yields the target demand profile.

\section{Proactive Data Download}
\label{sec:pro_down}
In this section, we quantify the performance of proactive data downloads only. {In particular,} we assume that the system is operating at a given demand profile $\vtp$ while the service provider is determining the optimal proactive downloads $\vxs$. {We show that the cost reduction leveraged through efficient proactive downloads grows unboundedly with the number of users.} {The problem is now convex in $\vx$, and by Theorem \ref{th:steady_state}, this case is again equivalent to optimizing the following:} $ C^{\cP}(N,\vtp):= $
\begin{equation}
\label{eq:CP_avg}
\begin{aligned}
& \underset{\vx}{\min}  \quad \Co\left(L_t+\sum_{m=1}^{M}\sum_{n=1}^{N}x_{n,t+1}(m)-x_{n,t}(m)\Int(m)\right)\\  
&\text{subject to}
 \quad 0\leq x_{n,t}(m) \leq S(m), \quad \forall m,n,t,\\
\end{aligned}  
\end{equation} 
and 
$C^{\cN}(N,\vtp):=\frac{1}{T}\sum_{t=0}^{T-1}\E\left[C(L_t)\right]$.
In this section, we focus on the \emph{cost reduction} leveraged through efficient proactive downloads solely. We define the cost reduction when there are $N$ users in the system as 
\begin{equation*}
\Delta C(N):=C^{\cN}(N,\vtp)-C^{\cP}(N,\vtp),
\end{equation*}
and consider its asymptotic performance when the number of users grows to infinity. More specifically, 
\begin{equation*}
\limsup_{N\to\infty}\frac{\Delta C(N)}{h(N) C'(\gamma\cdot N)}, \quad \text{and } \liminf_{N\to\infty}\frac{\Delta C(N)}{h(N) C'(\gamma\cdot N)},
\end{equation*}
where $C'$ is the first derivative of the cost function $C$, $\gamma$ is some positive constant, and $h(N), N\in\mathbb{N}$, is a  positive non-decreasing function in $N$.

By the convexity of the optimization problem \eqref{eq:CP_avg}, and the compactness of the feasible set, there exists an optimal solution $\vxs$ with $\xs_{n,t+1}(m)$ being the {\bf optimal} proactive download of data item $m$ made to user $n$ in slot $t$.
\begin{definition}[Active users]
\label{def:active}
For each data item $m$ and time slot $t$, we define a set $\cB_t(m)$ of active users as 
\begin{multline*}
\cB_t(m):=\left\{n:\E\left[\Int(m)C'(L_t)-C'(L_{t-1})\right]>0\right\}, \\
t=0,\cdots, T-1, \quad m=1,\cdots, M,
\end{multline*}
with $B_{t}(m):=|\cB_{t}(m)|$ is the cardinality of set $\cB_{t}(m)$.
\end{definition}
{The active users of slot $t$ w.r.t. item $m$ are those users who, in the expected sense, create {higher marginal cost} in slot $t$ by requesting item $m$ than the {marginal cost} of the previous slot. Thus, they have a high potential to improve the cost reduction through a proactive service of their demand.}  

 Despite the convexity of \eqref{eq:CP_avg}, a {\emph{closed form}} expression for the optimal value is not available for the general cost function defined above. As such, we study the asymptotic performance of $\Delta C(N)$ through upper and lower bounds that exhibit the same scaling order with $N$.

\subsection{Upper Bound}
We use the set of active users $\cB_t(m)$ to characterize an upper bound on $\Delta C(N)$ as follows.
\begin{lemma}
\label{lem:CR_UB}
Let $N\in\mathbb{N}$. For $\cB_{t}(m)$ defined above,
\begin{equation}
\label{eq:CR_upper}
\begin{split}
\Delta C(N)\leq &\\
 \frac{1}{T}\sum_{t=0}^{T-1}\sum_{m=1}^{M}S(m)\sum_{n\in\cB_{t}(m)}\E\left[\Int(m)C'(L_t)-C'(L_{t-1})\right].
\end{split}
\end{equation}
\end{lemma}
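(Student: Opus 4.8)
The plan is to exhibit a specific feasible proactive-download vector $\vx$ whose induced cost is at most the right-hand side above the non-proactive cost $C^{\cN}(N,\vtp)$; since $C^{\cP}(N,\vtp)$ is the minimum over all feasible $\vx$, this yields the claimed upper bound on $\Delta C(N) = C^{\cN}(N,\vtp) - C^{\cP}(N,\vtp)$. The natural candidate is to serve, for each slot $t$ and item $m$, the \emph{entire} item (i.e., set $x_{n,t}(m) = S(m)$) to exactly the active users $n \in \cB_{t}(m)$, and to proactively download nothing otherwise. This is clearly feasible: it satisfies $0 \le x_{n,t}(m) \le S(m)$, and by periodicity of $\cB_t(m)$ (the profile $\vtp$ is cyclostationary) we can take $\vx_{n,0} = \vx_{n,T}$.

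The core step is then a first-order (mean-value / convexity) estimate of the difference in the objective. Write the proactive load at slot $t$ as $L_t + D_t$ where $D_t := \sum_{m,n}\big(x_{n,t+1}(m) - x_{n,t}(m)\Int(m)\big)$, so $\Delta C(N) \le \frac{1}{T}\sum_{t=0}^{T-1}\E\big[C(L_t) - C(L_t + D_t)\big]$. Since $C$ is convex, $C(L_t) - C(L_t + D_t) \le -D_t\, C'(L_t + D_t)$ when $D_t \ge 0$ and $\le -D_t\, C'(L_t)$ otherwise; more cleanly, convexity gives $C(L_t+D_t) \ge C(L_t) + D_t C'(L_t)$, hence $C(L_t) - C(L_t+D_t) \le -D_t C'(L_t)$. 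Taking expectations, $\Delta C(N) \le -\frac{1}{T}\sum_t \E[D_t C'(L_t)]$. Now expand $D_t$ using the chosen $\vx$: the term $x_{n,t+1}(m) = S(m)\,\mathbf{1}\{n\in\cB_{t+1}(m)\}$ contributes $-\E[S(m)\,\mathbf{1}\{n\in\cB_{t+1}(m)\}\,C'(L_t)]$, while $-x_{n,t}(m)\Int(m) = -S(m)\,\mathbf{1}\{n\in\cB_{t}(m)\}\Int(m)$ contributes $+\E[S(m)\,\mathbf{1}\{n\in\cB_{t}(m)\}\Int(m)C'(L_t)]$. Re-indexing the first sum over $t$ (using periodicity, $\sum_t C'(L_{t-1})$-type shifts are legitimate) and grouping by the indicator $\mathbf{1}\{n \in \cB_t(m)\}$, the bracket becomes exactly $S(m)\sum_{n\in\cB_t(m)}\E[\Int(m)C'(L_t) - C'(L_{t-1})]$, which is the right-hand side of \eqref{eq:CR_upper}.

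I anticipate two technical wrinkles. First, the index shift: the proactive download $x_{n,t+1}(m)$ decided at slot $t$ appears in slot $t$'s load with weight $C'(L_t)$, whereas when this same quantity is "consumed" it appears as $x_{n,t}(m)\Int(m)$ in slot $t$'s load — so after the shift $t+1 \mapsto t$ one must be careful that the download indicator is evaluated at the \emph{correct} slot and that the telescoping over the cyclic period $\{0,\dots,T-1\}$ closes up (this is where $\vx_{n,0}=\vx_{n,T}$ and Lemma \ref{lem:avg_conv} are used). Second, one should check that $\Int(m)$ and $C'(L_t)$ are not independent — indeed $L_t$ contains $\Int(m)$ — so the expectation $\E[\Int(m)C'(L_t)]$ must be kept as a joint expectation and not factored; fortunately the statement of the lemma already keeps it in joint form, so no factoring is needed. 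The main obstacle is thus purely bookkeeping: correctly matching the $C'(L_{t-1})$ term produced by the re-indexed download contribution against the definition of $\cB_t(m)$, and confirming that restricting the proactive service to $\cB_t(m)$ (rather than all users) can only help, i.e., that for $n\notin\cB_t(m)$ the corresponding term $\E[\Int(m)C'(L_t)-C'(L_{t-1})] \le 0$ so dropping it preserves the inequality direction.
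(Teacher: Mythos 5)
Your opening move goes in the wrong direction, and this is a genuine logical gap rather than a bookkeeping issue. Exhibiting a specific feasible $\vx$ and using the fact that $C^{\cP}(N,\vtp)$ is a minimum gives $C^{\cP}(N,\vtp)\leq \Cb(L_t+D_t)$ for your chosen $\vx$, hence $\Delta C(N)=C^{\cN}(N,\vtp)-C^{\cP}(N,\vtp)\geq \frac{1}{T}\sum_t\E[C(L_t)-C(L_t+D_t)]$ --- a \emph{lower} bound on the cost reduction, not an upper bound. (That construction is exactly how the paper proves the lower bound of Lemma \ref{lem:CR_LB} via Policy A.) The displayed inequality you write, $\Delta C(N)\leq \frac{1}{T}\sum_t\E[C(L_t)-C(L_t+D_t)]$, is false for a hand-picked suboptimal $\vx$; to prove Lemma \ref{lem:CR_UB} you must bound the cost reduction achieved by \emph{every} feasible policy, in particular by the (unknown) optimizer $\vxs$.

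The repair keeps most of your algebra but changes where it is applied. Write $\Delta C(N)=\frac{1}{T}\sum_t\E[C(L_t)-C(L_t+D_t^*)]$ with $D_t^*=\sum_{m,n}\bigl(x^*_{n,t+1}(m)-x^*_{n,t}(m)\Int(m)\bigr)$, and use the gradient inequality (or the MVT for random variables, as the paper does) to get $C(L_t)-C(L_t+D_t^*)\leq -D_t^*\,C'(L_t)$. Since the downloads are deterministic, re-indexing the cyclic sum exactly as you describe yields $T\,\Delta C(N)\leq \sum_{t,m,n}x^*_{n,t}(m)\,\E\left[\Int(m)C'(L_t)-C'(L_{t-1})\right]$. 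This is affine in the unknown $x^*_{n,t}(m)$, so maximizing over the box $0\leq x^*_{n,t}(m)\leq S(m)$ sets $x^*_{n,t}(m)=S(m)$ precisely when $\E[\Int(m)C'(L_t)-C'(L_{t-1})]>0$, i.e.\ when $n\in\cB_t(m)$, and $0$ otherwise --- giving the stated right-hand side. Your final remark about dropping the $n\notin\cB_t(m)$ terms is exactly this maximization step, but it only has the correct inequality direction once the argument is anchored at $\vxs$ rather than at a constructed feasible point.
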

 \begin{proof}
Follows by the mean value theorem (MVT) for random variables. Full proof is provided in Appendix B of \cite{TR}.
\end{proof}

\subsection{Lower Bound}
In order to establish a lower bound on the cost reduction, we introduce the following preliminaries.
For every time slot $t\in\{0, \cdots,T-1\}$, suppose that $\cB_{t}(m)$ is non-empty for some data item $m$. We define the quantity $\hat{x}_t$ as
\begin{multline}
\label{eq:x_hat}
\hat{x}_t:=\arg\min_{0\leq x_t\leq \Smin} \E\Biggl[C\Biggl(L_{t-1}+\sum_{m=1}^{M}\sum_{n\in\cB_{t}(m)}x_t\Biggr)\\
+C\Biggl(L_t-\sum_{m=1}^{M}\sum_{n\in\cB_{t}(m)}x_t\Int(m)\Biggr)\Biggr].
\end{multline}
{In other words, $\hat{x}_t$ is the optimal proactive download to all active users at time $t$, if proactive downloads are to be equally and exclusively assigned to the active users, while omitting the contribution of the proactive downloads from the previous slot.} 

The following result about $\hat{x}_t$ holds.

\begin{lemma}
\label{lem:+x_hat}
If $\cB_t(m)$ is non-empty for some time slot $t$ and data item $m$, then $\hat{x}_t>0$.
\end{lemma}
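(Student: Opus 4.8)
## Proof Proposal

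The plan is to show that the objective function in \eqref{eq:x_hat}, viewed as a function of the scalar $x_t \in [0, \Smin]$, has a strictly negative derivative at $x_t = 0$; since this function is continuous on a compact interval, its minimizer $\hat{x}_t$ must then lie strictly to the right of $0$.

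First I would write $g(x_t)$ for the objective inside the $\arg\min$ in \eqref{eq:x_hat}, i.e.
\begin{equation*}
g(x_t) := \E\Bigl[C\bigl(L_{t-1} + A\, x_t\bigr) + C\bigl(L_t - x_t\textstyle\sum_{m=1}^{M}\sum_{n\in\cB_t(m)}\Int(m)\bigr)\Bigr],
\end{equation*}
where $A := \sum_{m=1}^{M} B_t(m)$ is the (deterministic) total number of active-user slots. Since $C$ is smooth and the sums are finite, I can differentiate under the expectation. Evaluating the derivative at $x_t = 0$, both cost arguments collapse to $L_{t-1}$ and $L_t$ respectively, so
\begin{equation*}
g'(0) = A\,\E\bigl[C'(L_{t-1})\bigr] - \sum_{m=1}^{M}\sum_{n\in\cB_t(m)} \E\bigl[\Int(m)\,C'(L_t)\bigr].
\end{equation*}
Regrouping the double sum over the $A$ terms, this is exactly $-\sum_{m=1}^{M}\sum_{n\in\cB_t(m)} \E\bigl[\Int(m)C'(L_t) - C'(L_{t-1})\bigr]$. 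By the definition of the active-user set $\cB_t(m)$ (Definition \ref{def:active}), each summand $\E[\Int(m)C'(L_t) - C'(L_{t-1})]$ is strictly positive, and by hypothesis at least one set $\cB_t(m)$ is non-empty, so the whole sum is strictly positive and hence $g'(0) < 0$.

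Finally, I would invoke continuity of $g$ on $[0,\Smin]$ (again from smoothness of $C$) together with $g'(0) < 0$: there exists $\delta > 0$ with $g(x_t) < g(0)$ for all $x_t \in (0,\delta)$, so the minimum over $[0,\Smin]$ is not attained at $0$, i.e. $\hat{x}_t > 0$. One small care point is that the problem is convex in $x_t$ — $C$ convex, and the arguments $L_{t-1} + Ax_t$ and $L_t - x_t\sum\sum\Int(m)$ are affine in $x_t$ — so $g$ is convex and the minimizer is unique and well-defined, which makes the "$\hat{x}_t > 0$" statement unambiguous. The only mild subtlety, and the step I would be most careful about, is justifying the interchange of derivative and expectation: here it is routine because the expectation is a finite sum over the finitely many demand realizations (each $\Int(m) \in \{0,1\}$ with $\sum_m \Int(m) \le 1$), so $g$ is literally a finite linear combination of smooth functions of $x_t$ and no dominated-convergence machinery is needed.
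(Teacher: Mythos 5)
Your proof is correct. The paper gives no in-source proof of Lemma \ref{lem:+x_hat} (it is deferred to the technical report \cite{TR}), so there is nothing here to compare against line by line; but your argument---computing the one-sided derivative of the objective of \eqref{eq:x_hat} at $x_t=0$ and showing it equals $-\sum_{m=1}^{M}\sum_{n\in\cB_t(m)}\E\left[\Int(m)C'(L_t)-C'(L_{t-1})\right]$, which is strictly negative precisely because each summand is positive by Definition \ref{def:active} and at least one $\cB_t(m)$ is non-empty---is the natural first-order argument and exactly the purpose for which the active-user set is defined. Your side points are also sound: the expectation is a finite sum over demand realizations so differentiation under $\E$ is immediate, and strict convexity of $C$ composed with the affine (nonconstant, since $\sum_m B_t(m)\geq 1$) map $x_t\mapsto L_{t-1}+\sum_m B_t(m)x_t$ makes the minimizer over $[0,\Smin]$ unique, so ``$\hat{x}_t>0$'' is unambiguous.
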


\begin{definition}[Policy A]
\label{def:Policy_A}
A proactive download allocation strategy named Policy A produces a proactive download vector $\tilde{\vx}$ satisfying the constraints of \eqref{eq:CP_avg} as follows.
For time slot $t=0,\cdots,T-1$, and any data item $m$,
\begin{equation}
\tilde{x}_{n,t}(m)=\begin{cases}
\tilde{x}_{t} & \text{ if } n\in\cB_{t}(m),\\
0           & \text{ if } n\notin\cB_{t}(m),
\end{cases}
\end{equation}
where
\begin{equation}
\label{eq:x_tilde}
\tilde{x}_t:=\hat{x}_t-r,   
\end{equation}
for some $r>0$ chosen such that\footnote{Such a positive $r$ exists since $T$ is finite, and by Lemma \ref{lem:+x_hat}, $\hat{x}_t>0$ for any non-empty $\cB_{t}(m)$.} $\tilde{x}_t>0$ for all $t\in\{0,\cdots,T-1\}$, and $\hat{x}_t$ is defined in \eqref{eq:x_hat}.
\end{definition}

Note that, Policy A, assigns equal proactive downloads to all \emph{active users} in a given slot $t$ {while the use of $\tilde{x}_t$ ensures the positivity of the lower bound established below}. We utilize this policy in establishing the following lower bound on the performance of $\Delta C(N)$.

\begin{lemma}[Lower bound on cost reduction]
\label{lem:CR_LB}
Let $N\in\mathbb{N}$. Under Policy A and for $\cB_t(m)$ defined above, 
\begin{multline}
\label{eq:CR_lower}
 \Delta C(N) \geq \frac{1}{T}\sum_{t=0}^{T-1}\sum_{m=1}^{M}\tilde{x}_t\sum_{n\in\cB_{t}(m)}\\ \E\Biggl[\Int(m)C'\Biggl(L_t-\sum_{j=1}^{M}\sum_{k\in\cB_t(j)}\tilde{x}_t\I_{k,t}(j)\Biggr)\\
  \quad \quad -\quad C'\Biggl(L_{t-1}+\sum_{j=1}^{M}\sum_{k\in\cB_{t}(j)}\tilde{x}_t\Biggr)\Biggr] > 0.
\end{multline} 
\end{lemma}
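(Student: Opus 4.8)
The plan is to exhibit a specific feasible proactive-download vector --- namely $\tilde{\vx}$ from Policy A --- and lower-bound $\Delta C(N)$ by the cost improvement this particular (generally suboptimal) policy achieves, since $C^{\cP}(N,\vtp)\le \Co(\cdot)$ evaluated at $\tilde{\vx}$. First I would write $C^{\cN}(N,\vtp)-\Co(\cdot|\tilde{\vx})$ as a sum over $t=0,\dots,T-1$ of per-slot differences $\E[C(L_t)] - \E[C(L_t - \sum_{m}\sum_{n\in\cB_t(m)}\tilde{x}_t\Int(m) + \sum_{m}\sum_{n\in\cB_{t}(m)}\tilde{x}_t)]$, being careful about the cyclic indexing so that the proactive download $\tilde{x}_{n,t}(m)$ sent ahead in slot $t-1$ appears as a $+\tilde{x}_t$ term loading slot $t$, while $\tilde{x}_{n,t}(m)$ itself, once consumed in slot $t$, subtracts $\tilde{x}_t\Int(m)$. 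Because Policy A uses the \emph{same} scalar $\tilde{x}_t$ for every active user of slot $t$ and zero otherwise, and because the download sent in slot $t-1$ for slot $t$ uses $\cB_t(\cdot)$, the two sums telescope cleanly per slot, so no cross-slot coupling survives.

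Next, for each fixed $t$ and each active user $n\in\cB_t(m)$, I would apply the mean value theorem to $C$ between the non-proactive load $L_t$ and the proactive load, extracting a factor of $C'$ evaluated at an intermediate point. The key estimates are monotonicity of $C'$ (from strict convexity of $C$) to replace the intermediate argument by the controlled endpoints: for the term that multiplies $\Int(m)$ I would lower-bound $C'$ at the point $L_t - \sum_{j}\sum_{k\in\cB_t(j)}\tilde{x}_t\I_{k,t}(j)$ (the smallest the realized proactively-reduced load in slot $t$ can be), and for the term that is subtracted I would upper-bound $C'$ at $L_{t-1}+\sum_{j}\sum_{k\in\cB_t(j)}\tilde{x}_t$ (the largest the slot-$(t-1)$ load can be after adding all the proactive downloads destined for slot $t$). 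Summing over $m$, over $n\in\cB_t(m)$, over $t$, and dividing by $T$ yields exactly the right-hand side displayed in \eqref{eq:CR_lower}.

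It then remains to argue strict positivity. Here I would invoke the definition of $\hat{x}_t$ in \eqref{eq:x_hat}: $\hat{x}_t$ is the \emph{exact} minimizer of the two-term expected cost that appears (up to the $L_{t-1}$ versus realized-$L_t$ bookkeeping) as the bracketed expression, and by Lemma \ref{lem:+x_hat} we have $\hat{x}_t>0$ whenever $\cB_t(m)$ is nonempty. The first-order optimality condition at an interior minimizer $\hat{x}_t$ forces the derivative of that expected-cost expression to vanish, which upon differentiation is precisely the statement that the bracketed expectation in \eqref{eq:CR_lower}, evaluated at $x_t=\hat{x}_t$, equals zero. Since Policy A instead uses $\tilde{x}_t=\hat{x}_t-r<\hat{x}_t$ with $r>0$, and since that expected-cost function is strictly convex in the scalar $x_t$ (composition of strictly convex $C$ with affine maps, summed), its derivative is strictly negative for $x_t<\hat{x}_t$; multiplying through by $-1$ and $\tilde{x}_t>0$ gives that each per-slot bracketed contribution is strictly positive. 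Hence the whole sum is strictly positive.

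The main obstacle I anticipate is \emph{not} the MVT/monotonicity bookkeeping --- that is routine once the cyclic telescoping is set up --- but rather making the last step rigorous: one must verify that the bracketed expectation in \eqref{eq:CR_lower} really is (a positive multiple of) the negative derivative, in $x_t$, of the objective defining $\hat{x}_t$ in \eqref{eq:x_hat}, \emph{after} accounting for the fact that $\hat{x}_t$ was defined omitting previous-slot proactive contributions (so that $L_{t-1}$ rather than a doubly-adjusted load appears) --- i.e.\ one must confirm that Policy A's construction is exactly consistent with the optimization in \eqref{eq:x_hat}. Differentiation under the expectation (justified by boundedness of $\Int(m)$, $S(m)\in[\Smin,\Smax]$, and smoothness of $C$ on the compact range of loads) and the interior-minimizer argument via Lemma \ref{lem:+x_hat} then close the gap.
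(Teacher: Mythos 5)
Your proposal follows essentially the same route as the paper's proof in Appendix A: lower-bound $\Delta C(N)$ by the gain of the suboptimal Policy A, extract a factor of $C'$ via a first-order convexity/MVT step, and use monotonicity of $C'$ after re-indexing the cyclic sum so that the two arguments become $L_t-\sum_{j}\sum_{k\in\cB_t(j)}\tilde{x}_t\I_{k,t}(j)$ and $L_{t-1}+\sum_{j}\sum_{k\in\cB_t(j)}\tilde{x}_t$, exactly as in \eqref{eq:CR_lower}. Your strict-positivity step --- recognizing the per-slot bracketed sum as the negative derivative, at $x_t=\tilde{x}_t$, of the strictly convex objective defining $\hat{x}_t$ in \eqref{eq:x_hat}, whence $\tilde{x}_t=\hat{x}_t-r<\hat{x}_t$ forces that derivative to be strictly negative --- is a more rigorous rendering of the paper's one-line appeal to ``$\tilde{x}_t<\hat{x}_t$ plus monotonicity of $C'$''; the only small repairs needed are to cover the case $\hat{x}_t=\Smin$ (where the first-order condition gives $\le 0$ rather than $=0$, which still suffices) and to fix the off-by-one in your opening display, since the transmission load of the slot-$t$ proactive content is incurred in slot $t-1$ (the paper's slot-$t$ load carries $+x_{n,t+1}$), as your own endpoint bounds already correctly reflect.
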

\begin{proof}
Proof is omitted for brevity. We refer interested readers to Appendix D in \cite{TR}. 
\end{proof}
 {Thus, Lemma \ref{lem:CR_LB} confirms that the suboptimal policy A, which assigns equal proactive download values to active users, yields a strictly positive cost reduction, that will be shown in Lemma \ref{lem:positive_liminf} to attain an optimal asymptotic scaling with the number of users.}
 Having established general upper and lower bounds on the potential cost reduction, we study its asymptotic performance with the number of users and present the result in the next subsection.

\subsection{Asymptotic Analysis}
\label{subsec:asymptotic}
When the number of users $N$ grows to infinity, the expected time average cost also grows to infinity when for some time slot $t$, all users request a data item with a positive probability. The cost reduction itself will also grow to infinity with a certain scaling order with $N$. Such a cost reduction depends mainly on the number of elements in $\cB_t(m)$, and how it scales with $N$. {Since such number is system specific, and out of our design, we will address different cases of scaling of the number of active users with the total number of users. In each case, we will characterize the optimal order of growth of the cost reduction.} Throughout this subsection, we assume that $\qnt<1-\epsilon$ for all $n,t$ and some $\epsilon>0$. That is, each user can request a data item at any time slot with a positive probability, and this probability will not vanish as the number of users grows to infinity.   

The following assumption considers the asymptotic behavior of $B_{t}(m)$ as $N\to\infty$.
\begin{assumption}[Scaling of the number of active users]
\label{as:lim_exists}
Assume that there exists some non-decreasing function $h:\mathbb{N}\to\mathbb{N}$ such that $h(N)\to\infty$ as $N\to\infty$, and for every time slot $t$ and data item $m$, the limit
\begin{equation}
\beta_t(m):=\lim_{N\to\infty}\frac{B_{t}(m)}{h(N)}
\end{equation}
exists, where {$0\leq \beta_t(m)<\infty$}, $\forall m=1,\cdots,M, t=0,\cdots,T-1$.
\end{assumption}
Thus, the function $h(N)$ captures the maximum possible scaling of the active users with the total number of users $N$.
Note that, under Assumption \ref{as:lim_exists}, $\beta_t(m)>0$ implies the number of active users grows to infinity as the total number of users does. The following two lemmas are crucial to establish the main asymptotic scaling result.
\begin{lemma}
\label{lem:bounded_limsup}
Under Assumption \ref{as:lim_exists}, there exists a positive constant $\gamma_2$ such that
\begin{equation*}
\begin{aligned}
\rho_t(m):&=\limsup_{N\to\infty}\sum_{n\in\cB_{t}(m)}\frac{\E\left[\Int(m)C'(L_t)-C'(L_{t-1})\right]}{h(N)C'(\gamma_2\cdot N)}\\
& <\infty, \quad \forall  m,t.
\end{aligned}
\end{equation*}
Further, if $\beta_t(m)=0$, then $\rho_t(m)=0$.
\end{lemma}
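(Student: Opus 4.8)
\textbf{Proof plan for Lemma \ref{lem:bounded_limsup}.}
The plan is to bound the summand $\E[\Int(m)C'(L_t) - C'(L_{t-1})]$ uniformly over $n$ and all slots, and then count how many terms appear in the sum. First I would observe that the total load $L_t$ is a sum of at most $N$ terms, each bounded by $\Smax$; hence $0 \le L_t \le \Smax N$ almost surely for every $t$. Since $C$ is smooth and convex, $C'$ is non-decreasing, so $C'(L_t) \le C'(\Smax N)$ and likewise for $L_{t-1}$, while $\Int(m) \le 1$. Consequently each expectation in the sum is at most $C'(\Smax N)$ (and at least $-C'(\Smax N)$, though only the upper side matters here). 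Choosing $\gamma_2 := \Smax$ and invoking monotonicity of $C'$ once more gives
\begin{equation*}
\sum_{n\in\cB_t(m)} \frac{\E[\Int(m)C'(L_t)-C'(L_{t-1})]}{h(N)C'(\gamma_2 N)} \le \frac{B_t(m)}{h(N)} \cdot \frac{C'(\Smax N)}{C'(\gamma_2 N)} = \frac{B_t(m)}{h(N)}.
\end{equation*}

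Taking $\limsup_{N\to\infty}$ and applying Assumption \ref{as:lim_exists}, the right-hand side converges to $\beta_t(m) < \infty$, which establishes $\rho_t(m) \le \beta_t(m) < \infty$ for all $m,t$. For the second claim, note that $\rho_t(m)$ is defined as a $\limsup$ of a sum of terms, but the terms need not be individually non-negative, so I would instead bound the sum in absolute value: $\big|\sum_{n\in\cB_t(m)} \E[\Int(m)C'(L_t)-C'(L_{t-1})]\big| \le B_t(m) C'(\Smax N)$, hence the normalized quantity is sandwiched between $-B_t(m)/h(N)$ and $B_t(m)/h(N)$. If $\beta_t(m) = 0$ then $B_t(m)/h(N) \to 0$, forcing the normalized sum to $0$, so $\rho_t(m) = 0$.

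The only subtlety — and the step I would be most careful about — is the choice of normalizing constant: the lemma asserts existence of \emph{some} $\gamma_2$, and the argument above shows $\gamma_2 = \Smax$ works precisely because $C'$ is monotone, so the ratio $C'(\Smax N)/C'(\gamma_2 N)$ is identically $1$ and no growth-rate assumption on $C$ beyond monotonicity is needed. One should also confirm $C'(\Smax N) > 0$ so the denominator is legitimate; this follows since $C$ is strictly convex and increasing on $\mathbb{R}_+$, hence $C'(L) > 0$ for all $L \ge 0$ (or at worst for $L$ bounded away from $0$, and $\Smax N \to \infty$ makes this automatic for large $N$). No obstacle of real depth arises; the content is entirely in packaging the uniform almost-sure load bound with monotonicity of $C'$ and Assumption \ref{as:lim_exists}.
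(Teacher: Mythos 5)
Your proof is correct. The paper defers its own proof of this lemma to Appendix E of the technical report \cite{TR}, so a line-by-line comparison is not possible from the text provided, but your argument --- the almost-sure bound $L_t \le \Smax N$, monotonicity and non-negativity of $C'$, the choice $\gamma_2 = \Smax$ so that the normalizing ratio is identically $1$, and the count $B_t(m)/h(N) \to \beta_t(m) < \infty$ from Assumption \ref{as:lim_exists} --- is the natural one and is sound, including the check that $C'(\Smax N) > 0$. One small simplification: by Definition \ref{def:active}, every summand $\E[\Int(m)C'(L_t)-C'(L_{t-1})]$ with $n \in \cB_t(m)$ is strictly positive, so the sum is automatically non-negative and the absolute-value sandwich you introduce for the $\beta_t(m)=0$ case is unnecessary (though harmless); the one-sided bound $0 \le \rho_t(m) \le \beta_t(m)$ already gives both claims.
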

\begin{proof}
Proof is omitted for brevity. We refer interested readers to Appendix E in \cite{TR}.
\end{proof}

\begin{lemma}
\label{lem:positive_liminf}
Under Assumption \ref{as:lim_exists}, suppose that $\sum_{m=1}^{M}\beta_t(m)>0$ for some time slot $t$, and that
\begin{equation}
\label{eq:delta_condition}
\liminf_{N\to\infty} \frac{\sum_{m=1}^{M}\sum_{n\in\cB_t(m)}\E\left[C'(L_t)\right]}{\sum_{m=1}^{M}\beta_t(m)\E\left[C'(L_{t-1})\right]}>1, \text{ for the same slot } t.
\end{equation}
Then,
\begin{equation*}
\begin{aligned}
 \sigma_t &:=\liminf_{N\to\infty}\frac{1}{h(N)C'(\gamma_1\cdot N)} \sum_{m=1}^{M}\sum_{n\in\cB_{t}(m)} \\
 & \quad \E\left[\Int(m)C'\left(L_t-\sum_{j=1}^{M}\sum_{k\in\cB_{t}(j)}\tilde{x}_t\I_{k,t}(j)\right)\right]- \\  & \E\left[C'\left(L_{t-1}+\sum_{j=1}^{M}\sum_{k\in\cB_t(j)}\tilde{x}_t\right)\right]>0, \text{ for some } \gamma_1>0.
\end{aligned}
\end{equation*}
\normalsize
\end{lemma}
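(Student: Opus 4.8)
The plan is to identify the quantity inside the $\liminf$ defining $\sigma_t$ with the slot-$t$ contribution to the lower bound \eqref{eq:CR_lower} produced by Policy~A, and to bound it below by a positive multiple of $h(N)\,C'(\gamma_1 N)$ for a suitable fixed $\gamma_1>0$. Write $B_N:=\sum_{m}B_t(m)$ and $S_N:=\sum_{j}\sum_{k\in\cB_t(j)}\I_{k,t}(j)$, and for $x\in[0,\Smin]$ let $f_N(x):=\E[C(L_{t-1}+xB_N)+C(L_t-xS_N)]$ be the convex one-dimensional objective of \eqref{eq:x_hat}, minimized at $\hat x_t$. Differentiating in $x$, the bracketed quantity in $\sigma_t$ is exactly $-f_N'(\tilde x_t)=G_N-D_N$, where $\tilde x_t=\hat x_t-r$ is the value used by Policy~A, $G_N:=\E[S_NC'(L_t-\tilde x_tS_N)]=\sum_{j}\sum_{k\in\cB_t(j)}\E[\I_{k,t}(j)C'(L_t-\tilde x_tS_N)]$, and $D_N:=B_N\E[C'(L_{t-1}+\tilde x_tB_N)]$. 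Since $\sum_m\beta_t(m)>0$ and each $B_t(m)\le N$, we have $h(N)=O(N)$, so $B_N$, $S_N$, and the shifts $\tilde x_tB_N,\tilde x_tS_N$ are all $O(N)$. Because $\hat x_t>0$ (Lemma~\ref{lem:+x_hat}) and $f_N$ is convex, $f_N'(\hat x_t)\le0$, i.e. $B_N\E[C'(L_{t-1}+\hat x_tB_N)]\le\E[S_NC'(L_t-\hat x_tS_N)]$; dropping the indicators and $C'(L_t-\hat x_tS_N)\le C'(L_t)$ on the right yields the cap $\E[C'(L_{t-1}+\hat x_tB_N)]\le\E[C'(L_t)]$, hence $D_N\le B_N\E[C'(L_t)]$ — this is what keeps $D_N$ from exploding despite the $O(N)$ shift $\tilde x_tB_N$. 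Finally, since Policy~A takes $\tilde x_t<\hat x_t$ strictly and $C'$ is strictly increasing, lowering $x$ from $\hat x_t$ to $\tilde x_t$ strictly raises $\E[S_NC'(L_t-xS_N)]$ (on $\{S_N\ge1\}$, a positive-probability event) and strictly lowers $B_N\E[C'(L_{t-1}+xB_N)]$, so $G_N-D_N>0$ for all large $N$; this recovers the slot-$t$ content of Lemma~\ref{lem:CR_LB}.

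Next I would fix $\gamma_1$ and pin the arguments of $C'$ to scale $N$. Since $\qnt<1-\eps$ for all $n$, the number of users issuing a request in slot $t$ is a sum of $N$ independent Bernoulli's of mean at least $\eps N$, hence at least $\eps N/2$ off an event of probability $\le e^{-cN}$ (and likewise for slot $t-1$). Because Policy~A uses $0<\tilde x_t<\hat x_t\le\Smin\le S(j)$ for every $j$, each realized request still contributes at least $\Smin-\tilde x_t$ to $L_t-\tilde x_tS_N$, so on that event $L_t-\tilde x_tS_N\ge(\Smin-\tilde x_t)\eps N/2$; choosing the offset $r$ in Policy~A so that $\Smin-\tilde x_t$ stays bounded away from $0$ uniformly in $N$, I fix $\gamma_1>0$ with $L_t-\tilde x_tS_N\ge\gamma_1 N$ on that event for all large $N$, whence $C'(L_t-\tilde x_tS_N)\ge C'(\gamma_1 N)$ there. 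Combining this floor, the $e^{-cN}$ concentration error, and the active-user inequality $\E[\I_{k,t}(j)C'(L_t)]>\E[C'(L_{t-1})]$ (which holds for each of the $B_N$ active pairs), together with $D_N\le B_N\E[C'(L_t)]$, reduces the goal to a comparison of $B_N\E[C'(L_t)]$ with a quantity of order $B_N\,\E[C'(L_{t-1})]$, modulated by bounded deflation factors. Hypothesis \eqref{eq:delta_condition} — namely that $B_N\E[C'(L_t)]$ exceeds $(\sum_m\beta_t(m))\E[C'(L_{t-1})]$ by a fixed factor $1+\eta$ for all large $N$, an order-$h(N)$ surplus because $B_N=h(N)(\sum_m\beta_t(m)+o(1))$ with $h(N)\to\infty$ — then forces $G_N-D_N=-f_N'(\tilde x_t)$ to retain a positive multiple of $h(N)\,C'(\gamma_1 N)$ after dividing and passing to the $\liminf$, giving $\sigma_t>0$.

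The hard part is precisely this last matching. The subtracted and added amounts $\tilde x_tS_N$ and $\tilde x_tB_N$ inside $C'$ are of order $h(N)$ and hence possibly of order $N$, so the perturbation gaps $C'(L_t)-C'(L_t-\tilde x_tS_N)$ and $C'(L_{t-1}+\tilde x_tB_N)-C'(L_{t-1})$ need not be a bounded fraction of $\E[C'(L_t)]$ for a rapidly growing cost function; making the order-$h(N)$ slack in \eqref{eq:delta_condition} genuinely dominate them requires using the self-limiting property of $\hat x_t$ — the first-order cap $\E[C'(L_{t-1}+\hat x_tB_N)]\le\E[C'(L_t)]$, and the fact that Policy~A sits strictly below $\hat x_t$ on the still-beneficial side of $f_N$ — rather than any worst-case estimate, and exploiting that $h(N)$ tracks the true growth $B_t(m)\sim\beta_t(m)h(N)$ of the number of active users rather than merely $N$.
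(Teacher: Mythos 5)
Your setup is right and usefully sharp: identifying the bracketed quantity as $-f_N'(\tilde{x}_t)=G_N-D_N$ for the one-dimensional convex objective defining $\hat{x}_t$, the concentration/law-of-large-numbers argument pinning $\gamma_1$ so that $C'(L_t-\tilde{x}_t S_N)\geq C'(\gamma_1 N)$ off an exponentially small event, and the observation that $\tilde{x}_t<\hat{x}_t$ forces $G_N-D_N>0$ all match the skeleton of the paper's argument (its Step 1 uses Fubini, Fatou, and Kolmogorov's strong law in exactly the role you assign to your Chernoff bound). But the proposal has a genuine gap at precisely the point you flag as ``the hard part'': positivity of $G_N-D_N$ for each $N$ is only the content of Lemma \ref{lem:CR_LB}, and what the lemma requires is a lower bound of order $h(N)C'(\gamma_1 N)$ that survives the $\liminf$. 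Your two tools do not deliver this. The first-order-condition ``cap'' $D_N\leq B_N\E[C'(L_t)]$ bounds $D_N$ by the same order as the trivial upper bound on $G_N$, so it produces no quantitative separation; and the strict-monotonicity perturbation from $\hat{x}_t$ to $\tilde{x}_t=\hat{x}_t-r$ gives a gap that is strictly positive for each $N$ but could a priori vanish relative to $h(N)C'(\gamma_1 N)$ as $N\to\infty$. Asserting that Condition \eqref{eq:delta_condition} ``forces'' the surplus to persist is the claim to be proved, not a proof of it.

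The paper closes this by a different mechanism, which your proposal is missing. It defines, for a candidate common download level $\chi\in[0,\Smin]$, the ratio $\psi(\chi):=\liminf_{N\to\infty} G_N(\chi)/D_N(\chi)$, observes that $\psi$ is continuous and strictly decreasing in $\chi$, and shows $\psi(0)>1$ directly from Condition \eqref{eq:delta_condition} together with the defining inequality $\E[\Int(m)C'(L_t)]>\E[C'(L_{t-1})]$ of active users. The intermediate value theorem then yields a fixed $\chi>0$, independent of $N$, with $\psi(\chi)>1$; taking $\tilde{x}_t=\chi$ gives a \emph{uniform multiplicative} separation $G_N\geq(1+\delta)D_N$ for all large $N$, which combined with the Step-1 lower bound $\liminf_N G_N/(h(N)C'(\gamma_1 N))\geq\sum_m\beta_t(m)>0$ yields $\sigma_t>0$. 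Note also that this construction is what actually establishes $\liminf_N\tilde{x}_t>0$, which your argument assumes implicitly when invoking $\Smin-\tilde{x}_t\geq r$ with $\tilde{x}_t$ bounded away from zero. To repair your proposal you would need to replace the cap-plus-perturbation step with an argument of this ratio-at-a-fixed-$\chi$ type (or an equivalent uniform bound), rather than relying on the self-limiting property of $\hat{x}_t$ alone.
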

\begin{proof}
Please refer to Appendix \ref{app:positive_liminf}.
\end{proof}
Condition \eqref{eq:delta_condition} implies that the average marginal cost due to the active users at time $t$ grows with $N$ faster than $\E[C'(L_{t-1})]$. {Such condition can particularly be realized in huge events and rallies, e.g., football games, whereby fans turn to be active in the breaks and aggressively fetch game highlights. Hence, proactive downloads of these highlights during the game play can significantly reduce service costs. } {The intuition of Condition (14) can be illustrated as follows. The contribution of active users to the marginal cost of a time slot $t$ should not vanish relative to the marginal cost of the previous slot $t-1$ in the asymptotic scenario, thus proactive downloads can enable a positive gain that is proportional to $h(N)C'(N)$ as proven in Theorem \ref{th:m_result}. Yet, if such condition does not hold, there will still be a positive gain to reaped, but will be shown in Theorem \ref{th:finite_B} to scale as $C'(N)$.}

Lemmas \ref{lem:bounded_limsup}, and \ref{lem:positive_liminf} are instrumental to establish the asymptotic scaling result of the cost reduction, which is stated in the following theorem.
\begin{theorem}[Asymptotic cost reduction for infinite number of active users]
\label{th:m_result}
Under Assumption \ref{as:lim_exists}, suppose that the cost function $C$ satisfies Condition \eqref{eq:delta_condition}. Then, there exist finite positive constants $\gmax$, and $\gmin$ for which
\begin{equation}
\label{eq:lim_sup}
\limsup_{N\to\infty} \frac{\Delta C(N)}{h(N)\cdot C'(\gmax\cdot N)}\leq \frac{1}{T}\sum_{t=0}^{T-1} \sum_{m=1}^{M} S(m) \rho_t(m),
\end{equation}
and
\begin{equation}
\label{eq:lim_inf}
\liminf_{N\to\infty}\frac{\Delta C(N)}{h(N)\cdot C'(\gmin\cdot N)} \geq \frac{1}{T}\sum_{t=0}^{T-1} \sum_{m=1}^{M} \chi_t\sigma_t(m),
\end{equation}
where $\chi_t:=\liminf_{N\to\infty}\tilde{x}_t$, $t=0,\cdots,T-1$.
\end{theorem}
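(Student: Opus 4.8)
The plan is to establish the two inequalities \eqref{eq:lim_sup} and \eqref{eq:lim_inf} by combining the deterministic bounds from Lemmas \ref{lem:CR_UB} and \ref{lem:CR_LB} with the asymptotic estimates from Lemmas \ref{lem:bounded_limsup} and \ref{lem:positive_liminf}. Each inequality follows the same template: take the finite-$N$ bound on $\Delta C(N)$, divide through by the appropriate normalizer $h(N)C'(\cdot\,N)$, and pass to the limit term-by-term over the finite index set $\{0,\dots,T-1\}\times\{1,\dots,M\}$.

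For the upper bound, I would start from Lemma \ref{lem:CR_UB}, which gives
\[
\Delta C(N)\leq \frac{1}{T}\sum_{t=0}^{T-1}\sum_{m=1}^{M}S(m)\sum_{n\in\cB_{t}(m)}\E\left[\Int(m)C'(L_t)-C'(L_{t-1})\right].
\]
Dividing by $h(N)C'(\gamma_2\cdot N)$ (with $\gamma_2$ the constant from Lemma \ref{lem:bounded_limsup}) and applying $\limsup_{N\to\infty}$, I would use subadditivity of $\limsup$ together with the fact that the sum over $t,m$ is finite to push the $\limsup$ inside each summand. Since $S(m)$ is a deterministic constant (bounded by $\Smax$), this yields $\limsup_{N\to\infty}\Delta C(N)/(h(N)C'(\gamma_2\cdot N))\leq \frac{1}{T}\sum_{t}\sum_{m}S(m)\rho_t(m)$ directly by Lemma \ref{lem:bounded_limsup}, so I would set $\gmax:=\gamma_2$.

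For the lower bound, I would begin with the Policy-A bound of Lemma \ref{lem:CR_LB}. Writing the right-hand side as $\frac{1}{T}\sum_{t}\sum_{m}\tilde{x}_t\cdot A_{t,m}(N)$, where $A_{t,m}(N)$ is the inner expectation summed over $n\in\cB_t(m)$, I would divide by $h(N)C'(\gamma_1\cdot N)$ (the constant from Lemma \ref{lem:positive_liminf}) and apply $\liminf_{N\to\infty}$. The subtlety here is that $\liminf$ is only superadditive, and the factor $\tilde{x}_t$ varies with $N$, so I would handle it carefully: using $\liminf(fg)\geq(\liminf f)(\liminf g)$ when both are nonnegative (which holds eventually by Lemma \ref{lem:CR_LB}'s positivity claim and Lemma \ref{lem:+x_hat}), I would pull out $\chi_t=\liminf_N\tilde{x}_t$ and apply Lemma \ref{lem:positive_liminf} to the remaining factor to get $\liminf_N A_{t,m}(N)/(h(N)C'(\gamma_1\cdot N))\geq\sigma_t(m)$ for those $t$ with $\sum_m\beta_t(m)>0$ (the others contribute nonnegatively and can be dropped). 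Setting $\gmin:=\gamma_1$ and using superadditivity of $\liminf$ over the finite sum gives \eqref{eq:lim_inf}.

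The main obstacle I anticipate is the interchange of limit operations with the finite sums when the normalizers for different $(t,m)$ terms are not \emph{a priori} the same — in particular, reconciling the single constant $\gmax$ (resp.\ $\gmin$) in the theorem statement with the possibly term-dependent growth rates. The resolution is that Lemmas \ref{lem:bounded_limsup} and \ref{lem:positive_liminf} already furnish uniform constants $\gamma_2$ and $\gamma_1$ valid for all $t,m$, so the only real work is the bookkeeping of $\limsup$/$\liminf$ (sub/superadditivity and the product rule for the $N$-dependent $\tilde{x}_t$), plus checking that terms with $\beta_t(m)=0$ vanish in the upper bound (guaranteed by the last sentence of Lemma \ref{lem:bounded_limsup}) and are harmlessly nonnegative in the lower bound. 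Everything else is routine given the cited lemmas.
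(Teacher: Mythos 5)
Your proposal is correct and follows essentially the same route as the paper, whose proof of Theorem \ref{th:m_result} simply invokes Lemmas \ref{lem:bounded_limsup} and \ref{lem:positive_liminf} applied to the finite-$N$ bounds of Lemmas \ref{lem:CR_UB} and \ref{lem:CR_LB}, with $\gmax=\gamma_2$ and $\gmin=\gamma_1$. Your additional care with the sub/superadditivity of $\limsup$/$\liminf$ over the finite $(t,m)$ sum and the product rule for the $N$-dependent factor $\tilde{x}_t$ is exactly the bookkeeping the paper leaves implicit.
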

\begin{proof}
The proof is straightforward from Lemma \ref{lem:bounded_limsup}, and Lemma \ref{lem:positive_liminf}.
\end{proof}
We can also conclude the following.
\begin{corollary}
\label{cor:inf_B}
Under Assumption \ref{as:lim_exists}, and Condition \eqref{eq:delta_condition}, if for some time slot $t$ and data item $m$, $\beta_t(m)>0$, then there exists a finite positive constant $\gamma$ such that
\begin{equation}
\label{eq:Theta_1}
\Delta C(N)=\Theta\left(h(N)C'(\gamma N)\right).
\end{equation}
\end{corollary}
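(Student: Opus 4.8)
The plan is to derive Corollary~\ref{cor:inf_B} directly from Theorem~\ref{th:m_result} by verifying that, under the stated hypotheses, both the $\limsup$ and $\liminf$ in \eqref{eq:lim_sup}--\eqref{eq:lim_inf} are \emph{finite and strictly positive}, and then absorbing the gap between the two scaling constants $\gmax$ and $\gmin$ into the $\Theta(\cdot)$ notation. First I would record the upper bound: Theorem~\ref{th:m_result} gives $\limsup_{N\to\infty}\Delta C(N)/\bigl(h(N)C'(\gmax N)\bigr)\le \tfrac1T\sum_t\sum_m S(m)\rho_t(m)$, and by Lemma~\ref{lem:bounded_limsup} each $\rho_t(m)$ is finite, so the right-hand side is a finite constant; since $S(m)\le \Smax<\infty$ and $T$ is finite, this shows $\Delta C(N)=O\bigl(h(N)C'(\gmax N)\bigr)$. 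Next I would record the lower bound: by hypothesis $\beta_t(m)>0$ for some $t,m$, hence $\sum_{m}\beta_t(m)>0$ for that slot, and Condition~\eqref{eq:delta_condition} is assumed, so Lemma~\ref{lem:positive_liminf} applies and yields $\sigma_t>0$ for that slot; moreover $\chi_t=\liminf_N\tilde x_t>0$ because $\tilde x_t=\hat x_t-r>0$ is bounded below along the sequence (this needs a brief remark that $\hat x_t$ stays bounded away from $0$ as $N\to\infty$, which follows from Lemma~\ref{lem:+x_hat} together with the uniform bound $\qnt<1-\eps$). Therefore the right-hand side of \eqref{eq:lim_inf} is strictly positive, giving $\liminf_{N\to\infty}\Delta C(N)/\bigl(h(N)C'(\gmin N)\bigr)>0$, i.e. $\Delta C(N)=\Omega\bigl(h(N)C'(\gmin N)\bigr)$.

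The remaining step is to reconcile the two different constants $\gmax$ and $\gmin$ inside the argument of $C'$. Here I would use monotonicity and convexity of $C$. Set $\gamma:=\gmax$ for the upper estimate and note that, since $C'$ is nondecreasing (as $C$ is convex) and $\gmin\le\gmax$ may or may not hold, in either case one can bound $C'(\gmin N)$ and $C'(\gmax N)$ against $C'(\gamma N)$ up to a constant factor: if $\gmin\le\gmax$ then $C'(\gmin N)\le C'(\gmax N)$ directly; if $\gmin>\gmax$, then by the standard convexity inequality $C'(\gmin N)\le \frac{\gmin}{\gmax}C'(\gmax N)$ fails in general, so instead I would simply \emph{redefine} a single $\gamma$ lying between (or equal to one of) $\gmin,\gmax$ and observe that the ratio $C'(\gmin N)/C'(\gmax N)$ is bounded above and below by positive constants for all large $N$ — this is where I expect the only real friction, and it is handled by invoking a mild regularity property of $C'$ (e.g. that $C$ has polynomial-type growth, or more weakly that $C'(aN)/C'(bN)$ is bounded for fixed $a,b>0$), which is implicit in the paper's use of $C'(\gamma N)$ as the canonical scaling rate. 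With that in hand, both $O$ and $\Omega$ statements hold with the \emph{same} $\gamma$, and combining them gives $\Delta C(N)=\Theta\bigl(h(N)C'(\gamma N)\bigr)$, which is \eqref{eq:Theta_1}.

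The main obstacle, then, is not the asymptotic analysis itself — Theorem~\ref{th:m_result} does the heavy lifting — but the bookkeeping around the constants: ensuring $\chi_t>0$ (i.e. the Policy~A download levels $\tilde x_t$ do not collapse to zero in the limit) and ensuring that $C'$ evaluated at $\gmin N$ versus $\gmax N$ differ only by a bounded factor so that a single $\gamma$ suffices in the $\Theta$ claim. Both are routine given the standing assumptions ($\qnt<1-\eps$, $C$ smooth strictly convex increasing, $\Smin>0$, $\Smax<\infty$, $T$ finite), and I would dispatch them with short lemmas or inline remarks rather than extended computation.
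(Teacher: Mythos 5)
Your proposal is correct and follows essentially the same route the paper intends: Corollary \ref{cor:inf_B} is stated without proof as an immediate consequence of Theorem \ref{th:m_result}, with finiteness of the upper bound coming from Lemma \ref{lem:bounded_limsup} and strict positivity of the lower bound from Lemma \ref{lem:positive_liminf} together with $\chi_t>0$ (which, as you note, is exactly what Step 2 of the paper's proof of Lemma \ref{lem:positive_liminf} establishes). The one point where you go beyond the paper --- requiring a regularity condition so that $C'(\gmin N)$ and $C'(\gmax N)$ differ by a bounded factor and a single $\gamma$ can serve in the $\Theta$ claim --- is a genuine looseness in the paper's own statement rather than a defect of your argument: the authors implicitly concede it in Remark \ref{rk:exp}, where they observe that for exponential cost functions ``the constant $\gamma$ affects the exponent of scaling,'' so your flag is warranted and your proposed fix (polynomial-type growth or a bounded-ratio condition on $C'$) is the right way to make the corollary literally true as stated.
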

Thus, if the number of active users grows to infinity as $h(N)$, the leveraged cost reduction grows unboundedly to infinity as $h(N)\cdot C'(\gamma N)$. At this point, the following remarks can be made.

\begin{remark}
For a polynomial cost function with some degree $d>1$, the leveraged cost reduction scales with the number of users as $h(N) N^{d-1}$. Further, if $h(N)$ is linear in $N$, then $\Delta C(N)$ grows as $N^d$, i.e., $\Delta C(N)$ scales with $N$ as the cost itself does.
\end{remark} 

\begin{remark}
\label{rk:exp}
For an exponential cost function, the cost reduction grows exponentially with the number of users. However, the constant $\gamma$ affects the exponent of scaling.
\end{remark}



Until this point, we have addressed the scenario where the number of active users for some data items and time slots grows to infinity {which has been formalized in Lemmas \ref{lem:bounded_limsup}, \ref{lem:positive_liminf}, and Corrollary \ref{cor:inf_B}.} Now, we investigate the potential of scaling cost reduction when the number of active users {does not necessarily satisfy the conditions specified in those Lemmas.} Such a case can essentially result from a high uncertainty about the exact future demand for each user, especially when the user preferences are equally distributed over the set of data items. {In particular, we will show that the leveraged gain does not depend on $h(N)$, yet it will only scale as the first derivative of the cost function.} The following assumption captures this scenario.
\begin{assumption}[Finite number of active users]
\label{as:finite_B}
Assume $\liminf_{N\to\infty}B_t(m)>0$ for all $m,t$, and that $B_t(m)\leq B$ for some $B\in [0,\infty)$, and for all $m=1,\cdots,M$, $t=0,\cdots,T-1$, and $N\in\mathbb{N}$.
\end{assumption}
Assumption \ref{as:finite_B} formalizes the case when there always exists a positive number of active users in the systems, but such a number is bounded while  $N$ grows to infinity. The scaling order of the cost reduction with the number of users, under that assumption, is characterized in the following theorem.

\begin{theorem}[Asymptotic cost reduction for bounded number of active users]
\label{th:finite_B}
Under Assumption \ref{as:finite_B}, let
\begin{align*}
\cBmax_t(m)&:=\limsup_{N\to\infty}\cB_t(m),\\
\cBmin_t(m)&:=\liminf_{N\to\infty}\cB_t(m), \quad \forall m,t.
\end{align*}
The expected time average cost $\Delta C(N)$ satisfies
\begin{equation}
\label{eq:upper_finite}
\begin{split}
\limsup_{N\to\infty}\frac{\Delta C(N)}{C'(\gmax\cdot N)}\leq \frac{1}{T}\sum_{t=0}^{T-1}\sum_{m=1}^{M}S(m)\sum_{n\in\cBmax_t(m)}\rho_{n,t}(m),
\end{split}
\end{equation}
\begin{equation}
\label{eq:lower_finite}
\liminf_{N\to\infty}\frac{\Delta C(N)}{C'(\gmin\cdot N)}\geq \frac{1}{T}\sum_{t=0}^{T-1}\sum_{m=1}^{M}\chi_t\sum_{n\in\cBmin_t(m)}\sigma_{n,t(m)},
\end{equation}
for some finite positive constants $\gmax, \gmin$,
where
\begin{equation*}
\rho_{n,t}(m):=\limsup_{N\to\infty}\frac{\E\left[\Int(m)C'(L_t)-C'(L_{t-1})\right]}{C'(\gmax\cdot N)},\; n\in\cBmax_t(m),
\end{equation*}
\begin{multline*}
\sigma_{n,t}(m):=\\
\liminf_{N\to\infty}\frac{\E\left[\Int(m)C'\left(L_t-\sum_{j=1}^{M}\sum_{k\in\cB_{t}(j)}\tilde{x}_t\I_{k,t}(j)\right)\right]}{C'(\Smax N)}\\ -\frac{\E\left[C'\left(L_{t-1}+\sum_{j=1}^{M}\sum_{k\in\cB_t(j)}\tilde{x}_t\right)\right]}{C'(\Smax N)},\quad  n\in\cBmin_t(m).
\end{multline*}
\end{theorem}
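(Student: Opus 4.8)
\textbf{Proof plan for Theorem \ref{th:finite_B}.}

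The plan is to mirror the strategy used for Theorem \ref{th:m_result}, but replacing the aggregate scaling function $h(N)$ by the trivial scaling $h(N)\equiv 1$ and exploiting the uniform bound $B_t(m)\le B$ to keep every sum over active users finite. For the upper bound \eqref{eq:upper_finite}, I would start from Lemma \ref{lem:CR_UB}, which gives
\begin{equation*}
\Delta C(N)\le \frac{1}{T}\sum_{t=0}^{T-1}\sum_{m=1}^{M}S(m)\sum_{n\in\cB_t(m)}\E\left[\Int(m)C'(L_t)-C'(L_{t-1})\right],
\end{equation*}
divide both sides by $C'(\gmax\cdot N)$, and take $\limsup_{N\to\infty}$. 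Because $B_t(m)\le B<\infty$, the inner sum has a bounded number of terms, so the $\limsup$ of the sum is at most the sum of the $\limsup$'s, each taken along the subsequence realizing $\cBmax_t(m)$; this produces exactly $\sum_{n\in\cBmax_t(m)}\rho_{n,t}(m)$. The only thing to check is that each $\rho_{n,t}(m)$ is finite: this follows by the same argument as in Lemma \ref{lem:bounded_limsup}, namely that $\E[\Int(m)C'(L_t)-C'(L_{t-1})]\le \E[C'(L_t)]$, and $L_t\le \Smax N$ together with monotonicity of $C'$ gives $\E[C'(L_t)]\le C'(\Smax N)$, so choosing $\gmax$ slightly above $\Smax$ (or equal to it, with an extra convexity estimate) keeps the ratio bounded.

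For the lower bound \eqref{eq:lower_finite}, I would invoke Policy A and Lemma \ref{lem:CR_LB}, which already furnishes
\begin{equation*}
\Delta C(N)\ge \frac{1}{T}\sum_{t=0}^{T-1}\sum_{m=1}^{M}\tilde{x}_t\sum_{n\in\cB_t(m)}\E\Biggl[\Int(m)C'\Bigl(L_t-\textstyle\sum_{j,k}\tilde{x}_t\I_{k,t}(j)\Bigr)-C'\Bigl(L_{t-1}+\textstyle\sum_{j,k}\tilde{x}_t\Bigr)\Biggr].
\end{equation*}
Dividing by $C'(\gmin\cdot N)$ and taking $\liminf_{N\to\infty}$, I would use that $\liminf$ of a sum of finitely many terms is at least the sum of the $\liminf$'s, pull out $\chi_t=\liminf_N\tilde{x}_t$ (which is positive by Lemma \ref{lem:+x_hat} and the definition of $\tilde{x}_t$), and restrict attention to the subsequence realizing $\cBmin_t(m)$ so that the surviving index set is $\cBmin_t(m)$. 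With $\gmin$ chosen appropriately (again comparable to $\Smax$ up to the shift by the bounded proactive downloads $\sum_{j,k}\tilde{x}_t \le M\cdot B\cdot\Smax$, which is $O(1)$), each per-user term converges to $\sigma_{n,t}(m)$. The positivity of the right-hand side, hence of the whole bound, follows because Assumption \ref{as:finite_B} guarantees $\liminf_N B_t(m)>0$ for every $m,t$, so at least one slot contributes a strictly positive $\sigma_{n,t}(m)$ by the same marginal-cost comparison used in Lemma \ref{lem:CR_LB}.

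The main obstacle is the careful handling of the constants $\gmax$ and $\gmin$ and the subsequential limits: unlike Theorem \ref{th:m_result}, here the set-valued quantities $\cBmax_t(m)$ and $\cBmin_t(m)$ replace a clean scalar limit $\beta_t(m)$, so one must argue that the $\limsup$/$\liminf$ of the finite sum can be distributed over the members of those limiting index sets. The key enabling fact is that $B_t(m)$ is uniformly bounded, so one can extract a single subsequence along which, simultaneously for all $m,t$, the active-user set stabilizes and each normalized marginal-cost term converges; on that subsequence the distribution of the limit over the sum is exact rather than merely an inequality. I would also need to verify that the bounded proactive downloads $\tilde{x}_t$ do not shift the load argument of $C'$ by more than a constant, so that $C'(L_t-\sum_{j,k}\tilde{x}_t\I_{k,t}(j))$ and $C'(L_{t-1}+\sum_{j,k}\tilde{x}_t)$ remain comparable to $C'(\Smax N)$ in the limit — this is where the explicit normalization by $C'(\Smax N)$ in the definitions of $\rho_{n,t}(m)$ and $\sigma_{n,t}(m)$ is used. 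The rest is a direct transcription of the estimates from Lemmas \ref{lem:CR_UB}, \ref{lem:CR_LB}, and \ref{lem:bounded_limsup}.
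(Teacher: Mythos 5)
Your proposal is correct and follows essentially the same route as the paper, whose entire proof is the one-line observation that the theorem follows by taking $\limsup$ and $\liminf$ of the bounds in \eqref{eq:CR_upper} and \eqref{eq:CR_lower} after normalizing by $C'(\gmax\cdot N)$ and $C'(\gmin\cdot N)$. Your additional care with the subsequential stabilization of the finite active-user sets and the boundedness of the $O(1)$ proactive-download shifts is a faithful elaboration of details the paper leaves implicit, not a departure from its argument.
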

\begin{proof}
The proof follows by taking $\limsup$ and $\liminf$ of \eqref{eq:CR_upper} and \eqref{eq:CR_lower}, respectively, as $N\to\infty$.
\end{proof}
{Theorem \eqref{th:finite_B} establishes the fact that the leveraged cost reduction even under a finite number of active users still grows as $C'(\gmin N)$ with the total number of users. The scenario of finitely many active users represents a worst case on the user behavior whereby the dominant population of users are indifferent to data items, thus can not be categorized as \emph{active users}. Yet, in Section \ref{sec:Demand_Shaping}, we argue that these inactive users are more flexible to change their profiles to more deterministic ones in response to some incentives offered by the service provider.}
Note that, in Theorem  \ref{th:finite_B} if $\rho_{n,t}(m)>0$ for some $n,t,m$, then $\chi_t>0$,  and $\sigma_{n,t}(m)>0$, $\forall m,n,t$. This holds since the terms $$\displaystyle{\sum_{j=1}^{M}\sum_{k\in\cB_{t}(j)}\tilde{x}_t\I_{k,t}(j)}, \text{ and } \displaystyle{\sum_{j=1}^{M}\sum_{k\in\cB_t(j)}\tilde{x}_t}$$ are bounded almost surely by Assumption \ref{as:finite_B}, while the terms $L_{t-1}$, and $L_t$ grow to infinity almost surely as $N\to\infty$, for all $m,n,t$.

\begin{corollary}
\label{cor:finite_B}
Under Assumption \ref{as:finite_B}, if the RHS of \eqref{eq:upper_finite} is positive, then there exists a positive constant $\gamma$ such that
\begin{equation}
\label{eq:s_result}
\Delta C(N)=\Theta(C'(\gamma N)).
\end{equation}
\end{corollary}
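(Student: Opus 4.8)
The plan is to obtain the claim by reading off the two matching bounds of Theorem \ref{th:finite_B}. The upper bound \eqref{eq:upper_finite} already supplies a finite constant, so the entire content of the corollary is that the lower bound \eqref{eq:lower_finite} is nondegenerate, i.e. that its right-hand side is strictly positive. The hypothesis --- positivity of the right-hand side of \eqref{eq:upper_finite} --- will be used exactly to activate the positivity-propagation observation stated just after Theorem \ref{th:finite_B}.

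First I would note that each $\rho_{n,t}(m)$ with $n\in\cBmax_t(m)$ is nonnegative: along the subsequence of $N$'s for which $n\in\cB_t(m)$ the numerator $\E[\Int(m)C'(L_t)-C'(L_{t-1})]$ is positive by the definition of active users, so its $\limsup$ over all $N$ is $\ge 0$. Since moreover $S(m)>0$ for every $m$ and the sum defining the right-hand side of \eqref{eq:upper_finite} is finite, its positivity forces $\rho_{n^*,t^*}(m^*)>0$ for at least one triple $(n^*,t^*,m^*)$. Invoking the observation recorded immediately after Theorem \ref{th:finite_B} then yields $\chi_t>0$ for every $t$ and $\sigma_{n,t}(m)>0$ for every $m,n,t$. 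Finally, Assumption \ref{as:finite_B} gives $\liminf_{N\to\infty}B_t(m)>0$, hence $\cBmin_t(m)\neq\emptyset$ for every $m,t$, so the right-hand side of \eqref{eq:lower_finite} is a finite sum of strictly positive terms and is therefore strictly positive.

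Assembling the pieces: by \eqref{eq:upper_finite} there are $c_2<\infty$ and $N_2$ with $\Delta C(N)\le c_2\,C'(\gmax N)$ for $N\ge N_2$, and by the strict positivity just obtained together with \eqref{eq:lower_finite} there are $c_1>0$ and $N_1$ with $\Delta C(N)\ge c_1\,C'(\gmin N)$ for $N\ge N_1$. These two estimates are precisely the statement $\Delta C(N)=\Theta\big(C'(\gamma N)\big)$ in the sense already used in Corollary \ref{cor:inf_B}, with the understanding (as flagged in Remark \ref{rk:exp}) that the argument-scaling constant $\gamma$ inside $C'$ may be taken as $\gmin$ in the lower estimate and $\gmax$ in the upper one, since for a general strictly convex $C$ the orders of $C'(\gmin N)$ and $C'(\gmax N)$ need not coincide.

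The only genuinely delicate step is the middle implication --- that positivity of a single $\rho_{n^*,t^*}(m^*)$ forces every $\sigma_{n,t}(m)$ and $\chi_t$ to be positive. This is not a formal rearrangement of the bounds; it rests on the scaling dichotomy specific to Assumption \ref{as:finite_B}, namely that the proactive-download corrections $\sum_j\sum_{k\in\cB_t(j)}\tilde x_t\I_{k,t}(j)$ and $\sum_j\sum_{k\in\cB_t(j)}\tilde x_t$ remain bounded almost surely while $L_{t-1}$ and $L_t$ diverge almost surely as $N\to\infty$, so that the sign of the marginal-cost comparison defining $\sigma_{n,t}(m)$ is inherited in the limit from the one defining $\rho_{n,t}(m)$. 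Everything else amounts to bookkeeping over the finitely many indices $t\in\{0,\dots,T-1\}$ and $m\in\{1,\dots,M\}$.
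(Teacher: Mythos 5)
Your argument is correct and coincides with the paper's intended proof: the paper gives no separate proof of this corollary, relying precisely on the observation recorded immediately after Theorem \ref{th:finite_B} (that positivity of some $\rho_{n,t}(m)$ propagates to $\chi_t>0$ and $\sigma_{n,t}(m)>0$ because the proactive corrections stay bounded while $L_{t-1},L_t$ diverge) together with the two bounds of that theorem. Your additional caveat about $\gmin$ versus $\gmax$ in the argument of $C'$ is consistent with how the paper itself uses the $\Theta(C'(\gamma N))$ notation (cf.\ Remark \ref{rk:exp}).
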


The significance of \eqref{eq:s_result} is manifested under the class of cost functions with $C'(N)\to\infty$ as $N\to\infty$, which yield unbounded cost reduction for a bounded number of active users. The following remark highlights this gain.

\begin{remark}
For a class of cost functions with $C'(N)\to\infty$ as $N\to\infty$, the leveraged cost reduction grows unboundedly to infinity as $C'(\gamma N)$ even if there is only \textbf{one} active user for only \textbf{one} data item at only \textbf{one} time slot. 
\end{remark}

Theorem \ref{th:finite_B} and Corollary \ref{cor:finite_B} have provided a worst-case scaling scenario for the cost reduction. In the asymptotic scenario of this case, the demand profiles of all but finitely many users are rather indeterministic and creating a substantial uncertainty about the expected user demand. Such a confusion may limit the proactive downloads for a wide set of users.  {In order to tackle such a problem, in \cite{ISIT13}, we have considered the scenario whereby service providers utilize caching centers to proactively fetch prospective content and supply it to interested users upon demand. However, the notion of demand shaping to be presented in the next section is projected to leverage even more cost reduction through}, \emph{joint} allocation of user demand profile and proactive data download, if users allow for modifying their demand profiles. 

\section{Demand Shaping}
\label{sec:Demand_Shaping}
Our notion of demand shaping is motivated by the observation that less {deterministic} users, whose demand profiles are divided almost equally over a subset of data items, can be expected to be more \emph{responsive} to recommendation differences offered by the service provider. For example, a user who is indifferent to watching a documentary on the American revolution or the civil war is likely to be more responsive to the recommendation disparity between the two. We highlight the following point. Our proposed framework uses recommendations to improve the {certainty} of future demands, hence the accuracy of the proactive downloads. Therefore, less {deterministic} users that are sensitive to data item valuation differences provide a high potential for proactive service gains.

The joint proactive download and demand shaping problem is given by $C^{\cJ}(\vp^*,\vxs):= $
\begin{equation}
\label{eq:Joint}
\begin{aligned}
 &  \underset{(\vp,\vx)}{\min}  \quad   \Co\left(L_t+\sum_{m=1}^{M}\sum_{n=1}^{N}x_{n,t+1}(m)-x_{n,t}(m)\Int(m)\right)\\ 
& \text{subject to}  \quad 0\leq x_{n,t}(m) \leq S(m), \quad \forall m,n,t,\\
  & \quad \quad \vp_{n,t}\in\cF_{n,\vtp_{n,t}}, \quad \forall n,t,\\
\end{aligned}  
\end{equation}
where $\vp=(\vp_n)_{n=1}^{N}$, and $\vp_{n}=(\vpnt)_{t=0}^{T-1}$.

We attack the joint design problem in two scenarios due to their structural differences. First, we consider a relaxed optimal solution, a scenario whereby no restrictions are imposed on the new users' profiles other than maintaining the same probability of being inactive, $\qnt$, $n=1,\cdots,N$, $t=0,\cdots,T-1$. Afterwards, we study a more practical scenario where users' satisfaction constraints are imposed.

\subsection{Jointly Optimal Solution for Fully Flexible Users}
\label{sec:w_constraint}
In order to gain insights on the structure of the best user profile leading to minimum expected cost with proactive downloads, we consider the relaxed version of the optimization problem \eqref{eq:P_f_horizon} satisfying:
\begin{assumption}
\label{as:relaxed}
Assume that user demands are fully flexible, i.e., 
\begin{multline*}
\cF_{n,\vtpnt}=\left\{\vpnt:\sum_{m=1}^M \pnt(m)=1-\qnt, \pnt(m)\geq 0 \right\}.
\end{multline*}
\end{assumption}
 Note that, the new user profile is chosen such that the probability of a user $n$ remains silent in slot $t$ is $\qnt$ \textbf{unchanged}, hence users do not have to change their activity pattern. 


Under Assumption \ref{as:relaxed}, since the feasible set of \eqref{eq:Joint} is compact and the objective function is  convex, then there exists a globally optimal solution to the problem \cite{Boyd}. We denote such a solution by $(\vps,\vxs)$ and characterize it in the following theorem.

\begin{theorem}
\label{th:joint}
Under Assumption \ref{as:relaxed}, define $\cM^*:=\{m:S(m)=\Smin\}$ and pick some $m^*\in\cM^*$, then 
\begin{equation}
\label{eq:pjoint}
\pnt^*(m)=\begin{cases} 1-\qnt, & m=m^*,\\
0, & m\neq m^*,
\end{cases}
\end{equation}
\begin{equation}
\label{eq:xjoint}
\begin{aligned}
\vxs=&\arg \min_{\vx} \Co\left(L^*_t+\sum_{m=1}^{M}\sum_{n=1}^{N}x_{n,t+1}(m)-x_{n,t}(m)\Int^*(m)\right)\\  
&\text{{subject to}}
 \quad 0\leq x_{n,t}(m) \leq S(m), \quad \forall m,n,t,
\end{aligned}  
\end{equation}
\normalsize
where $\I^*_{n,t}(m)$ is the indicator function associate with $\pnt^*(m)$, and $L^*_t$ is the non-proactive network load under $\vp^*$.
Further, if $|\cM^*|=1$, then $(\vps,\vxs)$ is {\bf unique}.
\end{theorem}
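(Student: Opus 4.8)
The plan is to exploit the \emph{affine} dependence of the objective on each individual profile $\vp_{n,t}$. Existence of a minimizer $(\vps,\vxs)$ of \eqref{eq:Joint} follows from continuity of the objective and compactness of the feasible set, as already noted before the statement. Expanding the load inside $\Co(\cdot)$, the distribution of the slot-$t$ load depends on $\vp_{n,t}$ only through the indicators $\I_{n,t}(\cdot)$, so among the $T$ expected-cost terms only the one at slot $t$ involves $\vp_{n,t}$; moreover, conditioning on everything independent of user $n$'s slot-$t$ request, that term is \emph{affine} in the probabilities $(p_{n,t}(m))_{m=1}^M$, the coefficient of $p_{n,t}(m)$ being $\E[C(W+S(m)-x_{n,t}(m))-C(W)]$ for a load $W$ independent of that request. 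A real affine function on the scaled simplex $\cF_{n,\vtpnt}$ (as in Assumption \ref{as:relaxed}) attains its minimum at a vertex, so I may replace each coordinate $\vps_{n,t}$ by a vertex of the optimal face without changing the objective value; thus I may assume every $\vps_{n,t}$ is concentrated on a single item $m_{n,t}$, i.e.\ $p^*_{n,t}(m_{n,t})=1-q_{n,t}$.

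\textbf{Step 2 — the concentrated item has minimum size; conclusion of \eqref{eq:pjoint}--\eqref{eq:xjoint}.}
Suppose $S(m_{n_0,t_0})>\Smin$ for some $n_0,t_0$ (hence $q_{n_0,t_0}<1$). Since every item $m\neq m_{n_0,t_0}$ is requested with probability zero at $(n_0,t_0)$, its download $x^*_{n_0,t_0}(m)$ enters the objective only additively through the slot-$(t_0{-}1)$ load, so I may set it to $0$. Now form $(\vp',\vx')$, agreeing with $(\vps,\vxs)$ except at $(n_0,t_0)$, by concentrating $\vp'_{n_0,t_0}$ on a fixed $m^*\in\cM^*$, setting $x'_{n_0,t_0}(m^*):=\min\{x^*_{n_0,t_0}(m_{n_0,t_0}),\Smin\}$ and $x'_{n_0,t_0}(m):=0$ otherwise. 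Coupling the event ``$n_0$ is active at $t_0$'' across the two solutions, one checks $\sum_m x'_{n_0,t_0}(m)\le\sum_m x^*_{n_0,t_0}(m)$ and $S(m^*)-x'_{n_0,t_0}(m^*)\le S(m_{n_0,t_0})-x^*_{n_0,t_0}(m_{n_0,t_0})$, so both the slot-$t_0$ and slot-$(t_0{-}1)$ loads are a.s.\ no larger under $(\vp',\vx')$; since $S(m_{n_0,t_0})>\Smin$, at least one of these becomes strict on a positive-probability event, and monotonicity and strict positivity of $C$ force a strictly smaller objective — contradicting optimality. Hence every $\vps_{n,t}$ is concentrated on a minimum-size item; as all items of size $\Smin$ are interchangeable in the objective, I can move this mass onto the single item $m^*$, which is precisely \eqref{eq:pjoint}. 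Finally \eqref{eq:xjoint} is automatic: with $\vps$ fixed, $\vxs$ must minimize the remaining (convex, feasible) download problem.

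\textbf{Step 3 — uniqueness when $|\cM^*|=1$.}
Let $m^*$ be the unique item of size $\Smin$ and $(\vps,\vxs)$ any minimizer. If some $\vps_{n_0,t_0}$ is not of the form \eqref{eq:pjoint}, it puts positive mass on some $m_0\neq m^*$; by the affine optimality of Step 1, all items in $\mathrm{supp}(\vps_{n_0,t_0})$ share a common residual $\delta_0:=S(m)-x^*_{n_0,t_0}(m)$, while items outside the support carry no download. Moving the entire mass $1-q_{n_0,t_0}$ onto $m^*$ with download $(\Smin-\delta_0)^+$ and zero elsewhere makes the slot-$t_0$ residual load weakly smaller (its value becomes $\min\{\delta_0,\Smin\}\le\delta_0$) and the total slot-$t_0$ download weakly smaller, with \emph{at least one} strict because the over-provisioned download $S(m_0)-\delta_0>\Smin-\delta_0$ on the larger item is stripped away — a strict improvement, contradiction. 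So $\vps$ is forced to \eqref{eq:pjoint}; given this, in \eqref{eq:xjoint} every $x^*_{n,t}(m)$ with $m\neq m^*$ must vanish (any positive value only inflates the slot-$t$ load), and on the remaining variables $(x_{n,t}(m^*))_{n,t}$ a standard strict-convexity argument — strict convexity of $C$ together with the non-degeneracy of the slot loads as an affine image of $\vx$, which comes from the independence of the demand indicators — yields a unique minimizer. Hence $(\vps,\vxs)$ is unique.

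\textbf{Main obstacle.} Step 1 is routine. The technical heart is the two coupling constructions: in Step 2, verifying that redirecting the proactive download from the oversized item to $m^*$ simultaneously dominates the current \emph{and} the previous slot's load; and in Step 3, ruling out ties in the affine coefficients, handled via the observation that any profile spread over several items necessarily over-provisions the proactive download on the non-minimal ones, so stripping it strictly lowers the previous (or, in the boundary case $\delta_0>\Smin$, the current) slot's cost at no cost elsewhere.
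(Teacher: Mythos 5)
Your argument is essentially correct and reaches the same conclusions as the paper, but by a genuinely different route. The paper's proof is a single contradiction argument: it posits a strictly better $(\vpb,\vxb)$ with $\pntb(m)<1-\qnt$ on $\cM^*$, couples the "user is active" event across the two profiles, replaces $\vxs$ by the suboptimal download $\hat{x}_{n,t}(m^*)=\min\{S(m^*),\max_m\bar{x}_{n,t}(m)\}$, and then linearizes via the mean value theorem for random variables to show the cost difference $D$ is the expectation of $C'(Y_t)$ times an a.s.\ nonnegative load gap that is nonzero with probability at least $1-\prod_n\qnt>0$, whence $D>0$. You instead decompose the problem: first the observation that the objective is \emph{affine} in each $\vp_{n,t}$ over the scaled simplex (so optima may be taken at vertices, i.e.\ concentrated profiles), then a monotone-coupling exchange argument showing the concentrated item must have size $\Smin$. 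This buys a cleaner, more elementary argument (no probabilistic MVT needed), and your Step~1/Step~3 tie-breaking analysis actually yields a sharper structural fact the paper does not state — that at any optimum all supported items share a common residual $S(m)-x^*_{n,t}(m)$ — which makes the uniqueness of $\vps$ for $|\cM^*|=1$ more transparent than the paper's one-line assertion. The one place you are looser than you should be is the final "standard strict-convexity argument" for uniqueness of $\vxs$: strict convexity of $C$ composed with an affine map is only strictly convex along directions where the load distribution actually changes, and this can fail for degenerate users (e.g.\ $\qnt\in\{0,1\}$, where opposite perturbations of two users' downloads leave every slot load unchanged); you should either impose non-degeneracy of the indicators or handle those users separately (their downloads are forced to zero, or to a fixed value, by the monotonicity argument you already use). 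To be fair, the paper's own uniqueness justification ("$D<0$ for any $m^*\in\cM^*$", which even has a sign slip) addresses only $\vps$ and is no more rigorous on this point.
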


\begin{proof}
Please refer to Appendix \ref{app:joint}.
\end{proof}
Theorem \ref{th:joint}, therefore, suggests a new user profile that ensures requesting the data item with the least size if the user is to request a data item at all. 
In the next subsection, we model and study the problem under user \emph{satisfaction} constraints.

\subsection{Joint Design under User Satisfaction Constraints}
\label{sec:s_constraint}
We propose a model for capturing the economic responsiveness and service flexibilities of the user demands, which will facilitate the design. Suppose that $\vtpnt$ is the initial (given) profile of user $n$ for slot $t$, with $\qnt$ being the probability that user $n$ remains silent at slot $t$. 
We utilize the notation
\begin{equation*}
\tpint(m):=\frac{\tpnt(m)}{1-\qnt}, \quad \forall m,n,t,
\end{equation*}
to denote the probability that user $n$ requests data item $m$ at time slot $t$, given that he decides to request a data item at all. We also use $\vtpint:=(\tpint(m))_{m=1}^{M}$.

Next, we introduce a key measure and a related constraint that capture the economic responsiveness and flexibility of users in shifting their demand profile within acceptable service quality limits. 

\begin{definition}(Entropy Ball Constraint (EBC))
\label{def:EBS}
For user $n$ at time slot $t$ with initial profile $\vtpnt$ and probability of being silent $\qnt$, we say that a new profile $\vpnt$ satisfies the {\bf entropy ball constraint} if $\sum_m \pnt(m)=1-\qnt$ and
\begin{equation}
\label{eq:EBS}
\frac{\norm{\vtpnt-\vpnt}}{1-\qnt} \leq \an H(\vtpint), \quad \forall n,t,
\end{equation}
where $\an$ is a positive constant differentiating between user classes and normalizing the right hand side of \eqref{eq:EBS}, and $H(\vtpint)=-\sum_m \tpint(m)\log \tpint(m)$ is the entropy \cite{IT} of $\vtpint$.
\end{definition}  

The above metric utilizes the entropy of the choice of user $n$ under the initial profile $\vtpnt$ to capture the radius of an  $M-$dimensional ball centered at $\vtpint$. The reason behind the use of the entropy function is that the higher the uncertainty about the user demand, the higher the entropy is. Hence, the user has higher potential to modify the demand profile. This holds since users with indeterministic demand may not mind a specific data item, hence grant the  service provider more flexibility to suggest modified profiles. 

On the other hand, small entropy reflects a deterministic demand profile whereby the user is not flexible to change his profile. Yet, being deterministic already facilitate efficient proactive downloads.


Using the above EBC, the constraint set $\cF_{n,\vtpnt}$ can be written as
\begin{assumption}
\label{as:EBC_cons}
\begin{equation*}
\cF_{n,\vtpnt}=\{\vpnt:\vpnt \text{ satisfies EBC},\pnt(m)\geq 0, \forall m\}.
\end{equation*}
\end{assumption}

Note that EBC maintains a constant probability of remaining silent, $\qnt$, for all users and time slots.  Statistically, the user does not have to change his access rate.


Under Assumption \ref{as:EBC_cons}, while the constraints of \eqref{eq:Joint} are all convex in $(\vp,\vx)$, the objective function, denoted by $f_0(\vp,\vx),$ is non-convex. Thus, in contrast to the tractable structure under fully flexible demands of previous section, the characterization of a global optimal solution of \eqref{eq:Joint} is computationally intractable. Nevertheless, we next show that strict performance improvement over proactive downloads can still be guaranteed.

To see this, suppose that $\vtx$ is the optimal proactive download allocation obtained under the initial user profile $\vtp$, where $(\vtp,\vtx)$ does not satisfy the KKT conditions \cite{Boyd} of \eqref{eq:Joint}, then a point $(\vhp,\vhx)$ which satisfies $f_0(\vhp,\vhx)<f_0(\vtp,\vtx)$, as well as the KKT conditions of \eqref{eq:Joint}, can be obtained through iterative solutions to approximate convex problems.
\begin{lemma}
\label{lem:approx}
Let $\fh^{k}$ be a convex function in $(\vp,\vx)$ that replaces $f_0$ of \eqref{eq:Joint} at iteration $k$. Denote by $(\vp^{k-1},\vx^{k-1})$ the optimal solution to the resulting convex optimization problem at the $k-1^{st}$ iteration, $k=1,2,\cdots$. If
\begin{enumerate}
\item $\fh^{k}(\vp,\vx)\geq f_0(\vp,\vx)$ for all feasible $(\vp,\vx)$,
\item $\nabla \fh^{k}(\vp^{k-1},\vx^{k-1})=\nabla f_0(\vp^{k-1},\vx^{k-1})$,
\item $\fh^{k}(\vp^{k-1},\vx^{k-1})=f_0(\vp^{k-1},\vx^{k-1})$,
\end{enumerate}
$\forall k=1,\cdots$, then $f_0(\vp^{k-1},\vx^{k-1})>f_0(\vp^{k},\vx^{k}), \forall k$, and the sequence $\{(\vp^{k},\vx^{k})\}_k$ converges to a point $(\vhp,\vhx)$ which is a locally optimal solution to \eqref{eq:Joint}. 
\end{lemma}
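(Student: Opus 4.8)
The plan is to recognize this as a standard \emph{convex--concave procedure} / majorization--minimization argument and to verify the two claims (monotone descent and convergence to a KKT point) in that order. First I would establish the strict descent inequality. Fix an iteration $k$. By hypothesis (1), $\fh^{k}$ majorizes $f_0$ on the (common, convex) feasible set; by (3), the two functions agree at the previous iterate $(\vp^{k-1},\vx^{k-1})$, which is feasible. Since $(\vp^{k},\vx^{k})$ is the \emph{minimizer} of $\fh^{k}$ over the feasible set, we get the chain
\begin{equation*}
f_0(\vp^{k},\vx^{k}) \leq \fh^{k}(\vp^{k},\vx^{k}) \leq \fh^{k}(\vp^{k-1},\vx^{k-1}) = f_0(\vp^{k-1},\vx^{k-1}),
\end{equation*}
using (1) for the first inequality, optimality of $(\vp^{k},\vx^{k})$ for the second, and (3) for the equality. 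To upgrade the middle $\leq$ to a strict $<$ I would argue that if $(\vp^{k-1},\vx^{k-1})$ already minimized $\fh^{k}$, then by (2) its gradient (in the appropriate KKT sense, accounting for the constraints) would coincide with $\nabla f_0$ there, so $(\vp^{k-1},\vx^{k-1})$ would satisfy the KKT conditions of \eqref{eq:Joint} — contradicting the running assumption that the iteration has not yet reached a KKT point. Hence each step is a genuine strict improvement, $f_0(\vp^{k},\vx^{k}) < f_0(\vp^{k-1},\vx^{k-1})$ for every $k$.

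Next I would handle convergence. The sequence $\{f_0(\vp^{k},\vx^{k})\}_k$ is strictly decreasing and bounded below (the objective is a time average of $\E[C(\cdot)]$ with $C\geq 0$), hence convergent. The feasible set of \eqref{eq:Joint} is compact — it is a product of the boxes $0\le x_{n,t}(m)\le S(m)$ with the closed bounded sets $\cF_{n,\vtpnt}$ carved out by the entropy ball constraint and the simplex-type equalities — so $\{(\vp^{k},\vx^{k})\}_k$ has convergent subsequences; let $(\vhp,\vhx)$ be a subsequential limit. I would then pass to the limit in the optimality/KKT conditions of the $k$-th convex subproblem: the stationarity condition for $(\vp^{k},\vx^{k})$ as a minimizer of $\fh^{k}$ reads $-\nabla\fh^{k}(\vp^{k},\vx^{k})$ lies in the normal cone of the feasible set at $(\vp^{k},\vx^{k})$, and by properties (2)--(3) applied one step later, $\nabla\fh^{k+1}$ is pinned to $\nabla f_0$ at the iterates. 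Combining the fact that consecutive iterates get arbitrarily close (since $f_0$ values converge and, via the majorization gap $\fh^{k}(\vp^{k-1},\vx^{k-1})-\fh^{k}(\vp^{k},\vx^{k})\to 0$, the surrogate decreases to zero), with continuity of $\nabla f_0$ and closedness of the normal-cone map, yields that $(\vhp,\vhx)$ satisfies $-\nabla f_0(\vhp,\vhx)$ in the normal cone there, i.e. the KKT conditions of \eqref{eq:Joint}. Since the constraints are convex and qualified (Slater holds: the initial profile $\vtp$ with a small nonzero proactive allocation is strictly feasible), KKT together with local second-order behavior gives that $(\vhp,\vhx)$ is a locally optimal solution.

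The main obstacle I anticipate is the convergence-of-the-whole-sequence and limit-of-KKT step rather than the descent inequality: one really only gets convergence of subsequences to KKT points in general, and making the statement ``the sequence converges to a locally optimal point'' rigorous requires either an extra isolatedness/uniqueness argument for the limit point or a careful argument that all subsequential limits are KKT points with the same objective value and that the surrogate gap vanishing forces $\|(\vp^{k+1},\vx^{k+1})-(\vp^{k},\vx^{k})\|\to 0$. I would therefore spend most of the proof effort there — invoking strict convexity of $\fh^{k}$ (which one would build into the construction of the surrogate, e.g. a first-order Taylor majorization of the bilinear $\vx$--$\vp$ cross terms plus the strictly convex $C$) to get uniqueness of each subproblem's solution and Lipschitz-type stability of the argmin, and then using the Zangwill-type closed-map / Ostrowski argument to conclude. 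The descent part, by contrast, is a three-line sandwich that I expect to dispatch quickly.
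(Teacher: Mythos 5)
The paper does not actually prove this lemma: it simply remarks that the statement is a special case of Theorem~1 in \cite{Old} (the Marks--Wright inner-approximation algorithm) and moves on. Your proposal therefore does strictly more work than the paper, and it is the right work: the three-line sandwich $f_0(\vp^{k},\vx^{k})\leq \fh^{k}(\vp^{k},\vx^{k})\leq \fh^{k}(\vp^{k-1},\vx^{k-1})=f_0(\vp^{k-1},\vx^{k-1})$ is exactly the descent mechanism underlying the cited theorem, and your use of condition (2) together with the paper's standing assumption that $(\vtp,\vtx)$ does not already satisfy the KKT conditions of \eqref{eq:Joint} is the correct way to upgrade the middle inequality to the strict decrease $f_0(\vp^{k-1},\vx^{k-1})>f_0(\vp^{k},\vx^{k})$. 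Your flagged concern about the second half of the claim is also well placed: arguments of the Marks--Wright type deliver only that accumulation points of the iterate sequence (which exist by compactness of the feasible set) are KKT points of \eqref{eq:Joint}; promoting this to convergence of the \emph{whole} sequence to a single locally optimal point requires exactly the extra machinery you describe --- the vanishing surrogate gap forcing $\lVert(\vp^{k},\vx^{k})-(\vp^{k-1},\vx^{k-1})\rVert\to 0$, strict convexity of each $\fh^{k}$ for uniqueness of the subproblem minimizers, and a Zangwill/Ostrowski closed-map argument --- none of which the paper supplies or even acknowledges. In short, your route is a self-contained reconstruction of the citation the paper leans on, and it is more candid than the paper about where the full-sequence convergence claim is fragile.
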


The above lemma is a special case of Theorem 1 in \cite{Old} which aims at providing local optimal solutions to non-convex optimization problems.
\begin{corollary}
Starting from an initial condition $(\vp^0,\vx^0)=(\vtp,\vtx)$, a sequence of approximate functions $\{\fh^{k}\}$ generated as in Lemma \ref{lem:approx} and resulting in a KKT-satisfying point $(\vhp,\vhx)$ leads to $f_0(\vtp,\vtx)>f_0(\vhp,\vhx)$.
\end{corollary}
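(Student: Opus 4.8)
The plan is to invoke Lemma~\ref{lem:approx} with the specific initialization $(\vp^0,\vx^0)=(\vtp,\vtx)$ and read off what it says at the first iteration. Recall that $\vtx$ is, by construction, the optimal proactive download allocation under the fixed initial profile $\vtp$; in particular $\vtx$ minimizes $f_0(\vtp,\cdot)$ over the download polytope, so $(\vtp,\vtx)$ is a stationary point of that restricted problem. The key hypothesis to exploit is that $(\vtp,\vtx)$ does \emph{not} satisfy the KKT conditions of the full joint problem~\eqref{eq:Joint} (over both $\vp$ and $\vx$). Since it is already stationary in the $\vx$-block, non-stationarity must come from the $\vp$-block, i.e.\ there is a feasible descent direction in $\vp$ at $(\vtp,\vtx)$.

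First I would spell out the chain of strict inequalities that Lemma~\ref{lem:approx} produces. By conclusion of the lemma, $f_0(\vp^{k-1},\vx^{k-1})>f_0(\vp^{k},\vx^{k})$ for all $k\ge 1$; chaining these from $k=1$ up to any $K$ gives $f_0(\vp^0,\vx^0)>f_0(\vp^{K},\vx^{K})$, and since $\{(\vp^k,\vx^k)\}$ converges to $(\vhp,\vhx)$ with $f_0$ continuous (the cost $C$ is smooth, hence so is the finite-sum expectation $\Co$), we get $f_0(\vp^0,\vx^0)\ge f_0(\vhp,\vhx)$. Combining, $f_0(\vtp,\vtx)=f_0(\vp^0,\vx^0)> f_0(\vp^1,\vx^1)\ge\cdots\ge f_0(\vhp,\vhx)$, which is the claimed $f_0(\vtp,\vtx)>f_0(\vhp,\vhx)$. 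So the corollary is essentially an unwinding of the lemma, provided the first strict decrease actually fires.

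The one point that needs genuine justification — and which I expect to be the main obstacle — is precisely \emph{why the first strict inequality $f_0(\vp^0,\vx^0)>f_0(\vp^1,\vx^1)$ is strict rather than an equality}. The lemma's conclusion is stated as a strict decrease, but a surrogate-minimization step can stall at a KKT point: if $(\vp^0,\vx^0)$ were already KKT-optimal for~\eqref{eq:Joint}, then by properties (2)--(3) it would also minimize $\fh^1$, giving $(\vp^1,\vx^1)=(\vp^0,\vx^0)$ and no decrease. This is exactly where the hypothesis ``$(\vtp,\vtx)$ does not satisfy the KKT conditions of~\eqref{eq:Joint}'' is used: because $\nabla\fh^1(\vp^0,\vx^0)=\nabla f_0(\vp^0,\vx^0)$ and $\fh^1\ge f_0$ with equality at $(\vp^0,\vx^0)$, any feasible descent direction for $f_0$ at $(\vp^0,\vx^0)$ is also a feasible descent direction for the convex surrogate $\fh^1$; since $(\vp^0,\vx^0)$ is not KKT for~\eqref{eq:Joint} such a direction exists, so the convex minimizer $(\vp^1,\vx^1)$ of $\fh^1$ satisfies $\fh^1(\vp^1,\vx^1)<\fh^1(\vp^0,\vx^0)=f_0(\vp^0,\vx^0)$, and then $f_0(\vp^1,\vx^1)\le\fh^1(\vp^1,\vx^1)<f_0(\vp^0,\vx^0)$.

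Finally I would remark that the three surrogate conditions of Lemma~\ref{lem:approx} are met by the standard inner-convexification construction (freeze the bilinear cross-terms $x_{n,t}(m)\Int(m)$ in $\vp$ around the current iterate, or add a proximal/convex upper-bounding term tailored so that it touches $f_0$ tangentially at the current point), so that the hypotheses of the lemma are genuinely available starting from $(\vtp,\vtx)$; constructing one explicit such $\fh^k$ is routine and can be deferred to the implementation. The net statement is that demand shaping via this iterative convex-surrogate scheme yields a strictly smaller time-average expected cost than proactive downloads alone, as claimed.
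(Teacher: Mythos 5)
Your proposal is correct and follows essentially the same route as the paper, which states this corollary as an immediate consequence of Lemma \ref{lem:approx}: chain the strict per-iteration decreases $f_0(\vp^{k-1},\vx^{k-1})>f_0(\vp^{k},\vx^{k})$ from the initialization $(\vtp,\vtx)$ and pass to the limit using continuity of $f_0$. Your additional care in showing that the \emph{first} step is strictly decreasing --- using the standing hypothesis that $(\vtp,\vtx)$ is not a KKT point of \eqref{eq:Joint} together with the gradient-matching and majorization properties of $\fh^{1}$ --- is exactly the justification the paper leaves implicit in the discussion preceding Lemma \ref{lem:approx}.
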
 
In the following theorem, we suggest a general approximation to $f_0$ of \eqref{eq:Joint} at each new iteration $k$ that converges to a locally optimal solution.

\begin{theorem}
\label{th:gen_approx}
For $f_0$ being the objective function of \eqref{eq:Joint}, the approximate function  
\begin{equation}
\label{eq:approx_f}
\fh^{k}(\vp,\vx)=f_0(\vp^{k-1},\vx)+\sum_{m,n,t}\frac{\partial f_0(\vp,\vx^{k-1})}{\partial \pnt(m)}\biggr |_{\vp=\vp^{k-1}}\cdot\pnt(m)
\end{equation}
at iteration $k\geq 1$ is convex in $(\vp,\vx)$, further, the sequence of solutions to the problem resulting from replacing $f_0$ with $\{\fh^{k}\}_k$ converges to a locally optimal solution of \eqref{eq:Joint}. 
\end{theorem}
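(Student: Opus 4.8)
The plan is to reduce the statement to Lemma~\ref{lem:approx}: I would (i) check that the $\fh^{k}$ defined in \eqref{eq:approx_f} is convex in $(\vp,\vx)$, and (ii) verify that it meets the three requirements of that lemma at the current iterate $(\vp^{k-1},\vx^{k-1})$. The monotone descent $f_0(\vp^{k-1},\vx^{k-1})>f_0(\vp^{k},\vx^{k})$ and the convergence of $\{(\vp^{k},\vx^{k})\}_k$ to a locally optimal (KKT) point of \eqref{eq:Joint} then follow directly from that lemma.

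\emph{Convexity.} In \eqref{eq:approx_f} the first term $f_0(\vp^{k-1},\vx)$ does not involve $\vp$, and for the frozen profile $\vp^{k-1}$ the argument of $C$ inside each expectation in $\Co(\cdot)$ is affine in $\vx$; since $C$ is convex and both the expectation and the finite average over the $T$ slots preserve convexity, this term is convex in $\vx$, hence jointly convex in $(\vp,\vx)$. The second term does not involve $\vx$ and is a linear form in $\vp$ whose coefficients $\partial f_0(\vp,\vx^{k-1})/\partial\pnt(m)|_{\vp=\vp^{k-1}}$ are constants, being evaluated at the fixed iterate, so it is affine, hence convex, in $\vp$. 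A sum $g_1(\vx)+g_2(\vp)$ of convex functions is jointly convex, so $\fh^{k}$ is convex in $(\vp,\vx)$.

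\emph{The three conditions of Lemma~\ref{lem:approx}.} Conditions~(2) and~(3), i.e.\ agreement of the gradient and of the value with $f_0$ at $(\vp^{k-1},\vx^{k-1})$, hold essentially by construction: freezing $\vp=\vp^{k-1}$ in $\fh^{k}$ returns $f_0(\vp^{k-1},\cdot)$ exactly, so the value and the $\vx$-part of the gradient at $\vx^{k-1}$ match, while the $\vp$-part of $\fh^{k}$ is the affine model of $f_0(\cdot,\vx^{k-1})$ that is tangent at $\vp^{k-1}$ and agrees with it there in value, so it reproduces $\nabla_{\vp}f_0(\vp^{k-1},\vx^{k-1})$. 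Verifying these is a routine differentiation, using only that $\E_{\vp}[C(\cdot)]$ is smooth in $\vp$ because the law of the demand indicators depends multiaffinely on the entries $\pnt(m)$. The substantive part is Condition~(1), the global majorization $\fh^{k}(\vp,\vx)\ge f_0(\vp,\vx)$ over the feasible set of \eqref{eq:Joint}. The obstacle is that $\vp\mapsto f_0(\vp,\vx)$ is only multiaffine across the user blocks $\{\vpnt\}_n$ --- it is neither convex nor concave --- so a plain linearization of its $\vp$-dependence does not majorize it automatically, and the gap $f_0(\vp,\vx)-f_0(\vp^{k-1},\vx)$ still depends on $\vx$. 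My approach would be to write $\E_{\vp}[C(L_t+\sum_{m,n}(x_{n,t+1}(m)-x_{n,t}(m)\Int(m)))]$ as $\sum_{\iota}\Pr_{\vp}(\iota)\,\psi(\iota)$, a sum over joint demand realizations $\iota$ in which $\Pr_{\vp}(\iota)$ is multiaffine in $\vp$ and the payoff $\psi(\iota)\ge 0$ is bounded and $\vp$-free, subtract its first-order $\vp$-model, and bound the remainder --- carried entirely by the cross-user interaction terms --- using the monotonicity and convexity of $C$, the almost-sure boundedness of the loads (via $\Smin$, $\Smax$, the box constraints $0\le x_{n,t}(m)\le S(m)$, and $\qnt<1$), and the restricted radius of the entropy ball \eqref{eq:EBS} to control the curvature over the feasible region. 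I expect this majorization step to be the crux of the proof; once it is in place, Lemma~\ref{lem:approx} applies verbatim and delivers the stated descent and convergence to a locally optimal solution of \eqref{eq:Joint}.
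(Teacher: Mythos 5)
Your reduction to Lemma \ref{lem:approx} and your convexity argument coincide with the paper's proof; the divergence, and the gap, lies in how you handle the lemma's three conditions. First, Condition (3) does not hold ``by construction'' as you assert: evaluating \eqref{eq:approx_f} at $(\vp^{k-1},\vx^{k-1})$ yields $f_0(\vp^{k-1},\vx^{k-1})+\sum_{m,n,t}\frac{\partial f_0(\vp,\vx^{k-1})}{\partial \pnt(m)}\big|_{\vp=\vp^{k-1}}\,\pnt^{k-1}(m)$, which differs from $f_0(\vp^{k-1},\vx^{k-1})$ by a generally nonzero constant, because the linear term in \eqref{eq:approx_f} is not the anchored Taylor model $\sum_{m,n,t}\frac{\partial f_0}{\partial \pnt(m)}\big|_{\vp=\vp^{k-1}}(\pnt(m)-\pnt^{k-1}(m))$. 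Second, the global majorization you single out as ``the crux'' is left as a plan rather than a proof, and as a pointwise inequality it is false in general: $\vp\mapsto f_0(\vp,\vx)$ is multiaffine across the user blocks, and the tangent linearization of a multiaffine function at an interior point does not dominate it (already $p_1p_2$ on $[0,1]^2$ linearized at the origin gives the zero function, which lies below $p_1p_2$). Exploiting the entropy-ball radius cannot rescue a pointwise inequality that fails on the bilinear cross terms themselves.

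The paper resolves both issues by showing that Conditions (1) and (3) are dispensable rather than by verifying them. Since $f_0$ is continuous on the compact feasible set it attains a finite maximum $U$, and adding $U$ to $\fh^{k}$ enforces the majorization without altering the minimizer, so Condition (1) is not needed; and Condition (3) is required in the Marks--Wright framework only when non-convex \emph{constraints} are being convexified, whereas here only the objective is approximated, so gradient agreement (Condition (2), which your routine differentiation does establish) suffices for convergence to a KKT point of \eqref{eq:Joint}. So the correct move at your crux is not to prove the majorization but to argue it away; if you instead insist on a genuine majorize--minimize descent argument, you would have to redefine $\fh^{k}$ to include the anchoring constants, which is not the function the theorem actually states.
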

\begin{proof}
Please refer to Appendix \ref{app:gen_approx}.
\end{proof}

The approximate function \eqref{eq:approx_f} can now be used to replace $f_0$ of \eqref{eq:Joint} at iteration $k\geq 1$. Starting with $(\vp^0,\vx^0)=(\vtp,\vtx)$, the successive solutions to approximate optimization problems with $f_0$ being replaced by $\fh^{k}$ of \eqref{eq:approx_f} converges to a point $(\vhp,\vhx)$ with $f_0(\vhp,\vhx)<f_0(\vtp,\vtx)$. Further, the following conclusion can be made about $(\vhp,\vhx)$.

\begin{theorem}
\label{th:boundary}
Suppose that $\{\vp_{n,t}:\norm{\vtpnt-\vhpnt}<\an {1-\qnt} H(\vtpint)\}$ is a proper subset of $\{\vp_{n,t}:\vp_{n,t}\geq 0, \sum_{m=1}^{M}\pnt(m)=1-\qnt\}$, then the locally optimal solution $(\vhp,\vhx)$ of \eqref{eq:Joint} satisfies
\begin{equation}
\label{eq:boundary}
\frac{\norm{\vtpnt-\vhpnt}}{1-\qnt} = \an H(\vtpint), \quad \forall n,t.
\end{equation}
\normalsize
In other words, the locally optimal profile $\vhpnt$ lies on the boundary of the EBC region.
\end{theorem}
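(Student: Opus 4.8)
The plan is to argue by contradiction: suppose the locally optimal solution $(\vhp,\vhx)$ produced by the iterative scheme has, for some user $n_0$ and slot $t_0$, a strictly interior profile $\vhpntz$, i.e. $\norm{\vtpntz-\vhpntz}/(1-\qntz) < \anz H(\vtpintz)$. Since by hypothesis the open EBC ball (intersected with the simplex $\{\vpntz\geq 0,\ \sum_m\pntz(m)=1-\qntz\}$) is a \emph{proper} subset of that simplex, there is room to move $\vhpntz$ a little while staying feasible, and the contradiction will come from showing that the objective can then be strictly decreased, violating local optimality.

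The key steps, in order: (i) Write out the KKT stationarity conditions of \eqref{eq:Joint} at $(\vhp,\vhx)$ with respect to the variables $\pntz(m)$; the relevant multipliers are those of the EBC inequality \eqref{eq:EBS}, of the equality $\sum_m\pntz(m)=1-\qntz$, and of the nonnegativity constraints $\pntz(m)\geq 0$. Complementary slackness forces the EBC multiplier to be zero when the constraint is inactive. (ii) Conclude from this that, restricted to the affine hull of the simplex, the reduced gradient of $f_0$ in the $\vpntz$-block must vanish except possibly on coordinates where $\pntz(m)=0$ — i.e. $(\vhp,\vhx)$ is stationary for $f_0$ over the simplex alone, ignoring the EBC ball. (iii) Show this cannot happen: using the explicit form of the objective of \eqref{eq:Joint} (linear in each $\Int(m)$, hence the map $\vpntz\mapsto f_0$ has a gradient component $\partial f_0/\partial\pntz(m)$ that, by strict convexity and monotonicity of $C$, is a nonconstant affine-in-$p$ expression across the data items of differing sizes $S(m)$), the stationarity of $f_0$ on the simplex would force $\vhpntz$ to concentrate on the minimum-size item exactly as in Theorem \ref{th:joint}; but then $\vhpntz$ would generically be a vertex of the simplex, which — since the EBC ball is a \emph{proper} subset of the simplex — lies strictly outside the EBC ball, contradicting feasibility of $(\vhp,\vhx)$. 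Hence the EBC constraint must be active, giving \eqref{eq:boundary}. (iv) Finally, note that if instead the reduced gradient vanished only because we are at a simplex vertex forced by nonnegativity, the same properness hypothesis rules this out, so no loophole remains.

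The main obstacle I anticipate is step (iii): pinning down precisely why $f_0$, as a function of the single block $\vpntz$ with the other profiles and all of $\vx$ held fixed at their optimal values, cannot have an interior simplex stationary point. One must use that $C$ is \emph{strictly} convex and strictly increasing so that $C'$ is strictly increasing, making $\E[C'(L_t)\Int(m)\text{-terms}]$ strictly monotone in the load contributed by item $m$, and then exploit the size disparity $\Smin \le S(m) \le \Smax$ to show the partial derivatives $\partial f_0/\partial\pntz(m)$ cannot all be equal on the active set — which is exactly the mechanism behind Theorem \ref{th:joint}. A clean way to do this is to invoke Theorem \ref{th:joint} itself: the unconstrained-over-the-simplex minimizer of $f_0$ in the $\vpntz$-block is the degenerate profile \eqref{eq:pjoint}, a vertex; local optimality of an interior point would contradict convexity of $f_0$ in the $\vpntz$-block alone (the objective is convex in $\vp$ for fixed $\vx$), since a convex function on a polytope attaining a local min at an interior point is constant along some direction to the boundary, and the properness assumption guarantees that direction exits the EBC ball before leaving the simplex. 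Packaging this convexity-in-$\vp$-for-fixed-$\vx$ observation carefully is the crux; the rest is bookkeeping with KKT multipliers.
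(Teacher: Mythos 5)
Your high-level mechanism is the right one, and it matches what the paper itself hints at (the text after the theorem notes that the proof in \cite{TR} ``does not depend on the structure of the constraint region,'' which points to the fact that $f_0(\cdot,\vhx)$ is \emph{affine} in each block $\vpnt$ separately, since the expectation over the categorical variable $\Int$ is linear in the probabilities $\pnt(m)$; a non-constant affine function has no relative-interior local minimizer, and the properness hypothesis guarantees that a point with strict EBC inequality is a relative-interior point of $\cF_{n,\vtpnt}$). However, your step (iii) — which you correctly identify as the crux — has a genuine gap. If the reduced gradient of $f_0$ in the $\vpntz$-block vanishes on the hyperplane $\sum_m\pnt(m)=1-\qnt$, the affine objective is \emph{constant} on that block, so the interior point is exactly as good as any boundary point; nothing ``forces $\vhpntz$ to concentrate on the minimum-size item,'' and no contradiction with feasibility arises. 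Conversely, if the reduced gradient does not vanish, the interior point is already not a local minimizer and you are done without ever invoking Theorem \ref{th:joint} or the vertex structure. So the appeal to Theorem \ref{th:joint} does no work either way; the entire burden is on excluding the degenerate constant case.

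Your proposed exclusion — that the size disparity $\Smin\le S(m)\le\Smax$ together with strict monotonicity of $C'$ makes the partials $\partial f_0/\partial\pnt(m)$ unequal — does not hold as stated. The partial with respect to $\pnt(m)$ depends on the \emph{residual} load $S(m)-\hat{x}_{n,t}(m)$ after the proactive download, not on $S(m)$ itself, and the optimal $\vhx$ can equalize these residuals across items of different sizes (a water-filling-type outcome), in which case all partials coincide and the objective is constant in $\vpntz$. Ruling this out requires an argument about the structure of the optimal proactive downloads (e.g., that the box constraints $0\le x_{n,t}(m)\le S(m)$ and the marginal cost incurred in slot $t-1$ prevent full equalization), or an explicit non-degeneracy assumption; your sketch supplies neither. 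Two smaller slips: the general principle you invoke (``a convex function on a polytope attaining a local min at an interior point is constant along some direction to the boundary'') is false for convex functions (consider $\norm{x}^2$ on a ball about the origin) and is only salvaged here by affineness; and even granting constancy along a direction, reaching the boundary along a direction of constancy shows only that \emph{some} minimizer lies on the EBC sphere, not that the particular $\vhpnt$ returned by the scheme does, which is what \eqref{eq:boundary} asserts. Since the paper defers its own proof to Appendix I of the technical report, a line-by-line comparison is not possible, but these gaps would need to be closed for your argument to stand on its own.
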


\begin{proof}
Please refer to Appendix I in \cite{TR}. 
\end{proof}

Theorem \ref{th:boundary} motivates the design of efficient low complexity schemes for joint proactive download and demand shaping in large scale systems.

In \cite{TR}, the proof of Theorem \ref{th:boundary} does not depend on the structure of the constraint region $\cF_{n,\vtpnt}$ for any $n,t$. Consequently, any optimal solution to the formulation \eqref{eq:P_f_horizon} always yields modified demand profiles that lie on the \emph{boundary} of the constraint region $\cF_{n,\vtpnt}$. {As the service provider aims to maximally utilize the user flexibility by pushing it to the boundary, we note that each user has the ability to control his own flexibility region (e.g., through the parameter $\alpha_n$ in the EBC case) and the service provider should offer sufficient incentives to trade for flexibilities. The main purpose of presenting the demand shaping aspect of the problem is to reveal the potential for additional cost minimization to be leveraged by changing the users' preferences over data item rather than changing their activities over time.}

\subsection{Impact on Cost Reduction}
With the additional cost minimization capabilities offered by demand shaping, it is worthwhile to explore the effect of demand shaping on the cost reduction metric introduced in Section \ref{sec:pro_down}. In particular, we use the notation
\begin{equation*}
\Delta C^*(N)=C^{\cN}(N,\vtp)-C^{\cP}(N)
\end{equation*} 
to denote the \emph{maximum} possible cost reduction obtained through proactive content download and demand shaping. The terms $C^{\cN}(N,\vtp)$, $C^{\cP}(N)$ are defined in \eqref{eq:N_f_horizon} (assuming initial demand profile $\vtp$), \eqref{eq:P_f_horizon}, respectively.

By adding and subtracting  $C^{\cP}(N,\vtp)$ (defined in \eqref{eq:CP_avg}), we can write $\Delta C^*(N)=\Delta C(N)+\Delta C^{\cP}(N)$, where
\begin{equation}
\Delta C^{\cP}(N)=C^{\cP}(N, \vtp)-C^{\cP}(N)
\end{equation}
is the expected cost reduction gain reaped through demand shaping. 

In this subsection, we present an additional scaling result of the cost reduction  $\Delta C^{\cP}(N)$ with $N$.
\begin{theorem}[Demand shaping gain]
\label{th:DS_gain}
Suppose that there exists a time slot $\tz$ and a user $\nz$ for which: $\sup_n \{\tilde{q}_{n,\tz}\}<1-\epsilon$ for some $\epsilon>0$, and that $\cF_{\nz,\vtpntz}$ contains a profile $\vbpntz$ that satisfies
\begin{equation}
\label{eq:load_condition}
\sum_{m=1}^{M}(S_m-\tilde{x}_{\nz,\tz})(\tilde{P}_{\nz,\tz}(m)-\bar{P}_{\nz,\tz}(m))>0.
\end{equation}
Then, there exists a constant $\gamma>0$ such that
\begin{equation}
\liminf_{N\to\infty}\frac{\Delta C^{\cP}(N)}{C'(\gamma N)}>0.
\end{equation}
\end{theorem}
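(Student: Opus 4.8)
The plan is to show that demand shaping by itself already produces a feasible point whose proactive cost is strictly below $C^{\cP}(N,\vtp)$, and that the gap scales like $C'(\gamma N)$. The key idea is the same that underlies Policy~A in Section~\ref{sec:pro_down}: we do not need the global optimum $\vp^*$; we only need \emph{one} feasible modification of the demand profile at slot $\tz$ for user $\nz$ that lowers the load in a way proactive downloads can exploit. So I would construct an explicit suboptimal joint policy: leave every user's profile unchanged except $\vtpntz$, which we tilt toward $\vbpntz$ along the segment $\vp_{\nz,\tz}(\lambda) = (1-\lambda)\vtpntz + \lambda\vbpntz$ for small $\lambda>0$ (this stays in $\cF_{\nz,\tz}$ by convexity of the constraint region and preserves $\qntz$); and keep the proactive downloads essentially at the Policy-A values $\tilde{x}_t$ used for $\vtp$, possibly with an infinitesimal correction. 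Then $C^{\cP}(N) \le C^{\cJ}(\text{this policy})$, so it suffices to lower-bound $C^{\cP}(N,\vtp)$ minus the cost of this policy.

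The next step is to compute the first-order effect of the perturbation $\lambda$ on the objective $\Co(\cdot)$. Differentiating the expected one-cycle cost in $\lambda$ at $\lambda=0$, only the slot-$\tz$ load term and the slot-($\tz{-}1$)-to-$\tz$ proactive-carryover term move. The load at slot $\tz$ changes by $\sum_m S(m)\bigl(\bar{P}_{\nz,\tz}(m)-\tilde{P}_{\nz,\tz}(m)\bigr)$ in expectation, and the amount that was proactively carried and then ``consumed'' changes by $\sum_m \tilde{x}_{\nz,\tz}\bigl(\bar{P}_{\nz,\tz}(m)-\tilde{P}_{\nz,\tz}(m)\bigr)$; combining, the net first-order change in the expected slot-$\tz$ residual load is $-\sum_m (S_m-\tilde{x}_{\nz,\tz})(\tilde{P}_{\nz,\tz}(m)-\bar{P}_{\nz,\tz}(m))$, which is strictly negative precisely by hypothesis \eqref{eq:load_condition}. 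Multiplying by $C'$ evaluated at the slot-$\tz$ load gives a strictly negative derivative of $\Co$ in $\lambda$, of magnitude $\Theta\bigl(\E[C'(L_{\tz})]\bigr)$. Since $\sup_n \tilde q_{n,\tz}<1-\eps$, the load $L_{\tz}$ grows linearly in $N$ almost surely (at least $\eps \Smin$ per active user in expectation, and concentrated), so $\E[C'(L_{\tz})] = \Theta(C'(\gamma N))$ for a suitable $\gamma>0$, by the monotonicity and convexity of $C$ — exactly the argument already used to derive Lemma~\ref{lem:bounded_limsup} and the remark following Theorem~\ref{th:finite_B}.

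To convert a negative \emph{derivative} at $\lambda=0$ into a genuine strictly-positive cost gap, I would fix a small but $N$-independent $\lambda_0>0$ and Taylor-expand: $C^{\cP}(N,\vtp) - C^{\cJ}(\vp(\lambda_0),\tilde{\vx}) \ge -\lambda_0 \frac{d}{d\lambda}\Co|_{\lambda=0} - \frac{\lambda_0^2}{2}\sup_{\lambda\in[0,\lambda_0]}\bigl|\frac{d^2}{d\lambda^2}\Co\bigr|$. The second derivative involves $C''$ times bounded load-perturbation factors, so it is $O(C''(\hat L))$ for the largest attainable load; choosing $\lambda_0$ proportional to the ratio of the $C'$-scale to the $C''$-scale (which is bounded away from $0$ for the standard cost classes, e.g.\ constant for exponentials and $\Theta(1/N)$ for polynomials — in the polynomial case one picks $\lambda_0$ proportional to $1/N$ and the gain is still $\Theta(N^{d-1})=\Theta(C'(\gamma N))$) keeps the linear term dominant. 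Dividing by $C'(\gamma N)$ and taking $\liminf_{N\to\infty}$ then gives a strictly positive constant, using $C^{\cP}(N)\le C^{\cJ}(\vp(\lambda_0),\tilde{\vx})$ to pass from the explicit policy to the true minimum.

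The main obstacle is the second bullet of the calculation: making the curvature control \eqref{eq:load_condition}-compatible \emph{uniformly} in $N$, i.e.\ showing that the step size $\lambda_0$ can be chosen $N$-independently (or with a known $N$-dependence) so that the quadratic remainder never swamps the linear gain across the whole family of admissible cost functions. This requires a clean bound relating $\sup_\lambda |\frac{d^2}{d\lambda^2}\Co|$ to $\frac{d}{d\lambda}\Co|_{\lambda=0}$, which is where the convexity of $C$ and the almost-sure linear growth of the loads must be combined carefully; the rest is a first-order perturbation computation and an invocation of the load-concentration estimates already established for the scaling results in Section~\ref{sec:pro_down}.
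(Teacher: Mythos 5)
Your construction is the right one --- freeze every profile except $\vp_{\nz,\tz}$, move it toward $\vbpntz$, keep the proactive downloads computed under $\vtp$, and lower-bound the resulting cost drop by a first-order term of size $\Theta(\E[C'(L_{\tz})])=\Theta(C'(\gamma N))$ via the strong law of large numbers. That is essentially what the paper does. However, the step you flag as the ``main obstacle'' --- controlling the quadratic remainder in $\lambda$ so that it does not swamp the linear gain --- rests on a misreading of where $\lambda$ enters, and your proposed resolution would actually break the theorem. For fixed $\vx$, the one-cycle expected cost is \emph{affine} in $\vp_{\nz,\tz}$: the profile enters only as the mixture weights of a finite expectation, i.e. $\E[C(\cdot)]=\sum_m p_{\nz,\tz}(m)\,\E[C(\cdot)\mid \I_{\nz,\tz}(m)=1]+\qntz\,\E[C(\cdot)\mid \text{silent}]$, so $\frac{d^2}{d\lambda^2}\Co\equiv 0$ along your segment. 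There is no curvature to control, and you may take $\lambda_0=1$, which is exactly the paper's choice (a full switch to $\vbpntz$). Conversely, if a genuine second-order term had forced you to take $\lambda_0=\Theta(1/N)$ in the polynomial case, the gain would be $\lambda_0\cdot\Theta(N^{d-1})=\Theta(N^{d-2})$, not $\Theta(C'(\gamma N))=\Theta(N^{d-1})$; shrinking the step with $N$ shrinks the gain proportionally, so that route cannot deliver the stated scaling. A second, smaller issue: restricting to small $\lambda$ requires the segment $(1-\lambda)\vtpntz+\lambda\vbpntz$ to lie in $\cF_{\nz,\vtpntz}$, which needs a convexity assumption on the satisfaction region that the theorem does not make; the endpoint $\lambda=1$ is the only point guaranteed feasible by hypothesis.

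On the parts you treat informally, the paper's execution is worth noting. It does not differentiate; it applies the probabilistic mean value theorem to the full cost difference, obtaining $C'(Y_{\tz})$ at a random intermediate point times the load difference, then removes user $\nz$'s contribution from the argument of $C'$ (which can only decrease it, by monotonicity) so that the indicator $\tilde{\I}_{\nz,\tz}-\hat{\I}_{\nz,\tz}$ decouples from the $C'$ factor and the expectation splits into $\bigl(\sum_m (S_m-\hat{x}_{\nz,\tz}(m))(\tilde{P}_{\nz,\tz}(m)-\bar{P}_{\nz,\tz}(m))\bigr)\cdot\E[C'(\cdot)]$; the first factor is positive by \eqref{eq:load_condition} and the second is handled by Kolmogorov's SLLN exactly as you anticipate. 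One technicality you omit and would need: the paper caps $\hat{x}_{n,\tz}(m)=\min\{\tilde{x}_{n,\tz}(m),S_m-r\}$ for some $r>0$ so that the per-user residual loads $S_m-\hat{x}_{n,\tz}(m)$ are bounded away from zero, which is what makes the a.s.\ limit $\gamma=\lim_N \frac{1}{N}\sum_{m}\sum_{n\neq\nz}(S_m-\hat{x}_{n,\tz}(m))\tilde{\I}_{n,\tz}(m)$ strictly positive under $\sup_n\{\tilde{q}_{n,\tz}\}<1-\eps$. Also note that the $\tilde{x}_{\nz,\tz}$ in the theorem is the optimal proactive download under $\vtp$, not the equal-allocation Policy~A quantity $\tilde{x}_t$ from Section~\ref{sec:pro_down}; the two should not be conflated.
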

The quantity $\tilde{\bf x}_{n,t}=(\tilde{x}_{n,t}(m))_m$ is the optimal proactive download under the original demand profile $\vtpnt$.

\begin{proof}
Please refer to Appendix \ref{app:DS_gain}.
\end{proof}

It can be seen from Condition \eqref{eq:load_condition} that the expected load for user $\nz$ under $\vbpntz$ is strictly smaller than that under $\vtpntz$. {Thus, even if user $\nz$ has been found inactive in the proactive download step, and consequently has not received any proactive downloads, the demand shaping step can potentially increase the certainty about his future demand, render him active, and assign him positive proactive download. Yet, as long as the user can slightly shift his preferences to decrease his expected load in slot $\tz$, e.g., from YouTube with the majority of video content to Facebook with expected less load due to the text- and picture-based content, the resulting} gain in this scales with the total number of users as $C'(\gamma N)$, at least. Thus, if the number of users that satisfy \eqref{eq:load_condition} also grows with $N$ according to some function $u(N)$, then the leveraged gain scales as $u(n)C'(\gamma N)$.

Not only does the result established in Theorem \ref{th:DS_gain} add to the cost reduction gains beyond proactive downloads, it also provides a simple design metric, Condition \eqref{eq:load_condition}, that can achieve significant gains with expanding customer base.

The final step in the proposed framework concerns the modification of the users' profiles to the new ones obtained through constrained cost minimization. Towards this end, we consider a new recommendation system that sets slightly different ratings to data item from those given by the users' themselves.

\section{Data Item Recommendation Scheme}
\label{sec:Recom}
Upon the calculation of the locally optimal demand profile $\vhp$, the service provider has to assign new valuations, $\vv=(\vv_{n,t})_{n,t}$, so that the users {react to them by following the modified profiles}. However, following a user-satisfaction based techniques that rely only on the user preferences and valuations (cf. \cite{CF1}-\cite{CF6}), the service provider has originally assigned a rating of $\rnm$ for data item $m$ to user $n$ (which is independent of the dynamics of the varying network load), but the requirements of the new demand profile necessitate a valuation of $\vnt(m)$ to be assigned to data item $m$ and user $n$ and time slot $t$. We presume that the new rating vector $\vv_{n,t}$ must be as close as possible, in the Euclidean distance sense, to the original rating vector $\vr_n=(\rnm)_{m=1}^{M}$, while achieving the new demand profile. Fig. \ref{fig:Demand_shaping} depicts a block diagram of the proposed data item valuation scheme.
\begin{figure}[ht]
	\centering
		\includegraphics[width=0.45\textwidth]{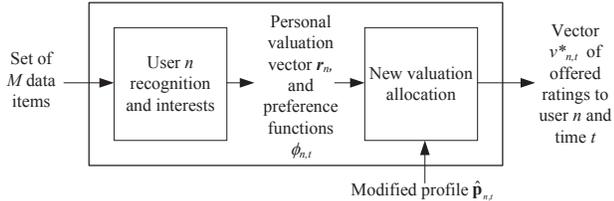}
	\caption{Block diagram of data item valuation system for a user $n$ at time slot $t$. The determination of the user interests applies through learning algorithms.}
	\label{fig:Demand_shaping}
\end{figure}

As has been hypothesized in Section \ref{sec:sys_mod}, there exists a function $\phi_{m,t}:[0,1]^M\to [0,1]$ that determines the user profile based on the offered ratings to the data items. Now, assuming that the initial profile for user $n$ at slot $t$ is $\vtpnt=(\tpnt(m))_{m=1}^{M}$, then according to $\phi_{m,t}$, we have $\tpnt(m)=\phi_{m,t}(\vr_n)$, where $\phi_{m,t}$ is supposed to be monotonically increasing in $\rnm$ and to measure the relative quality of data item $m$ to the rest of data items. After all, the rating allocation is a user-based problem where the service provider can compute new ratings for each user separately from the others. Denoting the new profile for user $n$ at time slot $t$ by $\vhpnt=(\hpnt(m))_{m=1}^{M}$, the valuation assignment problem for user $n$ at slot $t$ is formulated as
\begin{equation}
\label{eq:rating_f1}
\begin{aligned}
\underset{{\vv_{n,t}}}{\min} & \quad \norm{\vv_{n,t}-\vr_n} \\
\text{subject to}  &\quad \hpnt(m)=\phi_{m,t}(\vv_{n,t}),\quad \forall m,\\
                   &\quad  \vv_{n,t}\in [0,1]^M.
\end{aligned}  
\end{equation}
\begin{remark}
Problem \eqref{eq:rating_f1} is convex if and only if $\phi_{m,t}$ is a linear fractional mapping in $\vv_{n,t}$ \cite{Boyd}.
\end{remark}

An example case on a linear fractional mapping is
\begin{equation}
\label{eq:phi}
\phi_{m,t}(\vv_n)=(1-\qnt)\frac{\vnm}{\sum_{j=1}^{M}v_{n}(j)}, \quad \forall m,n,t.
\end{equation}
In this example, the mapping function $\phi_{m,t}$ captures the relative preference of user $n$ to choose data item $m$ amongst all data items. While problem \eqref{eq:rating_f1} is convex, a globally optimal solution can be obtained efficiently through a gradient descent algorithm \cite{Boyd}. 

In the case where $\phi_{m,t}$ is \emph{not} a linear fractional mapping, the problem turns out to be non-convex, calling for approximate solutions. One possible way of handling such a difficulty is to replace the non-convex $\phi_{m,t}$ with an approximate linear fractional form, and iteratively solving for approximate solutions till convergence to a locally optimal rating vector, a similar approach to that used in Section \ref{sec:Demand_Shaping}.

\section{Numerical Simulations}
\label{sec:results} 
{We begin our simulations with a simple two-user example to illustrate the process of the proposed system, then show the gains for large number of users next.}
\subsection{Illustrating Example}
  {We consider a service provider} that serves  $M=3$ data items with fixed sizes ${\bf s}=(3,2,4)$. There are $N=2$ users that may request services on a daily basis, whereby the day is divided into $T=2$ time slots with demand profiles following a cyclostationary distribution with period $T$. One time slot is supposed to represent an off-peak hour demand with $\tilde{q}_{n,0}=1-\po$, $n=1,2$ with $\po=0.1$. The other time slot represents a peak-hour demand which has $\tilde{q}_{n,1}=1-\pp$, $n=1,2$. The profiles of both users during the off-peak hour are $\vp_{1,0}=\po\cdot(0.8,0.1,0.1)$, ${\vp}_{2,0}=\po\cdot(0.3,0.1,0.6)$. Likewise, during the peak hour ${\vp}_{1,1}=\pp\cdot(0.8,0.1,0.1)$, ${\vp}_{2,1}=\pp\cdot(0.3,0.1,0.6)$. The parameters $\po$ and $\pp$ represent the user activity during the off-peak and peak hours. We consider two main cost functions: (1) a quadratic cost function $C(L)=L^2, L\in\mathbb{R}_+$ and (2) an outage-constrained cost function $C(L)=\frac{L}{\mu-L}$, where $\mu$ is the maximum load that the service provider can afford at a given time slot. We use $\mu=9.8$ and plot the obtained results versus $\pp$ in Fig. \ref{fig:P_downloads}.  
{We can see from the figures that proactive downloads yield a significantly reduced cost that grows almost linearly with the peak hour activity $\pp$. On the other hand, the no-proactive-downloads scenario results in excessive costs that grow essentially superlinearly with $\pp$.}

 \begin{figure}[htp]
 \centering
  \subfloat[Expected cost for quadratic function vs. $\pp$.]{\label{fig:E_C_Quad}\includegraphics[width=0.25\textwidth]{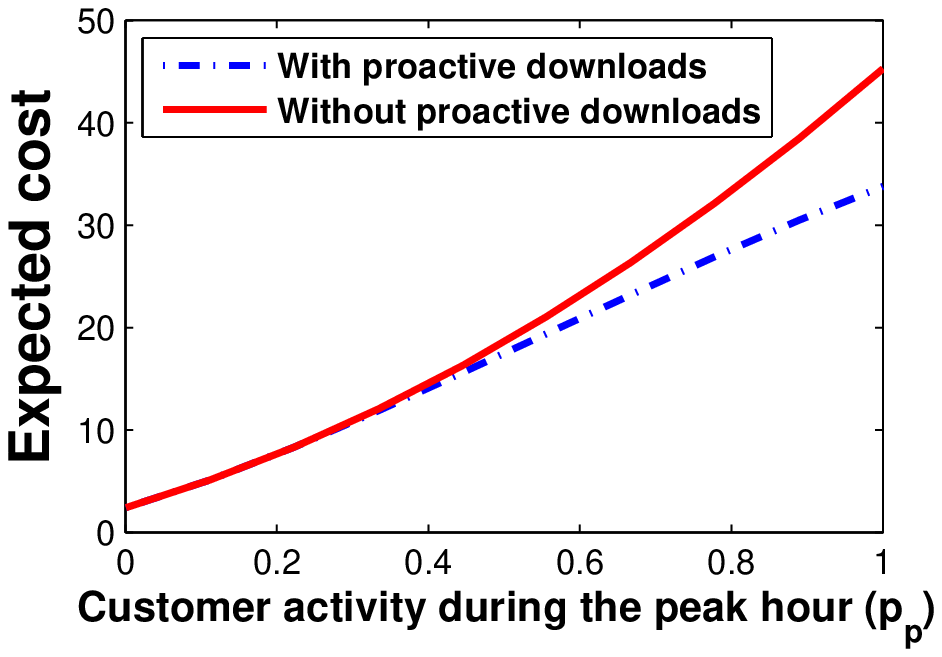}}
  \subfloat[Expected cost for capacity constrained function vs. $\pp$.] {\label{fig:E_C_Cap}\includegraphics[width=0.25\textwidth]{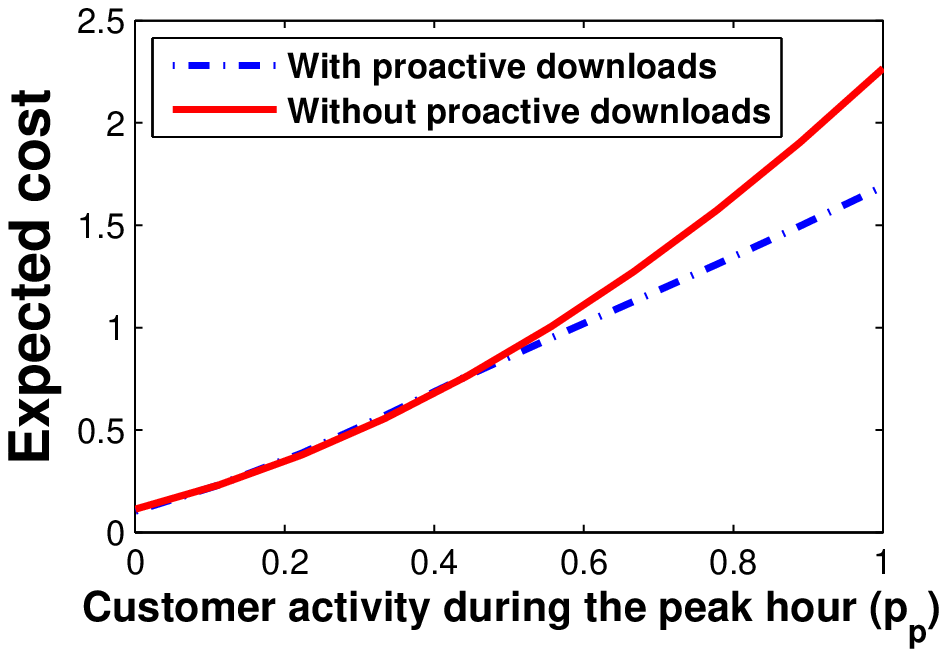}}\\
  \caption{Reduced cost under proactive downloads.}
  \label{fig:P_downloads}
\end{figure}
 

For the proposed joint user profile and proactive downloads scheme, we conduct a simulation with $\pp=0.9$ fixed. For both types of cost functions, the proposed iterative algorithm is {applied, taking} into account the approximate convex objective function developed in \eqref{eq:approx_f}. The convergence of results is plotted in Fig. \ref{fig:Joint_convergence} where it is clear that the proposed algorithms start from the initial proactive downloads and profiles $(\vtp,\vtx)$ and proceed with the iterative solutions until convergence to a local optimal solution to the original problem $(\vhp,\vhx)$. The resulting sequence of the objective functions (the cost functions) is strictly decreasing. 

 \begin{figure}[htp]
 \centering
  \subfloat[Quadratic function]{\label{fig:Conv_Quad}\includegraphics[width=0.25\textwidth]{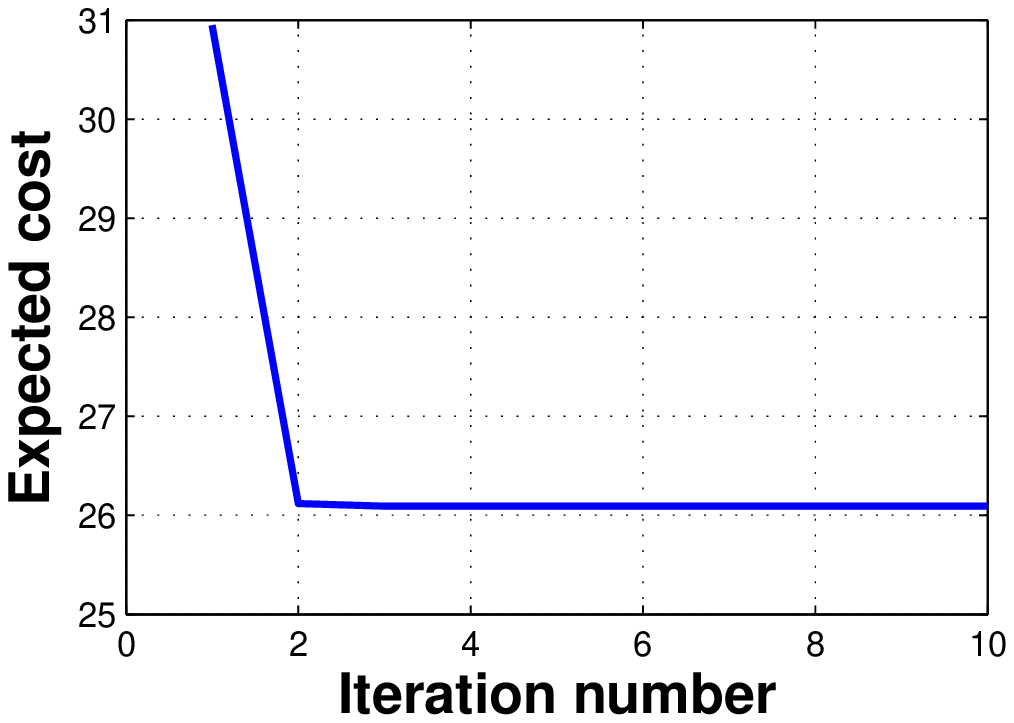}}  
  \subfloat[Outage-constrained function] {\label{fig:Conv_Cap}\includegraphics[width=0.25\textwidth]{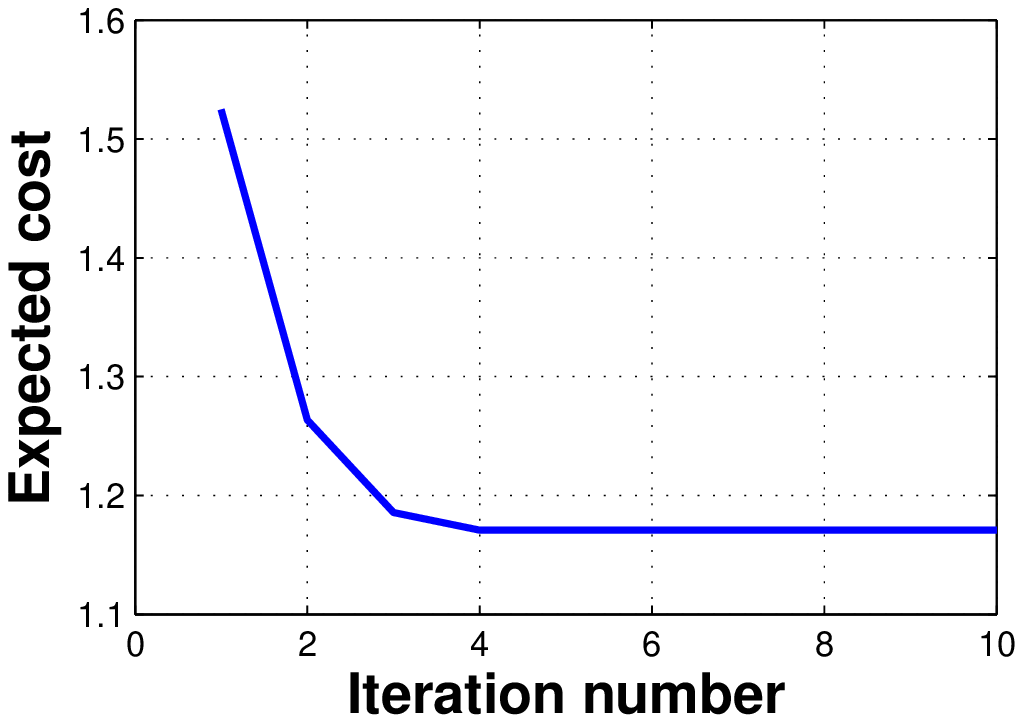}}
  \caption{Convergence of the joint user profile and proactive downloads allocation.}
  \label{fig:Joint_convergence}
\end{figure}

Upon obtaining the new peak-hour profiles $\vhp_{1,1}$, and $\vhp_{2,2}$, the recommendation process proposed in \eqref{eq:rating_f1} is invoked to assign new ratings for the available items, where we assume $\phi_{m,t}$ of \eqref{eq:phi}. We take the original valuation vectors as ${\bf r}_1=(0.8,0.1,0.1)$ and ${\bf r}_2=(0.3,0.1,0.6)$. The results of the simulations are summarized in Table \ref{tab:ResultsOfQuad}, \ref{tab:ResultsOfCapCons}, for the quadratic and outage-constrained cost functions respectively. It can be noted that, the modified profiles lie on the boundary of the entropy ball as the service provider is interested in pushing the profiles in the direction of the most deterministic behavior.

\begin{table}[ht]
	\centering
		\begin{tabular}{| c| c| c| c| c|}

		\hline
	Item        & New profile  & New profile &  Rating    & Rating \\
		  index        & $\vhp_{1,1}/\pp$     & $\vhp_{2,1}/\pp$    &  ${\bf v}^*_{1,1}$ & ${\bf v}^*_{2,1}$  \\ \hline
		   1           &  0.8772           & 0.3111           & 0.7985         & 0.3381\\
		   2           &  0.1222           & 0.2211           & 0.1112         & 0.2403\\ 
		   3           &  0.0006           & 0.4678           & 0.0005         & 0.5084\\  \hline
		\end{tabular}
	\caption{\small Modified demand profiles and recommendations for quadratic cost function.}
	\label{tab:ResultsOfQuad}
\end{table}
\begin{table}[ht]
	\centering
		\begin{tabular}{| c| c| c| c| c| c| c| c|}

		\hline
	Item  & New profile  & New profile   &  Rating    & Rating \\
 index  			& $\vhp_{1,1}/\pp$     & $\vhp_{2,1}/\pp$    &  ${\bf v}^*_{1,1}$ & ${\bf v}^*_{2,1}$  \\ \hline
		   1      &  0.8298           & 0.3546             & 0.8005         & 0.2594 \\
		   2      &  0.1222           & 0.4507             & 0.1179         & 0.3297\\ 
		   3      &  0.0480           & 0.1947             & 0.0463         & 0.1424\\  \hline
		\end{tabular}
	\caption{\small Modified demand profiles and recommendations for a outage-constrained cost function.}
	\label{tab:ResultsOfCapCons}
\end{table}

\subsection{Increasing Number of Users}
To validate the scaling laws of the cost reduction, {we consider a simulation scenario of $M=50$ data items, with their size generated randomly from a uniform distribution in $[10:30]$. The period of cyclostationary demand profiles is $T=8$ slots. The demand profile for user $n$ and time $t$ follows a Zipf distribution with power $4$, that is, $\tpnt(m)=\frac{G_{n,t}}{m^4}$, where $G_{n,t}$ is a normalizing constant. We take $(\tilde{q}_{n,t})_n=\tilde{q}_t$, $\forall n$, with $(\tilde{q}_t)_t=(0.9,0.4,0.01,0.8,0.2,0.7,0.05,0.1)$. While considering a quadratic cost function $C(L)=L^2$, we compute the optimal cost reduction resulting from proactive downloads {\em only}. We plot the results in Figure \ref{fig:Performance_vs_N}. vs. the number of users $N$.} 



\begin{figure}[htp]
 \centering
  \subfloat[\small Cost reduction vs.  $N$.]{\label{fig:TON3_CR_asymptotic}\includegraphics[width=0.25\textwidth]{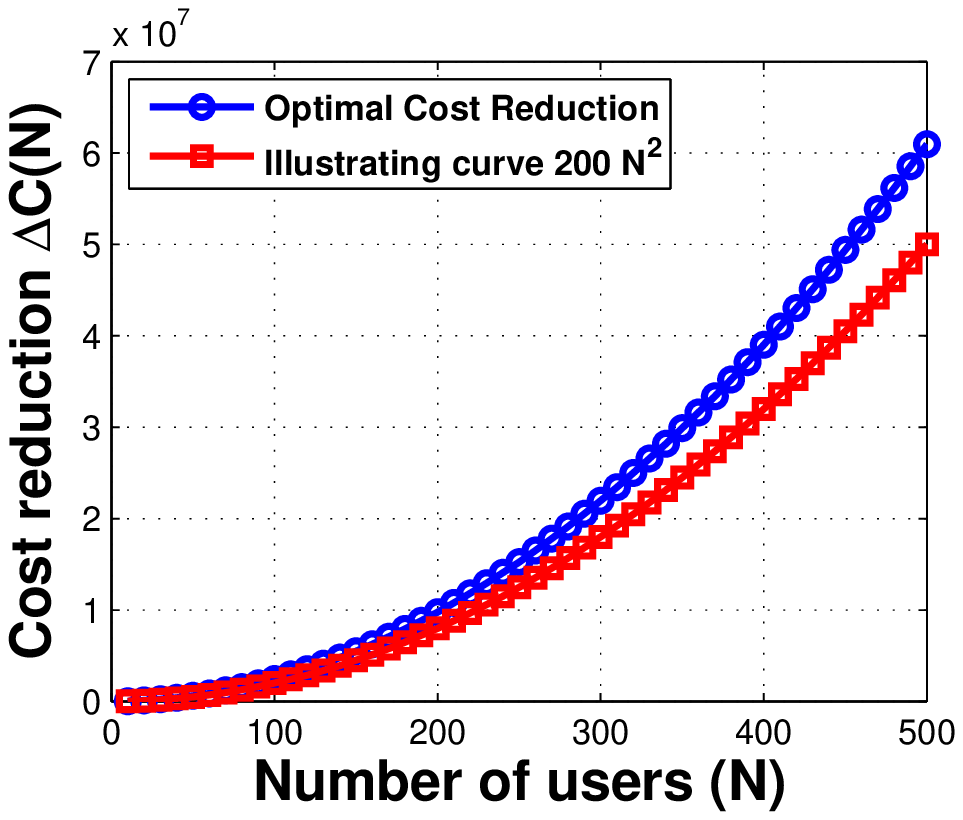}}
  \subfloat[\small Cost reduced by 17.7\%.] {\label{fig:TON3_CR_ratio}\includegraphics[width=0.25\textwidth]{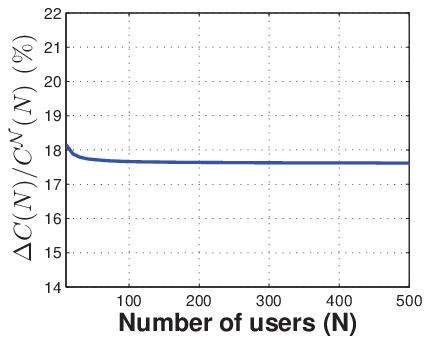}}
   \caption{Performance of cost reduction with number of users.}
    \label{fig:Performance_vs_N}
\end{figure}

In particular, Figure \ref{fig:TON3_CR_asymptotic} shows the leveraged cost reduction versus $N$, which manifests  the quadratically increasing gain with the number of users, where in the adopted simulation setup  $h(N)\geq 0.9 N$, $\forall N$, thus validates the theoretical result of Corollary \ref{cor:inf_B}. Further, Figure \ref{fig:TON3_CR_ratio} compares the leveraged cost reduction with that of the non-proactive network through the ratio $\frac{\Delta C(N)}{C^{\cN}(N)}$. The percentage gain is 17.7\%.


\section{Conclusion}
\label{sec:conc}
In this work, we have proposed and studied a proactive resource allocation and demand shaping framework for data networks. We aim to harness the predictability of future demand in creating more opportunities for a balanced load over time, hence a considerable resource utilization. We have considered the problem from the perspective of a service provider (SP) which incurs excessive costs during the peak-hour demand in order to sustain the service of the users' requests. We have proposed the notion of users'  \emph{demand profiles} to capture the statistical information about the future demand for each user. Such profiles are utilized in  \emph{proactive content downloads} where a portion of highly likely future demand is downloaded to the respective users during the off-peak hour, so as to minimize the \emph{time average expected cost}.

We have analyzed the asymptotic scaling laws of the cost reduction leveraged through such proactive downloads with the total number of users. We have proved the \emph{cost reduction} to scale similar to the expected cost of  the\emph{non-proactive} networks does. Even in the worst case scenario, where the network users are \emph{indeterministic}, the cost reduction scales with the first derivative of the cost function. In order to improve the certainty about the users' demand, we have proposed and studied the notion of \emph{demand shaping}, which is proved to strictly reduce the expected cost under user satisfaction constraints. We have developed a data-item \emph{recommendation} scheme that enhances the certainty about the demand of each user, hence the quality of proactive downloads. 

\appendices

\section{Proof of Lemma \ref{lem:CR_LB}}
\label{app:CR_LB}

The cost reduction satisfies
\begin{align*}
& T \Delta C(N) \overset{(a)}{\geq} \sum_{t=0}^{T-1}\E\Biggl[C(L_t)-C\Biggl(L_t-\sum_{m,n\in\cB_{t}(m)}\tilde{x}_t\Int(m) +  \\
\end{align*}
\begin{align*}
&\sum_{m,n\in\cB_{t+1}(m)}\tilde{x}_{t+1}\Biggr)\Biggr] \overset{(b)}{\geq} \sum_{t=0}^{T-1}\E\Biggl[C'\Biggl(L_t -\sum_{m,n\in\cB_{t}(m)}\tilde{x}_t\Int(m) \\
&+\sum_{m,n\in\cB_{t+1}(m)}\tilde{x}_{t+1}\Biggr)\Biggl(\sum_{m,n\in\cB_{t}(m)}\tilde{x}_t\Int(m)-  \\
& \sum_{m,n\in\cB_{t+1}(m)}\tilde{x}_{t+1}\Biggr)\Biggr] \overset{(c)}{\geq} 
 \sum_{t=0}^{T-1}\sum_{m=1}^{M}\tilde{x}_t\sum_{n\in\cB_{t}(m)}\E\Biggl[\Int(m)\times \\
& C'\Biggl(L_t-\sum_{j,k\in\cB_t(k)}\tilde{x}_t\I_{k,t}(j)\Biggr)-  C'\Biggl(L_{t-1}+\sum_{j,k\in\cB_{t}(j)}\tilde{x}_t\Biggr)\Biggr] >0.\\
\end{align*}

Inequality (a) follows since Policy A does not necessarily solve \eqref{eq:CP_avg} optimally. Inequality (b) holds by the first order condition on the convexity of the cost function $C$. Inequality (c) follows by rearranging the terms of the RHS of Inequality (b) and replacing the terms $\sum_{j=1}^{M}\sum_{k\in\cB_{t+1}(j)}\tilde{x}_{t+1}$, $-\sum_{j=1}^{M}\sum_{k\in\cB_{t-1}(k)}\tilde{x}_{t-1}\I_{k,t-1}(j)$ with zeros while noting that $C'$ is monotonically increasing on its domain. Finally, the last strict inequality holds since $\tilde{x}_t<\hat{x}_t$ which, combined with the monotonicity and non-negativity of $C'$, yields an always positive sum as long as $\cB_t(m)$ is non-empty for some $t$, $m$.

\section{Proof of Lemma \ref{lem:positive_liminf}}
\label{app:positive_liminf}

The proof follows in two steps. In one step, we show that if $\liminf_{N\to\infty}\tilde{x}_t>0$ then
\begin{multline*}
\liminf_{N\to\infty}\frac{1}{h(N)C'(\gamma_1\cdot N)}\sum_{m=1}^{M}\sum_{n\in\cB_{t}(m)}\E\Biggl[\Int(m)C'\Biggl(L_t- \\
 \sum_{j=1}^{M}\sum_{k\in\cB_{t}(j)}\tilde{x}_t\I_{k,t}(j)\Biggr)\Biggr]>0,
\end{multline*}
 for some $\gamma_1>0$. In the other step, we prove that $\liminf_{N\to\infty}\tilde{x}_t>0$, and $\liminf_{N\to\infty}$
\begin{equation*}
\frac{\underset{m}{\sum}\E\left[\underset{n\in\cB_{t}(m)}{\sum}\Int(m)C'\left(L_t-\overset{M}{\underset{j=1}{\sum}}\underset{k\in\cB_{t}(j)}{\sum}\tilde{x}_t\I_{k,t}(j)\right)\right]}{\underset{m}{\sum}\underset{n\in\cB_{t}(m)}{\sum}\E\left[C'\left(L_{t-1}+\overset{M}{\underset{j=1}{\sum}}\underset{k\in\cB_t(j)}{\sum}\tilde{x}_t\right)\right]}>1.
\end{equation*}
\normalsize
\emph{\textbf{Step 1:}}
Suppose that $\liminf_{N\to\infty}\tilde{x}_t>0$. By Fubini's theorem, we can move the summation inside the expectation, since all the summands are non-negative. Also, by Fatou's lemma, we have
\begin{align*}
\liminf_{N\to\infty}\frac{1}{h(N)C'(\gamma_1\cdot N)} & \\
 \sum_{n\in\cB_{t}(m)}\E\left[\Int(m)C'\left(L_t-\sum_{j=1}^{M}\sum_{k\in\cB_{t}(j)}\tilde{x}_t\I_{k,t}(j)\right)\right] & \geq \\
 \E\Biggl[\liminf_{N\to\infty} \frac{\sum_{m}\sum_{n\in\cB_t(m)}\Int(m)}{h(N)} \cdot & \\
 \end{align*}
 \begin{align*}
  \liminf_{N\to\infty}\frac{C'\left(\sum_{j=1}^{M}\sum_{k=1,k\neq n}^{N}(S(j)-\tilde{x}_t)\I_{k,t}(j)\right)}{C'(\gamma_1 N)}\Biggr] & \overset{(a)}{=}\\
 \liminf_{N\to\infty} \frac{\sum_{m}\sum_{n\in\cB_t(m)}\pnt(m)}{h(N)} \cdot & \\
  \E\Biggl[\liminf_{N\to\infty}\frac{C'\Biggl(N\cdot\sum_{j=1}^{M}\frac{\sum_{k=1,k\neq n}^{N}(S(j)-\tilde{x}_t)\I_{k,t}(j)}{N}\Biggr)}{C'(\gamma_1 N)}\Biggr].
\end{align*}
{Note that, on the left hand side (LHS) of Equality (a), we have removed the contribution of user $n$ in the argument of $C'$ which can only reduce its value, thus yielding $\Int(m)$ independent of the argument of $C'$. Hence, on the RHS, we split the expectation over the product.}
As assumed in the beginning of Section \ref{subsec:asymptotic}, we have $\sum_{m}P_{n,t}(m)\geq\epsilon>0$. Further, for any $n\in\cB_t(m)$, $\pnt(m)>0$ for otherwise $n\notin\cB_t(m)$. Further, $\beta_t(m)>0$ by hypothesis. Therefore, 
\begin{equation*}
\begin{aligned}
\liminf_{N\to\infty}\sum_{m=1}^{M}\sum_{n\in\cB_t(m)}\frac{\pnt(m)}{h(N)}& =\sum_{m=1}^{M}\beta_t(m)>0.
\end{aligned}
\end{equation*}

On the other hand, Kolmogorov's strong law of large numbers implies
\begin{align*}
\gamma_1:&= \lim_{N\to\infty}\sum_{j=1}^{M}\frac{\sum_{k=1,k\neq n}^{N}(S(j)-\tilde{x}_t)\I_{k,t(j)}}{N}\\
         &= \lim_{N\to\infty}\sum_{j=1}^{M}(S(j)-\tilde{x}_t)\frac{\sum_{k\neq n}P_{k,t}(j)}{N} >0 \text{ a.s.},
\end{align*}
since $\qnt<1-\epsilon$, $\forall n,t$. Hence, we have
\begin{align*}
 & \E\Biggl[\liminf_{N\to\infty} \frac{\sum_{m=1}^{M}\sum_{n\in\cB_t(m)}\pnt(m)}{h(N)} \cdot
 \end{align*}
 \begin{align*}
 &  \liminf_{N\to\infty}\frac{C'\Biggl(N\cdot\sum_{j=1}^{M}\frac{\sum_{k=1,k\neq n}^{N}(S(j)-\tilde{x}_t)\I_{k,t}(j)}{N}\Biggr)}{C'(\gamma_1 N)}\Biggr]\\
 & =\sum_{m=1}^{M}\beta_t(m) > 0.
\end{align*}

\emph{\textbf{Step 2:}}
In this step, we prove that $\liminf_{N\to\infty}\tilde{x}_t>0$. To that end, we show that there exists a constant $\chi>0$, independent of $N$, for which if $\tilde{x}_t=\chi$, then
\small
\begin{align*}
& \psi(\chi):=\liminf_{N\to\infty}\\
& \frac{\underset{m}{\sum}\underset{n\in\cB_{t}(m)}{\sum}\E\left[\Int(m)C'\left(L_t-\overset{M}{\underset{j=1}{\sum}}\underset{k\in\cB_{t}(j)}{\sum}\chi\I_{n,t}(j)\right)\right]}{\sum_{m}\underset{n\in\cB_{t}(m)}{\sum}\E\left[C'\left(L_{t-1}+\overset{M}{\underset{j=1}{\sum}}\underset{k\in\cB_t(j)}{\sum}\chi\right)\right]}>1,
\end{align*}
\normalsize
We note that $\psi(\chi)\geq 0$ is continuous and {\em strictly} decreasing in $\chi$.

Now, we have
\begin{align*}
& \psi(0)=\liminf_{N\to\infty} \frac{\underset{m}{\sum}\underset{n\in\cB_{t}(m)}{\sum}\E\left[\Int(m)C'\left(L_t\right)\right]}{\underset{m}{\sum}\underset{n\in\cB_{t}(m)}{\sum}\E\left[C'\left(L_{t-1}\right)\right]}\geq \\
& \frac{\underset{n\in\cup_m \cB_t(m)}{\sum}\delta +\underset{m}{\sum}\E\left[C'\left(L_{t-1}\right)\right]}{\underset{n\in\cup_m\cB_{t}(m)}{\sum}\E\left[C'\left(L_{t-1}\right)\right]}
\end{align*}

By Condition \eqref{eq:delta_condition}, we have $\psi(0)>1$. Thus, by intermediate value theorem (IVT), there exists $\chi>0$ such that $\psi(\chi)>1$.

\section{Proof of Theorem \ref{th:joint}}
\label{app:joint}
We use proof by contradiction as follows. Suppose that $(\vpb,\vxb)$ with $\Intb (m):=1$, with probability $\pntb(m)$ and $0$ with probability $1-\pntb(m)$,
satisfies
\begin{equation}
 \label{eq:hypo1}
 C^{\cJ}(\vpb,\vxb)<C^{\cJ}(\vp^*,\vxs)
 \end{equation}
where $\pntb(m)<1-\qnt, \forall m\in\cM^*$,  {and that user $n$ requests at least one data item under $\vbpnt$ if and only if he requests at least one data item under $\vpnt^*$. That is, the event of requesting a data item is the same in both distributions, only the selected data item can differ depending on the respective distribution.} 
 
Now, {let $\hat{x}_{n,t}(m^*):=\min\{S(m^*),\max_m{\bar{x}_{n,t}(m)}\}$, $\forall n,t$, and consider}  $D:=C^{\cJ}(\vpb,\vxb)-C^{\cJ}(\vp^*,\vxs)$, we have
\begin{align*}
T D = C^{\cJ}(\vpb,\vxb)-\sum_{t=0}^{T-1}\E\Biggl[C\Biggl(\sum_{n=1}^{N}\left(S(m^*)-\xs_{n,t}(m^*)\right)\cdot &\\
\Int^*(m^*) +\xs_{n,t+1}(m^*)\Biggr)\Biggr] \overset{(a)}{\geq}  C^{\cJ}(\vpb,\vxb)- \sum_{t=0}^{T-1}\E\Biggr[& \\
C\Biggl(\sum_{n=1}^{N}\left(S(m^*) -\hat{x}_{n,t}(m^*)\right)\Int^*(m^*)+{\hat{x}}_{n,t+1}(m^*)\Biggr)\Biggr] & \\
\overset{(b)}{\geq} \sum_{t=0}^{T-1}\E\Biggl[C'(Y_t)\cdot \Biggl(\sum_{m,n}\left(S(m)-\bar{x}_{n,t}(m)\right) \Intb(m)+  & \\
\bar{x}_{n,t+1}(m) - \sum_{n=1}^{N}(\left(S(m^*)-\hat{x}_{n,t}(m^*)\right)\Int^*(m^*) & \\
+ \hat{x}_{n,t+1}(m^*))\Biggr)\Biggr]  \overset{(c)}{\geq}  \sum_{t=0}^{T-1}C'\left(\inf Y_t\right)\E\Biggl[\Biggl(\sum_{m,n}(S(m)- & \\
 \bar{x}_{n,t}(m))\Intb(m)+ \bar{x}_{n,t+1}(m) - \sum_{n=1}^{N}\Biggl(\left(S(m^*)-\hat{x}_{n,t}(m^*)\right)\cdot & \\
\Int^*(m^*) +\hat{x}_{n,t+1}(m^*)\Biggr)\Biggr)\Biggr] \text{a.s.} &\\
\end{align*}
where the first equality follows where $\xs_{n,t}(m)=0, \forall m\neq m^*$ as $\pnt^*(m)=0,\forall m\neq m^*$. Inequality (a) holds by replacing $\xs_{n,t}(m^*)$ with $\hat{x}_{n,t}(m^*)$ while noting that $\hat{x}_{n,t}(m^*)$ does not necessarily minimize the expected cost under $\vps$. Inequality (b) holds by the mean value theorem for random variables \cite{mvt} since $Y_t$ is a random variable satisfying
\begin{multline*}
Y_t>\min\Biggl\{\sum_{n=1}^{N}\left(S(m^*)-\hat{x}_{n,t}(m^*)\right)\Int^*(m^*)+\hat{x}_{n,t}(m^*), \\
\sum_{m=1}^{M}\sum_{n=1}^{N}\left(S(m)-\bar{x}_{n,t}(m)\right) \Intb(m)+\bar{x}_{n,t}(m)\Biggr\}\geq 0 \quad \text{a.s.}, 
\end{multline*}
\begin{multline*}
Y_t<\max\Biggl\{\sum_{n=1}^{N}\left(S(m^*)-\hat{x}_{n,t}(m^*)\right)\Int^*(m^*)+\bar{x}_{n,t}(m^*),\\
\sum_{m=1}^{M}\sum_{n=1}^{N}\left(S(m)-\bar{x}_{n,t}(m)\right) \Intb(m)+\bar{x}_{n,t}(m)\Biggr\} \quad \text{a.s.}, 
\end{multline*}
on the entire space of events for all $t=0,\cdots,T-1$. Inequality (c) follows since we consider $\inf$ operator instead of $\E$ operator, while noting that $C'(Y_t)>0, \; a.s.$ as $C$ is an increasing function, {and that}
\begin{multline*}
\sum_{m,n}(S(m)- \bar{x}_{n,t}(m))\Intb(m)+\bar{x}_{n,t+1}(m) \\
 -\sum_{n=1}^{N}\left(\left(S(m^*)-\hat{x}_{n,t}(m^*)\right)\Int^*(m^*)+\hat{x}_{n,t+1}(m^*)\right)\geq 0, \text{a.s.}
\end{multline*}
{by the construction of $\vbpnt$ and $\hat{x}_{n,t}(m^*)$, and the fact that $\sum_{m=1}^{M} {x}_{n,t+1}(m)\geq \hat{x}_{n,t+1}(m^*)$}.

Now, since $C'\left(\inf Y_t\right)>0$ a.s., and

\begin{multline*}
Pr\Biggl(\sum_{m,n}(S(m)- \bar{x}_{n,t}(m))\Intb(m)+\bar{x}_{n,t+1}(m)\\
 -\sum_{n=1}^{N}\left(\left(S(m^*)-\hat{x}_{n,t}(m^*)\right)\Int^*(m^*)+\hat{x}_{n,t+1}(m^*)\right)=0\Biggr)\\
\leq \prod_{n}\qnt<1,
\end{multline*}
it follows that $D> 0$ which contradicts the main hypothesis \eqref{eq:hypo1}.
 
If $|\cM^*|=1$, the uniqueness of $(\vps,\vxs)$ follows since we have proved that $D<0$ for any $m^*\in\cM^*$.

\section{Proof of Theorem \ref{th:gen_approx}}
\label{app:gen_approx}
First, we note that $\fh^{k}$ is convex in $(\vp,\vx)$ since $f_0(\vp^{k-1},\vx)$ is convex in $\vx$ by the definition of the cost function $C$, the term $\sum_{m,n,t}\frac{\partial f_0(\vp,\vx^{k-1})}{\partial \pnt(m)}\biggr |_{\vp=\vp^{k-1}}\cdot\pnt(m)$ is affine in $\vp$, hence convex, and the superposition of convex functions is also convex.

Second, we consider the three conditions specified in Lemma \ref{lem:approx}. Since $f_0$ is continuous in $(\vp,\vx)$ and is defined over a bounded and closed feasible set, it has a global maximum value $U>0$. Such a value can be added to $\fh^k$ defined above to keep Condition 1) of Lemma \ref{lem:approx} satisfied. However, adding a constant to the objective function does not affect the solution. Therefore,  Condition 1) of Lemma \ref{lem:approx} is not necessary in this case.

For Condition 2) of Lemma \ref{lem:approx}, we have
\begin{equation*}
\begin{split}
\frac{\partial \fh(\vp,\vx)}{\partial x_{n,t}(m)}\biggr |_{(\vp^{k-1},\vx^{k-1})}&=\frac{\partial \fh(\vp^{k-1},\vx)}{\partial x_{n,t}(m)}\biggr |_{(\vp^{k-1},\vx^{k-1})}\\
&= \frac{\partial f_0(\vp,\vx)}{\partial x_{n,t}(m)}\biggr |_{(\vp^{k-1},\vx^{k-1})},
\end{split}
\end{equation*}
\begin{equation*}
\frac{\partial \fh(\vp,\vx)}{\partial \pnt(m)}\biggr |_{(\vp^{k-1},\vx^{k-1})}=\frac{\partial f_0(\vp,\vx)}{\partial \pnt(m)}\biggr |_{(\vp^{k-1},\vx^{k-1})},\quad \forall m,n,t.
\end{equation*}
Thus Condition 2) of Lemma \ref{lem:approx} is satisfied.

Finally, Condition 3) of the same lemma need not be satisfied since it is mainly stated in Theorem 1 \cite{Old} for non-convex constraint function that has to be replaced by a convex approximate. Condition 3) mainly implies the satisfaction of the complementary slackness conditions by both the approximate and the original constraint functions. Since we are interested only in the objective function, Condition 3) of Lemma \ref{lem:approx} is not necessary for convergence to a KKT point.

\section{Proof of Theorem \ref{th:DS_gain}}
\label{app:DS_gain}

We construct a suboptimal solution $(\hat{\bf p},\hat{\bf x})$ to \eqref{eq:P_f_horizon}, where $\vhpnt=\vbpntz$, if $(n,t)=(\nz,\tz)$, and    $\vtpnt$ if $(n,t)\neq (\nz,\tz)$, 
and for any $m,n$, $\hxnt(m)=\min\{\txnt(m),S_m-r\}$,  for some $r>0$ if $t=\tz$, and $\txnt(m)$, if  $t\neq \tz$.
Hence, we have $S_m-\hat{x}_{n,\tz}(m)>r$ for all $m,n$.

Now we have $T \Delta C^{\cP}(N) \geq $
\begin{align*}
\sum_{t=0}^{T-1}\E\Biggl[C\Biggl(\sum_{m,n}(S_m-\txnt(m))\tilde{\I}_{n,t}(m)+\tilde{x}_{n,t+1}(m)\Biggr)\\
-C\Biggl(\sum_{m,n}(S_m-\hxnt(m))\hat{\I}_{n,t}(m)+\hat{x}_{n,t+1}(m)\Biggr)\Biggr]  \overset{(a)} {\geq} & \\ 
\E\Biggl[C'\Biggl(\sum_{m,n}(S_m-\hxnt(m))\hat{\I}_{n,t}(m)+\hat{x}_{n,t+1}(m)\Biggr)  &  \\
  \sum_{m=1}^{M}\Biggl(S_m-\tilde{x}_{\nz,\tz}(m)\Biggr)(\tilde{\I}_{\nz,\tz}(m)-\hat{\I}_{\nz,\tz}(m))\Biggr] \overset{(b)}{\geq} \\
\E\Biggl[C'\Biggl(\sum_{m,n\neq \nz}(S_m-\hat{x}_{n,\tz}(m))\tilde{\I}_{n,\tz}(m)+\tilde{x}_{n,\tz+1}(m)\Biggr) & \\
 \sum_{m=1}^{M}\Biggl(S_m-\tilde{x}_{\nz,\tz}(m)\Biggr)(\tilde{\I}_{\nz,\tz}(m)-\hat{\I}_{\nz,\tz}(m))\Biggr] \overset{(c)}{\geq} &  \\
  \E\Biggl[C'\Biggl(\sum_{m,n\neq \nz}(S_m-\hat{x}_{n,\tz}(m))\tilde{\I}_{n,\tz}(m)+\tilde{x}_{n,\tz+1}(m)\Biggr) \Biggr] & \\
  \sum_{m=1}^{M} (S_m-\hat{x}_{\nz,\tz}(m))(\tilde{P}_{\nz,\tz}(m)-\bar{P}_{\nz,\tz}(m)).
\end{align*}

Inequality (a) follows by MVT for random variables \cite{mvt} and the construction of $(\vhp,\vhx)$. Inequality (b) follows by removing the contribution of user $\nz$ from the argument of $C'$, thus rendering it independent of $\hat{\I}_{\nz,\tz}$ so we can split the expectation over the product as shown in Inequality (c).

By dividing both sides of the last inequality by $C'(\gamma N)$ for some $\gamma>0$, and sending $N\to\infty$, we have
\begin{align*}
T \liminf_{N\to\infty} \frac{\Delta C^{\cP}(N)}{C'(\gamma N)}  \geq   \sum_{m=1}^{M}(S_m-\hat{x}_{\nz,\tz}(m)) \cdot (\tilde{P}_{\nz,\tz}(m) - & \\
\bar{P}_{\nz,\tz}(m))\cdot  \liminf_{N\to\infty} \frac{1}{C'(\gamma N)}\E\Biggl[C'\Biggl(\sum_{m,n\neq \nz}(S_m -\hat{x}_{n,\tz}(m))\cdot& \\
\tilde{\I}_{n,\tz}(m)+\tilde{x}_{n,\tz+1}(m)\Biggr) \Biggr]=  \sum_{m=1}^{M}(S_m-\hat{x}_{\nz,\tz}(m))\cdot &  \\
 (\tilde{P}_{\nz,\tz}(m)-\bar{P}_{\nz,\tz}(m)) \cdot \liminf_{N\to\infty} \frac{1}{C'(\gamma N)} \E\Biggl[C'\Biggl(N  & \\
\sum_{m,n\neq \nz}\frac{(S_m-\hat{x}_{n,\tz}(m))\tilde{\I}_{n,\tz}(m)}{N}+\tilde{x}_{n,\tz+1}(m)\Biggr) \Biggr].&
\end{align*}

Now, by Fubini's Theorem and Kolmogorov's strong law of large numbers, it follows that
\begin{align*}
\gamma & = \lim_{N\to\infty} \frac{1}{N} \sum_{m=1}^{M}\sum_{n=1,n\neq \nz}^{N}(S_m-\hat{x}_{n,\tz}(m))\tilde{\I}_{n,\tz}(m)\\
       & = \lim_{N\to\infty} \frac{1}{N} \sum_{n=1,n\neq \nz}^{N}\sum_{m=1}^{M} (S_m-\hat{x}_{n,\tz}(m))\tilde{\I}_{n,\tz}(m)>0 \text{ a.s.}\\
\end{align*}
since $\sup_n\{\tilde{q}_{n,\tz}\}<1-\epsilon$. Hence, it is straightforward to see that $ \liminf_{N\to\infty}\frac{\Delta C^{\cP}(N)}{C'(\gamma N)}>0$.


\begin{thebibliography}{1}
\bibitem{cisco1}
Cisco Visual Networking Index: Forecast and Methodology, $2010-2015$.


http://www.cisco.com/en/US/solutions/collateral/ns341/ns525/\\ns537/ns705/ns827/white$\_$paper$\_$c11-481360.pdf




 

\bibitem{Mitola}
J. Mitola III, ``Cognitive Radio: An Integrated Agent Architecture for Software Defined Radio,''
Doctor of Technology Dissertation, Royal Institute of Technology (KTH), Sweden, May, 2000


\bibitem{Gridlock}
S. A. Jafar, S. Srinivasa, I. Maric, and A. Goldsmith, ``Breaking spectrum gridlock with cognitive radios: an information theoretic perspective,''
\emph{Proceedings of the IEEE}, May 2009.





\bibitem{TEG12}
J. Tadrous, A. Eryilmaz, and H. El Gamal, ``Proactive resource allocation: harnessing the diversity and multicast gains,'' \emph{IEEE Transactions on Information Theory,} vol., no., April 2013.




\bibitem{EP06}
N., Eagle and A. Pentland, ``Reality mining: sensing complex social systems'', \emph{Personal and Ubiquitous Computing}, vol. 10, pp. 255-268, 2006.
\bibitem{FG08}
K. Farrahi and D. Gatica-Perez, ``Discovering human routines from cell phone data with topic models,'' \emph {The 12th IEEE International Symposium on Wearable Computers},  pp. 29-32, 2008.
\bibitem{JLJLH10}
B.S. Jensen, J.E. Larsen, K.   Jensen,  J. Larsen and L. K. Hansen, ``Estimating human predictability from mobile sensor data'', \emph{2010 IEEE International Workshop on Machine Learning for Signal Processing (MLSP)}, pp. 196 - 201, Sept. 2010.
\bibitem{K09}
R. Kwok, ``Personal technology: phoning in data'', \emph{Nature}, vol. 458, pp. 959-961, 2009.
\bibitem{SQBB10}
C. Song, Z. Qu, N. Blumm, and A. Barabas, ``Limits of predictability in human mobility,'' \emph{Science}, vol. 327, pp. 1018-1021, Feb. 2010.
\bibitem{RRDTool}
http://oss.oetiker.ch/rrdtool/gallery/index.en.html

\bibitem{CF1}
Zan Huang, D. Zeng and H. Chen, ``A Comparison of Collaborative-Filtering Recommendation Algorithms for E-commerce,'' \emph{IEEE Intelligent Systems}, vol.22, no.5, pp.68-78, Sept.-Oct. 2007.
\bibitem{CF6}
G. Adomavicius and A. Tuzhilin, ``Toward the next generation of recommender systems: a survey of the state-of-the-art and possible extensions,'' \emph {IEEE Transactions on Knowledge and Data Engineering,} vol. 17, no.6, pp. 734- 749, June 2005.


\bibitem{BI05}
Y. Bai, and M.R. Ito, ``Proactive resource allocation schemes,'' \emph{IEEE International Conference on Communications, 2005. ICC 2005} , vol.1, no., pp. 53- 58 Vol. 1, 16-20 May 2005.


\bibitem{GH04}
M.J. O'Grady, and G.M.P. O'Hare, ``Just in time multimedia distribution in a mobile computing environment,'' \emph{IEEE MultiMedia}, vol.11, no.4, pp.62,74, Oct.-Dec. 2004.


\bibitem{BWZZ12}
Y. Bao, X. Wang, S. Zhou, and Zhisheng Niu, ``An energy-efficient client pre-caching scheme with wireless multicast for video-on-demand services,'' \emph{2012 18th Asia-Pacific Conference on Communications (APCC)}, vol., no., pp.566,571, 15-17 Oct. 2012.










\bibitem{ISIT13}
J. Tadrous, A. Eryilmaz, and H. El Gamal, ``Proactive content distribution for dynamic content'', \emph{2013 IEEE International Symposium on Information Theory (ISIT)}, vol., no., pp.1232,1236, July 2013.

\bibitem {Boyd}
 S. Boyd and L. Vandenberge, ``Convex Optimization,'' \emph{Cambridge University Press}, 2004.
 
\bibitem{mvt}
A. Di Crescenzo,``A probabilistic analogue of the mean value theorem and its applications to reliability theory,'' \emph{Journal of Applied Probability}, vol. 36, no. 3, pp. 706-719, Sept. 1999.
 
 \bibitem{IT}
T. Cover and J. Thomas, ``Elements of information theory,'' \emph{Wiley $-$ Interscience}, 2006.

\bibitem {Old}
B. Marks and G. Wright, ``A general inner approximation algorithm for nonconvex mathematical programs," \emph{Operations Research}, vol. 26, Issue 4, pp. 681-683, 1978.


\bibitem{TR}
J. Tadrous, A. Eryilmaz and H. El Gamal, ``Technical Report: Proactive data download and demand shaping,''  \emph{Technical report},\\ {www2.ece.ohio-state.edu/\texttildelow tadrousj/ProactiveTechReport.pdf}

\end{thebibliography}
\end{document}